\newcommand{\tr}{\textnormal{Tr}}
\newcommand{\sC}{\mathscr{C}}
\newcommand{\sH}{\mathscr{H}}
\newcommand{\sD}{\mathscr{D}}
\newcommand{\id}{\ensuremath{\mathds{1}}}
\newcommand{\cB}{\mathcal{B}}
\newcommand{\cC}{\mathcal{C}}
\newcommand{\cD}{\mathcal{D}}
\newcommand{\cG}{\mathcal{G}}
\newcommand{\cH}{\mathcal{H}}
\newcommand{\cK}{\mathcal{K}}
\newcommand{\cL}{\mathcal{L}}
\newcommand{\cM}{\mathcal{M}}
\newcommand{\cP}{\mathcal{P}}
\newcommand{\cQ}{\mathcal{Q}}
\newcommand{\cS}{\mathcal{S}}
\newcommand{\cT}{\mathcal{T}}
\newcommand{\cV}{\mathcal{V}}
\newcommand{\cW}{\mathcal{W}}
\newcommand{\cX}{\mathcal{X}}
\def\one{\id}
\def\bx{{\bf x}}
\def\by{{\bf y}}
\newcommand\C{\mathbbm{C}}
\newcommand\R{\mathbbm{R}}
\newcommand\N{\mathbbm{N}}
\theoremstyle{plain}
\newtheorem{thm}{Theorem}[section]
\newtheorem{lem}[thm]{Lemma}
\newtheorem{defn}[thm]{Definition}
\theoremstyle{definition}
\newtheorem{example}{Example}[section]
\newtheorem{rem}{Remark}[section]
\renewcommand{\leq}{\leqslant}
\renewcommand{\geq}{\geqslant}
\begin{document}

\title{Boundary-driven quantum systems near the Zeno limit: steady states and long-time behavior}

\author{Eric A. Carlen\footnote{Department of Mathematics, Rutgers University, Piscataway, NJ 08854-8019 USA}, \ 
David A. Huse\footnote{Department of Physics, Princeton University, Princeton, NJ 08544, USA}\ \  and\   Joel L. Lebowitz
\footnote{Departments of Mathematics and Physics, Rutgers University, Piscataway, NJ 08854-8019 USA} 
}

\maketitle

\begin{abstract}
We study composite open quantum systems with a finite-dimensional state space $\cH_{AB} = \cH_A\otimes \cH_B$ 
governed by a Lindblad equation 
$\displaystyle{\frac{{\rm d}}{{\rm d}t}\rho(t) = \cL_\gamma \rho(t)}$ 
where $\cL_\gamma\rho = -i[H,\rho] + \gamma \cD \rho$.
Here,  $H$ is a Hamiltonian on $\cH_{AB}$ while $\cD$ is a dissipator $\cD_A\otimes \one$  acting non-trivially only on part $A$ of the system, which can be thought of as the boundary, and $\gamma$ is a parameter.   It is known that the dynamics may  simplify as the Zeno limit, $\gamma \to \infty$, is approached, so that after a initial time of order 
$\gamma^{-1}$, $\rho(t)$ is well approximated by $\pi_A\otimes R(t)$ where $\pi_A$ is a density matrix on $\cH_A$ such that $\cD_A\pi_A =0$, and $R(t)$  is an approximate solution of 
${\displaystyle \frac{{\rm d}}{{\rm d}t}R(t) = \cL_{P,\gamma}R(t)}$ 
where $\cL_{P,\gamma} R :=  -i[H_P,R] + \gamma^{-1} \cD_P R$ 
with $H_P$ being a Hamiltonian on $\cH_B$ and $\cD_P$ being a Lindblad generator acting on density matrices on $\cH_B$.  We give a rigorous proof of this holding in greater generality than in previous work; we assume only that $\cD_A$ is ergodic and gapped. 
Moreover, we precisely control the error terms, and use this to show
that the mixing times of $\cL_\gamma$ and $\cL_{P,\gamma}$ are tightly 
related near the Zeno limit. Despite this connection, 
the errors in the approximate
description of the evolution accumulate on times of order $\gamma^2$,
so it is difficult to directly access steady states $\bar\rho_\gamma$
of $\cL_\gamma$ 
through study of $\cL_{P,\gamma}$. In order to better control the long
time behavior, and in particular the steady states $\bar\rho_\gamma$, we introduce a third Lindblad generator
$\cD_P^\sharp$ that does not involve $\gamma$, but is still closely
related to $\cL_\gamma$ and $\cL_{P,\gamma}$. We show that if 
$\cD_P^\sharp$ is ergodic and gapped, then so are $\cL_\gamma$ and
$\cL_{P,\gamma}$ 
for all large $\gamma$, and in this case, if $\bar\rho_\gamma$ denotes the
unique steady state for $\cL_\gamma$, then 
$\lim_{\gamma\to\infty}\bar\rho_\gamma = \pi_A\otimes \bar R$ where $\bar R$ is the unique steady state for
$\cD_P^\sharp$. We further show that there is a trace norm convergent expansion 
${\displaystyle
\bar\rho_\gamma = \pi_A\otimes\bar R +\gamma^{-1} \sum_{k=0}^\infty \gamma^{-k} \bar n_k}$
where, defining $\bar n_{-1} :=  \pi_A\otimes\bar R$, $\cD \bar n_k = -i[H,\bar n_{k-1}]$ for all $k\geq 0$.   
Using properties of $\cD_P$ and $\cD_P^\sharp$, we show that this system of equations has a unique solution, 
and prove convergence. This is illustrated in a simple example for which one can compute $\bar\rho_\gamma$, and can carry out the expansion explicitly.

\end{abstract}

\tableofcontents

\section{Introduction}

In this paper we describe new exact results about the time evolution and steady states of a quantum system strongly coupled to a Markovian external reservoir.  The reservoir acts only on part of the system, e.g. on its surface.  More generally we consider a system consisting of two parts, $A$ and $B$ with finite-dimensional Hilbert space $\cH_{AB} = \cH_A\otimes\cH_B$, whose density matrix $\rho$ evolves according to the Lindblad equation
\begin{equation}\label{LindEq11}
\frac{{\rm d}}{{\rm d}t}\rho(t) = \cK\rho + \gamma \cD \rho(t) =: \cL_\gamma \rho(t)~.
\end{equation}
Here,
\begin{equation}\label{unitary}
\cK\rho := -i[H,\rho] 
\end{equation}
 describes the unitary evolution of the isolated system due to its time-independent Hamiltonian $H$ on $\cH_{AB}$, while $\cD$ is a dissipative Lindblad  generator of the form
\begin{equation*}
\cD := \cD_A\otimes \one_B~,
\end{equation*}
where $\cD_A$ acts only on $\cH_A$, here $\one_B$ is the identity super-operator on $\cH_B$, and $\gamma$ is a coupling constant.  
By Lindblad's Theorem \cite{Lin76}, a super-operator $\cD_A$ is such that $e^{t\cD_A}$ is a completely positive trace preserving (CPTP) operation if and only if $\cD_A$ can be written in the form
\begin{equation*}
\cD_A\rho = \sum_{j=1}^r \left( 2 L_j\rho L_j^\dagger - L_j^\dagger L_j\rho -
\rho L_j^\dagger L_j \right)- i[K_L,\rho]
\end{equation*}
for some operators $L_1,\dots,L_r$ on $\cH_A$, often called {\em jump operators} and some self-adjoint operator $K_L$ on $\cH_A$. We refer to such super-operators as {\em Lindblad generators}. 

We always  assume that 
 $\cD_A$ is ergodic so that there is a unique density matrix $\pi_A$ on $\cH_A$ satisfying $\cD_A\pi_A = 0$.
We also assume that $\cD_A$ is {\em gapped}, meaning that the real parts of all of its nonzero eigenvalues $\lambda_j$ satisfy $\Re(\lambda_j) \leq -a$, $a>0$.
Since $\cD_A$ is a Lindblad generator, the two assumptions together are equivalent to $0$ being a non-degenerate eigenvalue of $\cD_A$, and $\cD_A$ not having any eigenvalues that are purely imaginary.  

Without loss of generality, a constant has been added to the Hamiltonian so that $\tr H=0$~, $H=H_A\otimes\one_B+H_{AB}+\one_A\otimes H_B$~; here  $\one_A$ and $\one_B$ are the identity operators on those subspaces, $H_A=(1/d_B)\tr_B H$, $H_B=(1/d_A)\tr_A H$, and $d_A$ and $d_B$ are the dimensions of $\cH_A$ and $\cH_B$.

In a recent paper \cite[Theorem 1]{CHL}, we proved that for such systems, if $\bar \rho$ is a steady state solution of \eqref{LindEq11} and $[\bar\rho, H] =0$, then necessarily $\bar \rho$ is a product state of the form 
$\bar\rho = \pi_A \otimes\pi_B$ for some density matrix $\pi_B$ on $\cH_B$. It then follows \cite[Corollary 2]{CHL} that a steady state $\bar \rho$ can be a Gibbs state $\frac{1}{Z_\beta } e^{-\beta H}$ for the system Hamiltonian
only if $\beta H_{AB}= 0$.  In particular, when the system Hamiltonian does couple the two parts of the system so $H_{AB}\neq 0$, no steady state of \eqref{LindEq11} will be a Gibbs state with $\beta\neq 0$.  The question then arises: What is the nature of the steady-state solutions of \eqref{LindEq11}, and what can be said about the approach to them in the limit $t\to\infty$?

We study  these questions for large $\gamma$.  The limit  $\gamma\rightarrow\infty$ is known as the Zeno limit.  In this limit, there is a reduced description of the dynamics that has been investigated in \cite{PEPS,PP21,PPS25,ZC}.

  Because $\cD_A$ is ergodic and gapped, every density matrix $\rho$ on $\cH_{AB}$ satisfying $\cD \rho=0$ has the form
$\rho = \pi_A\otimes R$ for some density matrix $R$ on $\cH_B$.   Define $\cM$, the {\em steady-state manifold}, by
\begin{equation}\label{SSM}
\cM := \{\pi_A\otimes R\ :\  R\ {\rm is \ a \ density\ matrix\ on}\ \cH_B\}\ .
\end{equation}
Moreover, for all density matrices $\rho_0$ on  $\cH_{AB}$, ${\displaystyle   \lim_{t\to\infty}e^{t\cD}\rho_0}$ exists and belongs to $\cM$.  The super-operator
\begin{equation}\label{QCP}
\cP := \lim_{t\to\infty}e^{t\cD}\ 
\end{equation}
is a projection, though not necessarily an orthogonal projection. Its range consists of all  operators on $\cH_{AB}$ of the form $\pi_A\otimes Y$ for some operator $Y$ on $\cH_B$.  

Under these assumptions, it was shown in \cite{ZC}  that after a relaxation time $t_\gamma$ of order $\gamma^{-1}$, $\rho(t) = e^{t\cL_\gamma}\rho_0$
has the form
\begin{equation}\label{INTR1}
\rho(t) = \pi_A\otimes R(t) + \mathcal{O}(\gamma^{-1})\   
\end{equation}
uniformly in $t \geq t_\gamma$~,
where $R(t) = \tr_A \rho(t)$ is a density matrix on $\cH_B$.  That is, solutions of \eqref{LindEq11} for a general initial state $\rho_0$ and large $\gamma$ rapidly approach $\cM$ and then stay 
close  to $\cM$, uniformly in time.  In Section~\ref{PROJMO} we give simple  proofs of results of this genre in which ``closeness'' is measured in terms of the trace norm, denoted by $\|\cdot \|_1$; see Section~\ref{PRELIM}. 
In particular, we prove in Theorem~\ref{TZCVS} that  for all $\rho_0 = \pi_A \otimes R_0 \in \cM$, there is a constant $C>0$ independent of $\gamma$ such that for all $t>0$, 
\begin{equation}\label{INTR1C}
\| e^{t \cL_\gamma}\pi_A\otimes R_0  - \cP e^{t \cL_{\gamma}} \pi_A\otimes R_0\|_1 \leq \frac{C}{\gamma}\ ,
\end{equation}
which quantifies the sense in which solutions of \eqref{LindEq11} with initial data in $\cM$ stay close to $\cM$. A result of this type, in a more general framework but with a more complicated proof may be found in \cite{ZC}.

To deal with general initial data, we must take the initial relaxation time into account. 
Define  $t_\gamma := \frac{2\log(\gamma)}{a\gamma}$ where $a$ is the spectral gap of $\cD_A$.
We prove in Theorem~\ref{EULLIM}  that there is a constant $C>0$ independent of $\gamma$ such that 
for all $t \geq t_\gamma$ and all $\rho_0$, 
\begin{equation}\label{INTR1B}
\| e^{t \cL_\gamma}\rho_0 - e^{t \cL_{\gamma}} \cP \rho_0\|_1 \leq \frac{\log(1+\gamma)}{\gamma}C\ .
\end{equation}
Note that in \eqref{INTR1B}, the projector $\cP$ is applied before $e^{t\cL_\gamma}$ while in \eqref{INTR1C}, $\cP$ is applied after $e^{t\cL_\gamma}$.
These complementary results from Section~\ref{PROJMO} are important for the study of approximate equations of motion, to which we now turn.

It was shown in \cite{ZC} that, after the initial relaxation time, $R(t)$ satisfies
\begin{equation}\label{INTR2}
\frac{{\rm d}}{{\rm d}t}R(t) = 
- i[H_P,R(t)] +{\mathcal O}(\gamma^{-1})\ ,
\end{equation}
where
\begin{equation}\label{INTR3}
H_P = \tr_A[(\pi_A\otimes \one_B) H]
\end{equation}
is a self-adjoint operator called the {\em projected Hamiltonian}. The coherent evolution equation obtained by neglecting the ${\mathcal O}(\gamma^{-1})$ term in  \eqref{INTR2} cannot be expected to give an accurate
description of the evolution of $R(t)$ on long time scales of order
$\gamma$ since by then the error terms may accumulate to an effect of order one, and we expect to see dissipation on long times scales. 

A description of the evolution of $R(t)$  on time scales of order $\gamma$ was obtained in \cite{PEPS}.  The approximate equation obtained there is
\begin{equation}\label{INTR4}
\frac{{\rm d}}{{\rm d}t}R(t) = 
- i[H_P,R(t)] - \gamma^{-1}\tr_A[\cK \cS \cK(\pi_A\otimes R(t))] 
+{\mathcal O}(\gamma^{-2})
\end{equation}
where $\cS$ denotes a certain generalized inverse of $\cD$ that will be precisely defined in Section~\ref{PRELIM} below.  Define the super-operators $\cK_P$ and $\cD_P$  acting on operators on $\cH_B$ by
\begin{equation}\label{KPDPDEF}
\cK_P R := -i[H_P,R] \quad{\rm and}\quad  \cD_P R  := -\tr_A[\cK \cS \cK(\pi_A\otimes R]\ .
\end{equation}
These operators are studied in Section~\ref{SUPKPDP}.
It is evident that $\cK_P$ generates a CPTP evolution. 
Under  additional assumptions, the most important of which is a certain positivity condition, it was  shown in \cite{PEPS} that $\cD_P$ is a Lindblad generator. 
In this case, $\cL_{P,\gamma}$  defined by
\begin{equation}\label{INTR6}
\cL_{P,\gamma} := \cK_P + \gamma^{-1}\cD_P\ 
\end{equation}
is a Lindblad generator, and hence $e^{t\cL_{P,\gamma}}$ is also CPTP for all $t\geq 0$.

We show here that under only  the assumption that $\cD_A$ is ergodic 
and gapped, the positivity condition required in \cite{PEPS} is 
always satisfied, so that $\cD_P$  is always a Lindblad generator; see Theorem~\ref{ALWAYS}. Thus we 
may re-write \eqref{INTR4} as
\begin{equation}\label{INTR7}
\frac{{\rm d}}{{\rm d}t}R(t) = 
\cL_{P,\gamma}R(t) +{\mathcal O}(\gamma^{-2})\ ,
\end{equation}
and the equation ${\displaystyle \frac{{\rm d}}{{\rm d}t}\widetilde{R}(t) =  \cL_{P,\gamma}\widetilde{R}(t)}$ describes a CPTP evolution that closely tracks the projection onto $\cM$ of the evolution generated by $\cL_\gamma$.  We prove the following precise expression of this: Let $R_0$ be any density matrix on $\cH_B$ and define  $R(t) := \tr_A  e^{t\cL_\gamma}\pi_A\otimes R_0$ so that $\cP (e^{t\cL_\gamma}\pi_A\otimes R_0) = \pi_A\otimes R(t)$.  
Making essential use of the fact that $\cL_\gamma$ generates a CPTP evolution, we  prove in  Theorem~\ref{COHERENTSC}  that  there exists a constant $C$ depending only on $\cD_A$  and  the operator norm of $H$ such for all $T>0$, 
\begin{equation}\label{COHERENTSC1X}
 \| R(t) - e^{t\cL_{P,\gamma}}R_0\|_1 \leq  \frac{1}{\gamma^2} CT  \ 
 \end{equation}
for all $0 \leq t \leq T$.  

The generator in \eqref{INTR6} contains only the first order corrections to the coherent evolution. At the end of Section~\ref{APPROXEQ}, we derive the explicit form of the next order corrections, yielding the approximate equation
\begin{equation}\label{INTR51}
\frac{{\rm d}}{{\rm d}t} R = \cK_P R(t) +\gamma^{-1} \cD_P R(t) + \gamma^{-2}\cB_P R(t)\ .
\end{equation}
(See equation \eqref{BURDEF} for the explicit form of $\cB_P$). However, a simple 2 qubit example (See Example~\ref{EXAMP1})  shows that it can be the case that \eqref{INTR51} does not describe a CPTP evolution for any $\gamma>0$. 
That is, $\cK_P  +\gamma^{-1} \cD_P  + \gamma^{-2}\cB_P$ cannot be put in 
Lindblad form for any $\gamma>0$. 

We are particularly interested in stationary states of 
\eqref{LindEq11}; that is 
density matrices $\bar\rho_\gamma $ on $\cH_{AB}$ that satisfy
$\cL_\gamma \bar \rho_\gamma =0$.  These may be realized in the 
long time limit: $\bar\rho_\gamma=\lim_{t\to\infty}\rho(t)$, 
when $\cL_\gamma$ has no purely imaginary eigenvalues, and always by
${\displaystyle \bar\rho_\gamma = \lim_{T\to\infty}T^{-1}\int_0^T\rho(t){\rm d}t}$.

We may apply \eqref{COHERENTSC1X} to study the long time behavior of solutions of \eqref{LindEq11} on account of \eqref{INTR1C} and \eqref{INTR1B}: 
Let $\rho_0$ be any density matrix on $\cH_{AB}$, and define $R_0 := \tr_A[\rho_0]$ so that
$\cP \rho_0 = \pi_A\otimes R_0$.  Then by the triangle inequality, 
\begin{eqnarray*}
\| e^{t\cL_\gamma}\rho_0 - \pi_A\otimes e^{t \cL_{P,\gamma}}R_0\|_1 &\leq& \| e^{t\cL_\gamma}\rho_0 - e^{t \cL_{\gamma}}\cP\rho_0\|_1\\
  &+& \|  e^{t \cL_{\gamma}}\pi_A\otimes R_0 -  \cP(e^{t \cL_{\gamma}}\pi_A\otimes R_0)\|_1 \\
   &+& \|  \cP(e^{t \cL_{\gamma}}\pi_A\otimes R_0) - \pi_A\otimes e^{t \cL_{P,\gamma}}R_0\|_1 \ .
\end{eqnarray*}
Using  \eqref{INTR1B},  \eqref{INTR1C} and \eqref{COHERENTSC1X} to bound the three terms on the right leads to the proof of Theorem~\ref{MTILRM} which gives a tight relation between the evolutions governed by $\cL_\gamma$ and  by $\cL_{P,\gamma}$. 
\begin{equation}\label{BTM4INTR}
 \| e^{t\cL_\gamma}\rho_0 - \pi_A\otimes e^{t \cL_{P,\gamma}}R_0\|_1 \leq C \frac{\log(1+\gamma) +1+T}{\gamma}  ,
\end{equation}
uniformly on $[\epsilon, \gamma T]$ for any $0 < \epsilon < T$ for all $\gamma$ such that $ \frac{2\log(\gamma)}{a\gamma} \leq  \epsilon$. 
In particular, for all $t>0$,
\begin{equation}\label{BTM9}
\lim_{\gamma\to\infty} \| e^{t\cL_\gamma}\rho_0 - \pi_A\otimes e^{t \cL_{P,\gamma}}R_0\|_1 = 0\ .
\end{equation}
Then since for large $\gamma$, $\cL_{P,\gamma}$ does not differ much from $\cK_P$, it  follows that (see Theorem~\ref{MTILRMEUL})
\begin{equation}\label{BTMB}
\lim_{\gamma\to\infty}  e^{t\cL_\gamma}\rho_0 = \pi_A\otimes e^{t \cK_P} R_0\
\end{equation}
where the convergence is uniform on $\epsilon \leq t \leq T$ for any fixed $0 < \epsilon < T$. 
In other words, in the Zeno limit, {\em all of the dissipation takes place instantly} during the initial passage from $\rho_0$ to $\cP \rho_0 = \pi_A\otimes R_0$, and from then on the evolution is coherent.

The situation near, but not at, the Zeno limit is 
more interesting, and then there is dissipation on time scales of order $\gamma$, and the more precise bound  \eqref{BTM4INTR} can be used to relate the rates of approach to stationarity
for $\cL_\gamma$ and $\cL_{P,\gamma}$. However, the analysis of  both of these
Lindblad generators is complicated because  in both the  coherent and dissipative effects operate on different times scales.  For instance, with  $\cL_{P,\gamma}$ the coherent evolution due to $\cK_P$ acts on time scales of order one, while the dissipative evolution due to $\cD_P$ acts on time scale $\mathcal{O}(\gamma)$.  

To overcome this we introduce a new Lindblad generator $\cD_P^\sharp$,  which is independent of $\gamma$.
The new Lindblad generator 
 $\cD_P^\sharp$  is derived from  $\cD_P$ and $\cK_P$ through an averaging
operation introduced by Davies \cite{Da74} in his derivation of Lindblad
evolution equations in the weak coupling limit. 
The $\sharp$ notation is Davies'. More specifically, for 
any super-operator $\cT$ on $\cH_B$, define
\begin{equation}\label{INTR11}
\cT^\sharp(X) = \lim_{T\to\infty}\frac{1}{2T}
\int_{-T}^T e^{-t\cK_P}\cT(
e^{t\cK_P}X){\rm d}t\ .
\end{equation}
This limit always exists, and the rate at which convergence 
takes place in \eqref{INTR11} can be bounded in terms of the
spectral properties $H_P$.  Davies' averaging operation is studied in Section~\ref{DAOP},
 and we apply these results in Section~\ref{InteractionPicture}. 

In Section~\ref{InteractionPicture} we study $e^{-t\cK_P} e^{t\cL_{P,\gamma}}R_0$  in the Zeno limit, $\gamma \to \infty$, for general initial data $R_0$.  We run the initial data 
forward under 
the dissipative time evolution governed by $\cL_{P,\gamma}$~, and then backwards under the coherent evolution generated by $\cK_P$~.  In the language of scattering theory, we are passing to 
the ``interaction picture'' to compare the coherent evolution and its perturbation by a weak dissipation.  This is what Davies \cite{Da74} did in his derivation of Lindblad equations  in the weak coupling limit, 
and hence it is no surprise that his averaging operation introduced there shows up again here.  

Since
 $\cK_P$ and $\cL_{P,\gamma}$ differ by a term of order $\gamma^{-1}$, 
 we expect $e^{-t\cK_P} e^{t\cL_{P,\gamma}}$ to differ significantly
 from the identity only for times of order $\gamma$. It is therefore
 natural to introduce the rescaled time $\tau$ given by 
\begin{equation}\label{RESCALE}
t = \gamma\tau\ .
\end{equation}
 We shall then show that
 \begin{equation}\label{DALIMP8}
\lim_{\gamma\to\infty} e^{-\gamma\tau \cK_P} e^{\gamma\tau\cL_{P,\gamma}} = e^{\tau \cD_P^\sharp}\ .
 \end{equation}
 The rescaling of time \eqref{RESCALE} is different than in Davies' work because the Zeno limit is not the weak coupling limit, but as in 
 \cite{Da74} we use the elementary theory of perturbation of Volterra integral equations to 
compare the evolutions. 

The precise version of \eqref{DALIMP8}, Theorem~\ref{PROJMOZLTH},   says that
when $\cD_A$ is ergodic and gapped,  then for all $T>0$ there is a constant $C$ independent of $T$ and $\gamma$ such that for all density matrices $R_0$ on $\cH_B$, 
\begin{equation}\label{PROJMOZLTH10INTR} 
\| e^{-\gamma\tau \cK_P} e^{\gamma\tau   \cL_{P,\gamma}}R_0 - e^{\tau \cD_P^\sharp}R_0\|_{1} \leq \frac{1}{\gamma} CT e_{\phantom{0}}^{CT} 
\end{equation}
for all $0 \leq \tau \leq T$.  

This does not directly refer to the evolution governed by $\cL_\gamma$, but we
use this and  \eqref{BTM4INTR} to prove Theorem~\ref{PROJMOZLTHA}  which says that there are finite constants $C_0$ and $C_1$ independent of $\gamma$ such that for all $T>0$, and all density matrices $R_0$ on $\cH_B$, 
\begin{equation}\label{PROJMOZLTHA10INTR} 
\| e^{-\tau \gamma \cK_P} \tr_A e^{\gamma\tau   \cL_{\gamma}}\pi_A\otimes R_0   - e^{\tau \cD_P^\sharp}R_0\|_{1} \leq \frac{1}{\gamma} C_0T e_{\phantom{0}}^{C_1T} \ .
\end{equation}

The results \eqref{PROJMOZLTH10INTR} and \eqref{PROJMOZLTHA10INTR}  allow us to study the evolutions governed by $\cL_\gamma$ and $\cL_{P,\gamma}$ in terms of the evolution governed by $\cD_P^\sharp$. 
In particular, we use \eqref{PROJMOZLTH10INTR} and \eqref{PROJMOZLTHA10INTR} to show that if $\cD_P^\sharp$ is ergodic and gapped, then so are $\cL_\gamma$ and $\cL_{P,\gamma}$ for all sufficiently large $\gamma$, and moreover, we prove a tight relation among the rates of approach to stationarity for all three generators. 

\medskip

This is best done in terms of the {\em mixing time} which plays an important role in several recent investigations of open quantum systems \cite{FLT25,TKRWV,TZ25,V25}.  We recall the relevant  definitions:

For any two density matrices $\rho_0$, $\rho_1$ on 
$\cH$, define their {\em total variation distance} 
\begin{equation}\label{TVDDEF}
d_{{\rm TV}}(\rho_0,\rho_1) = \frac12 \|\rho_0- \rho_1\|_1\ ,
\end{equation} 
and hence $d_{{\rm TV}}(\rho_0,\rho_1) \leq 1$ 
with equality if and only if $\rho_0$ and $\rho_1$ are supported on mutually orthogonal subspaces of $\cH$. 

Let $\cL$ be a Lindblad generator acting on operators on $\cH$ a finite-dimensional Hilbert space, and let $0 < \epsilon < \tfrac12$.  
The $\epsilon$-{\em mixing time} of $\cL$ is defined here by
\begin{equation}\label{MIXTIMEDEF}
t_{{\rm mix}}(\cL,\epsilon) := \inf\left\{ \ t > 0\ :\ 
d_{{\rm TV}}(e^{t\cL}\rho_0,e^{t\cL}\rho_1) <  \epsilon\ \quad{\rm for\ all\ density\ matrices}\ \rho_0\ ,\ \rho_1 \right\}~,
\end{equation}
with the convention that $t_{{\rm mix}}(\cL,\epsilon) =\infty$ if there is no finite time $t$ such that  
$d_{{\rm TV}}(e^{t\cL}\rho_0,e^{t\cL}\rho_1) <  \epsilon$ for all 
pairs of density matrices $\rho_0,\rho_1$. 
It is known that if $t_{{\rm mix}}(\cL,\epsilon) <\infty$ for some $0 < \epsilon < \tfrac12$, then this is true for all such $\epsilon$; see 
Theorem~\ref{TMIXK}.

This definition does not
require $\cL$ to be ergodic, but $t_{{\rm mix}}(\cL,\tfrac14)<\infty$  if 
and only if $\cL$ is ergodic. We will use comparison of mixing times
to prove ergodicity, hence this definition, which does not refer to a
putative unique steady state $\pi$, is the one adapted to our applications. 
  In our finite dimensional setting, a spectral gap implies  finite mixing times, but the mixing times do not depend on the spectral gap $a$ and $\epsilon$  alone . For these facts and more on mixing times, see Appendix~\ref{MTAPP}.

Returning to $\cL_\gamma$, $\cL_{P,\gamma}$ and $\cD_P^\sharp$, we prove in 
Theorem~\ref{TMIXZL3} that for all but at most countably many values of $\epsilon_0\in (0,\tfrac12)$,
$$
\lim_{\gamma\to\infty} \frac{t_{{\rm mix}}(\cL_{\gamma},\epsilon_0)}{\gamma} = 
\lim_{\gamma\to\infty} \frac{t_{{\rm mix}}(\cL_{P,\gamma},\epsilon_0)}{\gamma} = t_{{\rm mix}}(\cD_P^\sharp,\epsilon_0)\ .
$$
This has the consequence that if $\cD_P^\sharp$ is ergodic and gapped, then so are $\cL_\gamma$ and $\cL_{P,\gamma}$ for all sufficiently large $\gamma$. See Theorem~\ref{TMIXZL3}, which complements spectral results of \cite{PP21}, for a precise statement. 

 This  shows one way in which $\cD_P^\sharp$ governs the long-time behavior of the evolution described by  both
$\cL_{P,\gamma}$~, and  by $\cL_\gamma$~, near the Zeno limit. For another, suppose  that $\cD_P^\sharp$ is ergodic and gapped, and let $\bar R$ be the unique density matrix on $\cH_B$ such that $\cD_P^\sharp \bar R =0$. Let $\bar\rho_\gamma$ be the unique (for $\gamma$ sufficiently large)
density matrix on $\cH_{AB}$ satisfying $\cL_\gamma \bar\rho_\gamma =0$. Then we prove that 
\begin{equation}\label{INTR8}
\lim_{\gamma\to \infty}\bar\rho_\gamma = \pi_A\otimes \bar R\ .
\end{equation}
Note the difference with \eqref{BTM9}: here we first let $t\to\infty$
to obtain $\bar\rho_\gamma$ and then take $\gamma\to \infty$, rather than
the other way around. By \eqref{BTMB}, if we first take 
$\gamma\to \infty$, after an initial dissipative projection onto $\cM$, 
the motion is coherent, and 
$\lim_{\gamma\to \infty}e^{t\cL_\gamma \rho_0}
= \pi_A\otimes e^{t\cK_P}\tr_A\rho_0$ has no limit as $t\to \infty$.

In Section~\ref{STSTEXP} we turn to the computation of non-equilibrium  steady states (NESS) for
\eqref{LindEq11}. Even in the general setting of open quantum
systems, boundary driven or not, there are few cases in which NESS can be computed for all but the smallest of systems. Extremely valuable examples in which this has be done are given in the work of Prosen \cite{P11A,P11B} and Popkov and Prosen \cite{PP15}. However, in most cases, to understand the natures of currents in NESS, one must use perturbation theory. 

There is a large literature on the perturbative approach; see \cite[Section III.D]{LPS02}. The simplest cases concern a Lindblad equation of the type
\begin{equation}\label{perturb}
\frac{{\rm d}}{{\rm d}t}\rho(t) = (\cL_0 + \epsilon \cL_1)\rho(t)
\end{equation}
in which $\cL_0$ is ergodic so that there is a unique density matrix
$\bar\rho_0$ satisfying $\cL_0\bar \rho_0 = 0$. One then {\em assumes} 
that for $\epsilon$ sufficiently small, there is a unique density matrix
$\bar\rho_\epsilon$ satisfying $(\cL_0 + \epsilon
\cL_1)\bar\rho_\epsilon =0$ and moreover (the key assumption), that  $\bar\rho_\epsilon$ has an 
expansion 
\begin{equation}\label{INTR12A}
\bar\rho_\gamma = \bar\rho_0 + 
\epsilon\sum_{k=0}^\infty \epsilon^{k} \bar n_k\ .
\end{equation}
Inserting this expansion into the equation \eqref{perturb}, and equating terms with like powers of $\epsilon$ leads to the system of equations
\begin{equation}\label{INTR12B}
\cL_0 \bar n_0 = -\cL_1\rho_0 \quad{\rm and, \ for}\  k\geq 1,\quad 
\cL_0 \bar n_k = \cL_1 \bar n_{k-1}\ .
\end{equation}
When $\cL_0$ is ergodic, its range consists of all traceless operators, and hence if $\cL_1$ is also a Lindblad generator, so that everything in its range is traceless, each of the equations in \eqref{INTR12B}
is solvable, and there will be a unique traceless solution. If one can
then prove that the series converges in the trace norm, and can prove that $\cL_0+\epsilon\cL_1$ is ergodic for all sufficiently small $\epsilon$, one will have justified the expansion \eqref{INTR12A}, from which $\bar\rho_\epsilon$ can now be confidently computed. 

Our problem is more complex. Here, $\cD$ plays the role of $\cL_0$ and it is far from ergodic. Therefore, there is no obvious candidate for $\bar\rho_0$. A related problem arises in \cite{LP15} for a boundary driven spin chain model (but not near the Zeno limit). In that case, a conjectured ansatz \cite[eq. (6)]{LP15}  based on exact calculation for short chains provides starting point $\rho_0$, but there is no proof that is the correct starting point of the expansion. 

In our setting, the Lindblad generator $\cD_P^\sharp$ provides an equation for $\bar\rho_0$, at least when $\cD_P^\sharp$ is ergodic and gapped.  Then by \eqref{INTR8}, $\lim_{\gamma\to\infty}\bar\rho_\gamma = \pi_A\otimes \bar R$, where $\cD_P^\sharp \bar R =0$,
and thus we take $\pi_A\otimes \bar R$ as the starting point of an  expansion 

\begin{equation*}
\bar\rho_\gamma = \pi_A \otimes \bar R + 
\gamma^{-1}\sum_{k=0}^\infty \gamma^{-k} \bar n_k \ .
\end{equation*} 
Now the equations to be solved are
\begin{equation}\label{INTR12BX}
\cD \bar n_0 = \cK(\pi_A\otimes \bar R)
\end{equation}
and then  for each $k\geq 1$, 
\begin{equation}\label{INTR12C}
\cD \bar n_k = \cK \bar n_{k-1}\ .
\end{equation}
On account of the large null space of $\cD$, and correspondingly small range, the solvability conditions are not so simple as they were for
\eqref{INTR12B}, and even when these equations are individually solvable, they will have infinitely many solutions. Nonetheless, under the conditions described above, we show the existence of a unique solution of the hierarchy \eqref{INTR12BX}
and \eqref{INTR12C}. Moreover, for finite constants $C_0$ and $C_1$, $\|\bar n_k\|_1 \leq C_0 C_1^k$  for all $k$ so that 
the series converges in trace norm for all  $\gamma > C_1$, and one can easily estimate the remainder if the sum is truncated.
These results are the content of Theorem~\ref{GOODSOLS} which presents them in a more explicit form.

The final section of the paper discusses an analogy with the theory of hydrodynamic limits that provides an interesting perspective on the relations between the evolution equations specified by $\cL_\gamma$, $\cK_P$, $\cL_{P,\gamma}$ and $\cD_P^\sharp$.

\section{Notation and some mathematical preliminaries}\label{PRELIM}

Let $\cH$ be any finite dimensional Hilbert space. Then $\widehat{\cH}$ will always denote  the Hilbert space consisting of operators on $\cH$ equipped with the Hilbert-Schmidt inner 
 product $\langle X,Y\rangle_{\widehat{\cH}}^{\phantom{x}} := \tr[X^*Y]$. For $X\in \widehat{\cH}$, $\|X\|_\infty$ denotes the {\em operator norm} of $X$ which is the largest singular value of $X$, 
 $\|X\|_1$ denotes the {trace norm} of $X$, which is the sum of the singular values of $X$, repeated according to their multiplicity.  The Hilbert-Schmidt norm of $X$ is denoted by $\|X\|_2$.

 Let $\cH$ and $\cK$ be two finite dimensional Hilbert spaces. Let $\cT$ be a linear operator from $\widehat{\cH}$ to $\widehat{\cK}$; that is, a super-operator from $\cH$ to $\cK$.  
 The adjoint operator from $\widehat{\cK}$ to $\widehat{\cH}$ (a super operator from $\cK$ to $\cH$) will always be denoted by $\cT^\dagger$.    

 Let $Y$ be an operator on $\cK$ and $X$ be an operator on $\cH$. 
 Then we may regard $Y$ and $X$ as vectors in the Hilbert spaces $\widehat{\cK}$ and 
 $\widehat{\cH}$ respectively, and define the 
 super-operator $|Y\rangle\langle X|$ from $\widehat{\cH}$ 
 to $\widehat{\cK}$ by
 \begin{equation}\label{VCTZ}
|Y\rangle\langle X|Z = \langle X,Z\rangle_{\widehat{\cH}}^{\phantom{x}} Y
 \end{equation}
 for all $Z$ in $\widehat{\cH}$.  In our analysis, this plays the 
 role of the ``vectorization'' discussed in \cite[Section III.B]{LPS02},
 but avoids identifying $\widehat{\cH}$ with $\cH\otimes \cH$ and 
 $\widehat{\cK}$ with $\cK\otimes \cK$ which requires a choice of basis.

 The symbol $\one$ will be used both for the identity operator on $\cH$ and the identity super-operator 
 on $\widehat{\cH}$; which is intended will always be clear from the context. 
 
 Define
 \begin{equation}\label{SOPOPNO1}
 \|\cT\|_{1\to 1} = \sup\{ \|\cT(X)\|_1\ : \|X\|_1 = 1\ \}\ .
 \end{equation}
 This is a norm on the space of super-operators from $\cH$ to $\cK$; in particular, it satisfies the triangle inequality.  We refer to it as the {\em super-operator trace norm}
For example, with $\cH = \cH_{AB}$, let $\cK$ be the super-operator defined by \eqref{unitary}. Then for all $X\in \widehat{\cH}_{AB}$,
$$\|\cK X\|_1 = \| HX -XH\|_1 \leq \|HX\|_1 + \|XH\|_1  \leq 2\|H\|_\infty\|X\|_1$$
where the last inequality is H\"older's inequality for traces. (See e.g. \cite[Section 6.3] {C25}.) Therefore,
\begin{equation}\label{KBND}
\|\cK\|_{1\to 1} \leq 2\|H\|_\infty\ .
\end{equation}

  It is known that if $\cT$ is completely positive and trace preserving (CPTP), then 
 $\|\cT\|_{1\to 1} =1$; see e.g. \cite[Theorem 6.26]{C25}. In other words, CPTP maps are contractive (non-increasing) in the trace norm, and this figures in the next lemma which will be used repeatedly in what follows. 
 
 \begin{lem}\label{COMPLEM} Let $\cH$ be a Hilbert space and let $\cL$ be a Lindblad generator acting on the trace class operator on $\cH$, so  that for all $t\geq 0$, $e^{t\cL}$ is CPTP. Let  $\rho(t)$ solve
 \begin{equation}\label{COMPLEM1}
 \frac{{\rm d}}{{\rm d}t}\rho(t) = \cL \rho(t) + F(t)
 \end{equation}
 with $\rho(0) = \rho_0$, and 
 where for all $t \leq T$, $\int_0^t \|F(s)\|_1{\rm d}s  \leq \epsilon$. Then
$\|\rho(t) - e^{t\cL}\rho_0\|_1 \leq \epsilon$ for all $t\leq T$. 
 \end{lem}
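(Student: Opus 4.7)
The plan is to use a standard variation-of-constants (Duhamel) representation for the inhomogeneous Lindblad evolution and then exploit the contractivity of $e^{s\cL}$ in the super-operator trace norm.

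First I would write the unique solution of the inhomogeneous linear ODE \eqref{COMPLEM1} with initial condition $\rho(0) = \rho_0$ in Duhamel form,
\begin{equation*}
\rho(t) = e^{t\cL}\rho_0 + \int_0^t e^{(t-s)\cL} F(s)\, {\rm d}s\ ,
\end{equation*}
which is valid because $\cL$ is a bounded linear operator on the finite-dimensional space $\widehat{\cH}$; one checks it directly by differentiating under the integral. Rearranging gives
\begin{equation*}
\rho(t) - e^{t\cL}\rho_0 \;=\; \int_0^t e^{(t-s)\cL} F(s)\, {\rm d}s\ .
\end{equation*}

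Next I would take trace norms, use the triangle inequality for Bochner integrals of $\widehat{\cH}$-valued functions, and apply the fact quoted in the paragraph just above the lemma: since $\cL$ is a Lindblad generator, $e^{(t-s)\cL}$ is CPTP for every $0\leq s \leq t$, so $\|e^{(t-s)\cL}\|_{1\to 1} = 1$ (citing \cite[Theorem 6.26]{C25}). Consequently
\begin{equation*}
\|\rho(t) - e^{t\cL}\rho_0\|_1 \;\leq\; \int_0^t \|e^{(t-s)\cL} F(s)\|_1\,{\rm d}s \;\leq\; \int_0^t \|F(s)\|_1\,{\rm d}s \;\leq\; \epsilon
\end{equation*}
for all $t \leq T$, where the last inequality is the hypothesis on $F$. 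This gives the claim.

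There is essentially no obstacle: the only subtlety is that $F(s)$ need not be self-adjoint or traceless, so one must invoke the full CPTP-contractivity statement $\|\cT\|_{1\to 1} = 1$ rather than the weaker contractivity on Hermitian operators that already follows from positivity and trace preservation. Once that is in hand, the argument is immediate.
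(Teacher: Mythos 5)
Your proposal is correct and follows essentially the same route as the paper: write the solution in Duhamel form, then bound the inhomogeneous term using the triangle inequality and the trace-norm contractivity $\|e^{(t-s)\cL}\|_{1\to 1}=1$ of the CPTP semigroup. The only cosmetic difference is that the paper justifies the Duhamel representation via Picard iteration of the Volterra equation rather than by direct differentiation, which is immaterial.
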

 
 \begin{proof} Writing \eqref{COMPLEM1} as a Volterra integral equation and using a Picard iteration
 shows  that this initial value problem has a unique solution given by 
 ${\displaystyle
 \rho(t) = e^{t\cL}\rho_0 + \int_0^t e^{(t-s)\cL} F(s) {\rm d} s}$.
 By the triangle inequality, the fact that $\|e^{(t-s)\cL}\|_{1\to 1} \leq 1$,  and \eqref{SOPOPNO1},
 ${\displaystyle
 \|\rho(t) - e^{t\cL}\rho_0\|_1 \leq  \int_0^t \|F(s)\|_1  {\rm d} s}$.
 \end{proof}
 
 Now let $\cD_A$ be an ergodic gapped Lindblad generator acting on $\widehat{\cH}_A$, and let $\pi_A$ be the unique steady state. Because of the gap, as in the definition \eqref{QCP} of $\cP$, 
\begin{equation}\label{PADEF}
\cP_A :=  \lim_{t\to\infty}e^{t\cD_A} 
\end{equation}
exists in the super-operator trace norm -- or any other norm in this finite dimensional setting. As a limit of CPTP maps, it is CPTP.  In fact, it is easy to see that for all $X\in \widehat{\cH}_A$, 
$\cP_A X = \tr[X]\pi_A$, and that $\cP_A$ is a projection onto the subspace of  $\widehat{\cH}_A$ that is spanned by $\pi_A$.  One readily computes that for all $Y\in \widehat{\cH}_A$,
$\cP_A^\dagger Y = \tr[\pi_A Y]\one$, so that while $\cP$ is a projection, it is not an orthogonal projection on $\widehat{\cH}_A$~.
In the notation \eqref{VCTZ}, $\cP_A = |\pi_A\rangle\langle \one|$ and 
$\cP_A^\dagger = |\pi_A\rangle\langle \one|$. 

Define the complementary projection $\cQ_A$ by
\begin{equation}\label{QADEF}
\cQ_A := \one - \cP_A\ .
\end{equation}
Because of the spectral gap,
\begin{equation}\label{SADEF}
\cS_A := -\int_0^\infty e^{t\cD_A} \cQ_A {\rm d}t
\end{equation}
exists and satisfies
\begin{equation}\label{SAGENINV}
\cS_A \cD_A = \cD_A\cS_A = \cQ_A\ ,
\end{equation}
so that $\cS_A$ is a generalized inverse of $\cD_A$, though it is not the Moore-Penrose \cite{M20,P55} generalized inverse since $\cQ_A$ is not an orthogonal projection. 

Because $\cD = \cD_A\otimes \one$, $e^{t\cD} = e^{t\cD_A}\otimes \one$. Therefore, the projection $\cP$ defined by \eqref{QCP} satisfies

\begin{equation*}
\cP  = \cP_A\otimes \one\ .
\end{equation*}
Define
\begin{equation}\label{QSDEF}
\cQ = \cQ_A\otimes \one \quad{\rm and}\quad \cS := \cS_A\otimes\one\ .
\end{equation}
Then $\cQ$ and $\cP$ are complimentary projections, and 
 it follows from \eqref{SAGENINV} that
\begin{equation}\label{SGENINV}
\cS \cD = \cD\cS = \cQ\ ,
\end{equation}
so that $\cS$ is a generalized inverse of $\cD$.  Since $\cP_A X = \tr[X]\pi_A$, if follows that for all $Z\in \widehat{H}_{AB}$, $\cP X = \pi_A\otimes \tr_A Z$ where $\tr_A$ denotes the super-operator from 
$\widehat{H}_{AB}$ to $\widehat{\cH}_B$ sending $Z$ to its partial trace over $\cH_A$, $\tr_A[Z]$. As is well-known, $\tr_A$ is  CPTP. 
Define the super-operator $\cV$ from $\widehat{\cH}_B$ to $\widehat{H}_{AB}$ by
\begin{equation}\label{CVVDEF}
\cV(X) := \pi_A\otimes X\ .
\end{equation}
This is also CPTP; if $X$ is a density matrix, it corresponds to adding ancilla in the steady state of $\cD_A$. Therefore, we have the factorization $\cP = \cV \tr_A$.

 \section{Bounds on the projected evolution of $\rho(t)$}\label{PROJMO}
   
Recall that $\cM$ denotes the steady state manifold \eqref{SSM}. As noted in the introduction,  it has been shown in \cite{ZC} under very general conditions that
solutions of \eqref{LindEq11} with general  initial data $\rho_0$ approach $\cM$ in a time of order $1/\gamma$, and then stay close to $\cM$, within a distance of order $1/\gamma$, uniformly in time.     
In this section we derive results of this type under the   assumption that $\cD_A$ is a Lindblad generator  that is ergodic and gapped with stationary state $\pi_A$. 
Using this additional information on the generator, we derive more precise bounds in a simple manner.

 \begin{thm}\label{TZCVS} Let $\cD_A$ be ergodic with spectral gap $a>0$.  There are positive finite and calculable constants $C_0$ and  and $C_1$  depending only on $\cD_A$ and $\|H\|_\infty$ (but not on the dimension of 
 $\cH_B$) such that for all $t>0$, 
\begin{equation}\label{SOMS}
\|\cQ e^{t\cL_\gamma}\|_{1\to 1} \leq  C_0 e^{-\gamma a t/2} + \frac{C_1}{\gamma}\ .
\end{equation}
and 
\begin{equation}\label{SOM2S}
\|\cQ e^{t\cL_\gamma}\cP\|_{1\to 1} \leq   \frac{C_1}{\gamma}\ .
\end{equation}
\end{thm}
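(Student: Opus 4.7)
The plan is to expand $e^{t\cL_\gamma}$ by the Duhamel formula around the fast dissipative generator $\gamma\cD$, treating $\cK$ as a bounded perturbation. Writing $\cL_\gamma = \gamma\cD + \cK$ gives
\begin{equation*}
\cQ e^{t\cL_\gamma} \;=\; e^{t\gamma\cD}\cQ \;+\; \int_0^t e^{(t-s)\gamma\cD}\cQ\,\cK\,e^{s\cL_\gamma}\,{\rm d}s,
\end{equation*}
where I have slid $\cQ$ past the leftmost exponential using $[\cQ,\cD]=0$ (both being of the form $(\cdot)\otimes\one_B$). The theorem then reduces to controlling $\|e^{u\gamma\cD}\cQ\|_{1\to 1}$, combined with $\|e^{s\cL_\gamma}\|_{1\to 1}=1$ (since $\cL_\gamma$ generates a CPTP evolution) and $\|\cK\|_{1\to 1}\leq 2\|H\|_\infty$ from \eqref{KBND}.

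The main analytic step, which I expect to be the principal technical point, is the bound
\begin{equation*}
\|e^{u\gamma\cD}\cQ\|_{1\to 1} \;\leq\; M\,e^{-\gamma au/2}
\end{equation*}
with $M$ depending only on $\cD_A$, uniformly in $\dim\cH_B$. The ergodic-and-gapped hypothesis on $\cD_A$ places the spectrum of $\cD_A|_{\mathrm{Im}(\cQ_A)}$ in $\{\Re z \leq -a\}$, giving $\|\cQ_A e^{u\cD_A}\|_{1\to 1}\leq p(u)e^{-ua}$ for some polynomial $p$ reflecting possible Jordan blocks; absorbing $p$ into the rate $a/2$ costs only a larger constant. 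The subtlety is that tensoring a super-operator on $\widehat{\cH}_A$ with $\one_B$ can amplify the $1\to 1$ norm by a factor of $d_A$ (the transpose map being the standard example), so to keep the estimate independent of $d_B$ one passes through the diamond norm of $\cQ_A e^{u\cD_A}$, which dominates $\|(\cQ_A e^{u\cD_A})\otimes\one_B\|_{1\to 1}$ for every auxiliary $\cH_B$ and, in finite dimensions on $\widehat{\cH}_A$, is controlled by a $\cD_A$-dependent multiple of $\|\cQ_A e^{u\cD_A}\|_{1\to 1}$.

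Granted this decay, \eqref{SOMS} is immediate: the boundary term contributes $Me^{-\gamma at/2}$, while the integral is majorized by
\begin{equation*}
\int_0^t M e^{-(t-s)\gamma a/2}\cdot 2\|H\|_\infty\,{\rm d}s \;\leq\; \frac{4M\|H\|_\infty}{\gamma a},
\end{equation*}
so one may take $C_0=M$ and $C_1 = 4M\|H\|_\infty/a$. For \eqref{SOM2S} the additional ingredient is $\cD\cP = 0$ (immediate from $\cD_A\pi_A = 0$), whence $e^{t\gamma\cD}\cP = \cP$ and the boundary term becomes $e^{t\gamma\cD}\cQ\cP = \cQ\cP = 0$. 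Multiplying the same Duhamel identity by $\cP$ on the right therefore leaves only the integral, bounded by the same $C_1/\gamma$; no transient $e^{-\gamma at/2}$ term appears, precisely because the initial data then lies in the null space of $\cD$ and does not need to be flushed out by the fast dissipation.
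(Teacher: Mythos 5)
Your argument is correct and is structurally identical to the paper's: the same Duhamel expansion around $\gamma\cD$, commuting $\cQ$ past $e^{(t-s)\gamma\cD}$, the CPTP contraction $\|e^{s\cL_\gamma}\|_{1\to1}=1$, the bound \eqref{KBND}, the dimension-independent decay $\|e^{u\gamma\cD}\cQ\|_{1\to1}\leq Ce^{-\gamma a u/2}$, and for \eqref{SOM2S} the vanishing of the boundary term via $\cQ\cP=0$. The only point where you diverge is in establishing that the decay of $\|\cQ_A e^{u\cD_A}\|_{1\to1}$ survives tensoring with $\one_B$: the paper (Theorem~\ref{GENINVTNBLM}) expands $e^{u\cD_A}\cQ_A$ in a Jordan basis as a sum of rank-one super-operators $|e^{u\cD_A}Y_j\rangle\langle X_j|$ and invokes Lemma~\ref{SOPOPNOLEM}, which gives the exact identity $\||Y\rangle\langle X|\otimes\one\|_{1\to1}=\|Y\|_1\|X\|_\infty$, whereas you pass through the diamond norm and the finite-dimensional equivalence of norms on super-operators over $\widehat{\cH}_A$. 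Both are legitimate; the paper's route yields a slightly more explicit constant (a sum of $\|e^{u\cD_A}Y_j\|_1\|X_j\|_\infty$), while yours is a more off-the-shelf argument that correctly isolates the one place where a naive $\|\cT_A\otimes\one_B\|_{1\to1}=\|\cT_A\|_{1\to1}$ claim would be unjustified.
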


 \begin{rem}\label{INITLAY} Let $\rho_0$ be any density matrix on $\cH_{AB}$, and define $\rho(t) := e^{t\cL_\gamma}\rho_0$.  Then as a consequence of \eqref{SOMS} and  $\cQ = \one - \cP$, 
 ${\displaystyle
 \|\rho(t) - \cP \rho(t)\|_1 \leq C_0 e^{-\gamma a t/2} + \frac{C_1}{\gamma}}$.
 For $t > \frac{2\log(\gamma)}{a \gamma}$,  this simplifies to 
  \begin{equation*}
 \|\rho(t) - \cP \rho(t)\|_1 = \|\cQ \rho(t)\|_1 \leq  \frac{C_0 + C_1}{\gamma }\ .
\end{equation*}
This is a precise trace norm version of  the result \eqref{INTR1} of \cite{ZC}.  The proof below, which makes crucial use of the fact that the CPTP super-operator $e^{s\cL_\gamma}$ is a contraction in the trace norm,  is much simpler than the analysis in \cite{ZC} which however is not restricted to CPTP evolution equations. 
 \end{rem}

\begin{proof}        By Duhammel's formula, 
 \begin{equation}\label{DuHam1}
 e^{t\cL_\gamma}  = e^{t\gamma \cD} + \int_0^t e^{(t-s)\gamma \cD}\cK e^{s\cL_\gamma}{\rm d}s \ .
 \end{equation}
 By the triangle inequality and the fact that $[\cQ,\cD] =0$ (see \eqref{SAGENINV} and \eqref{QSDEF}), 
 \begin{equation}\label{DuHam1A}
 \| \cQ e^{t\cL_\gamma} \|_{1\to 1} \leq \|  e^{t\gamma \cD}\cQ \|_{1\to1} +  \int_0^t \| e^{(t-s)\gamma \cD}\cQ\cK e^{s\cL_\gamma}\|_{1\to1}{\rm d}s\ .
 \end{equation}
 By the definition of the super-operator trace norm,
 $$
 \|e^{(t-s)\gamma \cD}\cQ \cK e^{s\cL_\gamma}\|_{1\to1} \leq  \|\cQ e^{(t-s)\gamma \cD}\|_{1\to1} \|\cK \|_{1\to1} \|e^{s\cL_\gamma}\|_{1\to1} 
$$

 Since $e^{s\cL_\gamma}$ is CPTP, $\|e^{s\cL_\gamma}\|_{1 \to 1} =1$, and by \eqref{KBND}, $ \|\cK \|_{1\to1} \leq 2\|H\|_\infty$.    By Theorem~\ref{GENINVTNBLM},  for a finite constant $C$ depending only on $\cD_A$, 
 $\|\cQ e^{t\gamma \cD}\|_{1\to 1} \leq  Ce^{- t\gamma a/2}$.
     Using these estimates in \eqref{DuHam1A},
 $$
\| \cQ e^{t\cL_\gamma} \|_{1\to 1} \leq Ce^{- t\gamma a/2}+ 2\|H\|_\infty C \int_0^\infty e^{-\gamma (t-s)a/2} {\rm d}s
 $$
 which yields \eqref{SOMS} with $C_0 =C$ and $C_1 = \frac{4C\|H\|_\infty}{a}$. 
 The proof of \eqref{SOM2S} is even simpler: Since $\cP$ and $\cQ$ commute with $e^{(t-s)\cD}$, $\cQ e^{(t-s)\cD}\cP = 0$, and the contribution of the term involving $\cK$ is no larger than before. 
 \end{proof}

Crucially for what follows, the projected evolution is globally Lipschitz, and after a relaxation time of order $\gamma^{-1}$, so is the full evolution.

\begin{lem}\label{LipLEM} For any density matrix $\rho_0$ on $\cH$, define $\rho(t) := e^{t\cL_\gamma}\rho_0$ and $R(t) := \tr_A \rho(t)$. Then for all  $t>s>0$,
\begin{equation}\label{LIPBNDP}
\| R(t) - R(s)\| \leq \|\cK\|_{1\to1}(t-s)\ .
\end{equation}

Moreover, for all $t> s > \frac{2\log(\gamma)}{a\gamma}$,
\begin{equation}\label{LIPBND}
\|\rho(t) - \rho(s)\|_1 \leq  L(t-s)
\end{equation}
where $L =(\|\cK\|_{1\to1} + C_0+C_1)$, with $C_0$ and $C_1$ as in Theorem~\ref{TZCVS}. If $\rho_0 \in \cM$, then \eqref{LIPBND} is valid for all $t > s \geq 0$. 
\end{lem}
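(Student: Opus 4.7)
The plan is to prove the three claims in sequence, exploiting the semigroup property, the trace-norm contractivity of CPTP maps, and Theorem~\ref{TZCVS}.

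For \eqref{LIPBNDP}, I first observe that the partial trace annihilates $\cD$: since $\cD_A$ generates a trace-preserving semigroup, $\tr[\cD_A X] = 0$ for every $X \in \widehat{\cH}_A$, and expanding an arbitrary $Z \in \widehat{\cH}_{AB}$ in a product tensor basis gives $\tr_A[(\cD_A \otimes \one) Z] = 0$. Hence
\[
\frac{{\rm d}}{{\rm d}t} R(t) \;=\; \tr_A\bigl(\cL_\gamma \rho(t)\bigr) \;=\; \tr_A\bigl(\cK \rho(t)\bigr),
\]
and since $\tr_A$ is CPTP (hence $\|\cdot\|_1$-contractive), $\|\frac{{\rm d}}{{\rm d}t}R(t)\|_1 \leq \|\cK\|_{1\to 1}\|\rho(t)\|_1 = \|\cK\|_{1\to 1}$. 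Integrating from $s$ to $t$ yields \eqref{LIPBNDP}.

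For the case $\rho_0 \in \cM$, I use that $\cD\rho_0 = 0$, so $\cL_\gamma\rho_0 = \cK\rho_0$. Combined with the semigroup commutation $\cL_\gamma e^{t\cL_\gamma} = e^{t\cL_\gamma}\cL_\gamma$, this gives
\[
\frac{{\rm d}}{{\rm d}t}\rho(t) \;=\; e^{t\cL_\gamma}(\cK\rho_0),
\]
whose trace norm is at most $\|\cK\|_{1\to 1}\|\rho_0\|_1 = \|\cK\|_{1\to 1}$ by contractivity of $e^{t\cL_\gamma}$. Integration gives $\|\rho(t) - \rho(s)\|_1 \leq \|\cK\|_{1\to 1}(t-s) \leq L(t-s)$ for all $t > s \geq 0$, which is the final sentence of the lemma.

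For \eqref{LIPBND} with general $\rho_0$ and $t > s > t_\gamma := \frac{2\log\gamma}{a\gamma}$, I again integrate the ODE but now need Theorem~\ref{TZCVS}. Using $\cD\cP = 0$, I write $\cL_\gamma\rho(u) = \cK\rho(u) + \gamma\cD\cQ\rho(u)$. The coherent term satisfies $\|\cK\rho(u)\|_1 \leq \|\cK\|_{1\to 1}$. For the dissipative term, Theorem~\ref{TZCVS} guarantees $\|\cQ\rho(u)\|_1 \leq (C_0+C_1)/\gamma$ for $u \geq t_\gamma$, so the explicit factor of $\gamma$ cancels and $\|\gamma\cD\cQ\rho(u)\|_1 \leq \|\cD\|_{1\to 1}(C_0+C_1)$. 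Combining and integrating over $[s,t]$ yields
\[
\|\rho(t) - \rho(s)\|_1 \;\leq\; \bigl(\|\cK\|_{1\to 1} + \|\cD\|_{1\to 1}(C_0+C_1)\bigr)(t-s).
\]

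The main technical point is this last step: the direct ODE estimate carries a factor of $\|\cD\|_{1\to 1}$ multiplying $(C_0+C_1)$, whereas the stated $L$ is $\|\cK\|_{1\to 1} + C_0 + C_1$. Since $\|\cD\|_{1\to 1}$ is a finite constant depending only on $\cD_A$ (and not on $\gamma$ or $\dim\cH_B$), the product $\|\cD\|_{1\to 1}(C_0+C_1)$ can be rewritten as $C_0' + C_1'$ with constants depending only on $\cD_A$ and $\|H\|_\infty$, so that after the harmless relabeling of constants the bound takes the form stated in \eqref{LIPBND}.
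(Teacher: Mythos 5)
Your proof is correct, and for the two displayed bounds it follows essentially the same route as the paper: differentiate, kill the dissipator under $\tr_A$ for \eqref{LIPBNDP}, and for \eqref{LIPBND} write $\cL_\gamma\rho(u)=\cK\rho(u)+\gamma\cD\cQ\rho(u)$ and invoke the $\cO(\gamma^{-1})$ bound on $\|\cQ\rho(u)\|_1$ from Theorem~\ref{TZCVS} so that the explicit $\gamma$ cancels. Your candid remark about the constant is well taken: the direct estimate produces $\|\cK\|_{1\to1}+\|\cD\|_{1\to1}(C_0+C_1)$ rather than the stated $\|\cK\|_{1\to1}+C_0+C_1$, and the paper's own proof has the same cosmetic mismatch (it ends with $\|\cK\|_{1\to1}+C/a$); since $\|\cD_A\otimes\one_B\|_{1\to1}$ is bounded via the Lindblad form and H\"older by a constant depending only on $\cD_A$, the relabeling is indeed harmless. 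Where you genuinely diverge is the final claim for $\rho_0\in\cM$: the paper simply observes that the restriction $t>t_\gamma$ becomes superfluous because \eqref{SOM2S} controls $\|\cQ\rho(t)\|_1$ for all $t\geq 0$ when the initial datum lies in $\cM$, whereas you use $\cD\rho_0=0$ together with $\frac{{\rm d}}{{\rm d}t}e^{t\cL_\gamma}\rho_0=e^{t\cL_\gamma}\cL_\gamma\rho_0=e^{t\cL_\gamma}\cK\rho_0$ and contractivity. Your argument is cleaner and even yields the sharper Lipschitz constant $\|\cK\|_{1\to1}$ in that case; the paper's version has the minor advantage of reusing machinery already in place and of not requiring the commutation of $\cL_\gamma$ with its own semigroup to be spelled out.
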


\begin{proof} Since $\cD = \cD_A\otimes \one$, and since $\tr_A \cD_A =0$, 
$$
\frac{{\rm d}}{{\rm d}t}R(t) = \tr_A(\cK + \gamma \cD)\rho(t) = \tr_A \cK \rho(t)\ .
$$
Then since $\|\tr_A \cK \rho(t)\|_1 \leq \|\tr_A\|_{1\to 1}\|\cK\|_{1\to1}\|\rho(t)\|_1 = \|\cK\|_{1\to1}$.  This proves \eqref{LIPBNDP}.

Next, using the fact that $\cD =\cD \cQ$, we compute
${\displaystyle \frac{{\rm d}}{{\rm d}t}e^{t\cL_\gamma}\rho = \cL_\gamma(e^{t\cL_\gamma}\rho) = \cK(e^{t\cL_\gamma}\rho) + \gamma \cD \cQ e^{t\cL_\gamma}\rho}$.
Therefore, by Remark~\ref{INITLAY}, for $t> t_\gamma =   \frac{2\log(\gamma)}{a\gamma}$
$$
\left\Vert \frac{{\rm d}}{{\rm d}t}\rho(t)\right\Vert_1 \leq \|\cK\|_{1\to1} + \gamma \frac{C}{\gamma a}  = \|\cK\|_{1\to1} + \frac{C}{a} \ .
$$
Now integration yields the bound. When $\rho_0\in \cM$, the condition $t>  \frac{2\log(\gamma)}{a\gamma}$ is superfluous. 
\end{proof}

The next theorem says that near the Zeno limit, we may replace general initial data $\rho_0$ by its projection onto $\cM$, $\cP\rho_0$, and after a time of order $\gamma^{-1}$, the trace norm 
difference between the two solution is (essentially) of order $\gamma^{-1}$, uniformly in time.  

\begin{thm}\label{EULLIM} Let $\cD_A$ be ergodic with spectral gap $a>0$. There is a constant $C>0$ independent of $\gamma$ such that 
for all $t \geq t_\gamma := \frac{2\log(\gamma)}{a\gamma}$, 
\begin{equation}\label{EULLIM1} 
\| e^{t \cL_\gamma}\rho - e^{t \cL_{\gamma}} \cP \rho\|_1 \leq \frac{\log(1+\gamma)}{\gamma}C\ .
\end{equation}
\end{thm}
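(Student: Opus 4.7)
Proof plan. The strategy combines Theorem~\ref{TZCVS} with a Duhamel-type bound for the projected part of the difference, and uses trace-norm contractivity of the CPTP semigroup $e^{t\cL_\gamma}$ to promote a bound at the single time $t_\gamma$ into one valid for all $t\geq t_\gamma$.

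Set $W(t) := e^{t\cL_\gamma}\rho - e^{t\cL_\gamma}\cP\rho$. Since $\cP\rho = \pi_A\otimes\tr_A\rho$ is again a density matrix on $\cH_{AB}$ sharing the same $B$-marginal as $\rho$, $W(t)$ is Hermitian, traceless, and satisfies $\tr_A W(0) = 0$. From the semigroup identity $W(t) = e^{(t-s)\cL_\gamma}W(s)$ for $0\leq s\leq t$ and the fact that $e^{(t-s)\cL_\gamma}$ has $\|\cdot\|_{1\to 1} = 1$, the map $t\mapsto\|W(t)\|_1$ is non-increasing. It therefore suffices to estimate $\|W(t_\gamma)\|_1$.

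Decompose $W(t) = \cP W(t) + \cQ W(t)$ and bound each piece separately. By the triangle inequality together with \eqref{SOMS} and \eqref{SOM2S},
$$
\|\cQ W(t)\|_1 \leq \|\cQ e^{t\cL_\gamma}\rho\|_1 + \|\cQ e^{t\cL_\gamma}\cP\rho\|_1 \leq C_0 e^{-\gamma a t/2} + \frac{2C_1}{\gamma},
$$
which at $t = t_\gamma$ is $O(1/\gamma)$ since $e^{-\gamma a t_\gamma/2} = 1/\gamma$. For the $\cP$ part, put $h(t) := \tr_A W(t)$, so that $\cP W(t) = \pi_A\otimes h(t)$ and $\|\cP W(t)\|_1 = \|h(t)\|_1$. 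Using $\tr_A\cD = 0$ together with the identity $\tr_A\cK(\pi_A\otimes X) = -i[H_P,X] = \cK_P X$ (which follows from $H_P = \tr_A[(\pi_A\otimes\one_B)H]$ by a direct computation using the decomposition $H = \sum_k H_k\otimes G_k$ and cyclicity of the trace), and since $h(0) = \tr_A(\rho - \cP\rho) = 0$, one obtains
$$
\frac{d}{dt}h(t) = \cK_P\,h(t) + F(t), \qquad F(t) := \tr_A\cK\cQ W(t), \qquad h(0) = 0.
$$
Because $\cK_P$ is a (Hamiltonian) Lindblad generator, Lemma~\ref{COMPLEM} gives
$$
\|h(t)\|_1 \leq \int_0^t \|F(s)\|_1\,ds \leq \|\cK\|_{1\to 1}\int_0^t\!\left(C_0 e^{-\gamma a s/2} + \frac{2C_1}{\gamma}\right)ds,
$$
and evaluating at $t = t_\gamma$ produces a bound of order $1/\gamma + (\log\gamma)/\gamma^2 = O(\log(1+\gamma)/\gamma)$.

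Summing the $\cP$ and $\cQ$ bounds at $t_\gamma$ and invoking monotonicity yields the claim for all $t\geq t_\gamma$. The main obstacle I anticipate is isolating the identity $\tr_A\cK\cP = \cK_P\tr_A$ needed to close the ODE for $h$, and correctly applying Lemma~\ref{COMPLEM} to the generator $\cK_P$. The monotonicity step, though short, is essential: without it the forcing integral $\int_0^t\|F(s)\|_1\,ds$ grows linearly in $t$, whereas the theorem demands an estimate uniform in $t\geq t_\gamma$.
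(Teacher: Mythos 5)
Your proof is correct, but it takes a genuinely different route from the paper's. The paper splits the semigroup at the relaxation time, writing $e^{t\cL_\gamma}\rho = e^{(t-t_\gamma)\cL_\gamma}(\cP+\cQ)e^{t_\gamma\cL_\gamma}\rho$; it kills the $\cQ$-component with Theorem~\ref{TZCVS} and trace-norm contractivity, and it absorbs the $\cP$-component by comparing $e^{(t-t_\gamma)\cL_\gamma}\cP\rho$ with $e^{t\cL_\gamma}\cP\rho$ via the global Lipschitz bound of Lemma~\ref{LipLEM}; the drift $Lt_\gamma$ over the initial layer is precisely where the $\log(1+\gamma)/\gamma$ in the statement comes from. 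You instead exploit the linearity $W(t)=e^{t\cL_\gamma}(\rho-\cP\rho)$ to get monotonicity of $\|W(t)\|_1$, reduce everything to the single time $t_\gamma$, and then control $\cP W$ through the closed equation $\tfrac{d}{dt}h = \cK_P h + \tr_A\cK\cQ W$ with $h(0)=0$ and integrable forcing — essentially a miniature version of the Duhamel argument the paper deploys later in Lemma~\ref{COHERENTSC}. Your route buys a slightly sharper conclusion: since $\int_0^{t_\gamma}\|F(s)\|_1\,{\rm d}s = O(\gamma^{-1}) + O(\gamma^{-2}\log\gamma)$, you actually obtain $\|W(t_\gamma)\|_1 = O(\gamma^{-1})$ with no logarithmic loss, which of course implies \eqref{EULLIM1}. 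The paper's route is shorter given that Lemma~\ref{LipLEM} is already in hand, but pays the $\log$ factor. All the ingredients you invoke (contractivity of CPTP maps, $\tr_A\cD=0$, Lemma~\ref{CommLem} for $\tr_A\cK\cV=\cK_P$, Lemma~\ref{COMPLEM} applied to the Hamiltonian generator $\cK_P$, and the bounds \eqref{SOMS}--\eqref{SOM2S}) are available and correctly applied, so there is no gap.
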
 

\begin{proof}
Fix $t>0$, and let $\gamma$ be sufficiently large that $t > t_\gamma$. Then 
$$e^{t\cL_\gamma }\rho = e^{(t- t_\gamma)\cL_\gamma }(\cP + \cQ)e^{t_\gamma \cL_\gamma }\rho =  e^{(t- t_\gamma)\cL_\gamma }\cP\rho +  e^{(t- t_\gamma)\cL_\gamma }\cQ e^{t_\gamma\cL_\gamma }\rho\ .$$
By the definition of $t_\gamma$, Theorem~\ref{TZCVS}, and the fact that $e^{(t- t_\gamma)\cL_\gamma }$ is a contraction in the trace norm, $\|e^{(t- t_\gamma)\cL_\gamma }\cQ e^{t_\gamma\cL_\gamma }\rho\|_1 \leq   \frac{C_0 + C_1}{\gamma }$. By Lemma~\ref{LipLEM},
${\displaystyle \|e^{(t- t_\gamma)\cL_\gamma }\cP \rho - e^{t\cL_\gamma }\cP \rho\|_1  \leq L t_\gamma}$,
and now \eqref{EULLIM1}  follows by the triangle inequality. 
\end{proof}

\section{The super-operators $\cK_P$ and $\cD_P$}\label{SUPKPDP}

Recall  the that the super-operator $\cV$ acting on $\widehat{\cH}_B$  has been defined in  \eqref{CVVDEF} by $\cV X = \pi_A\otimes X$. The super-operator $\cK_P$ defined in \eqref{KPDPDEF} can be expressed 
in terms of $\cV$ which is how it shall arise in expansions considered here. The following lemma 
is due to \cite{ZC}; its simple proof is included for convenience.  

\begin{lem}\label{CommLem} 
For all density matrices $R$ on $\cH_B$,  
 \begin{equation}\label{cKredform}
\tr_A \cK\cV  R =   -i   [H_P,R] =: \cK_P R
 \end{equation}
 where
 $H_P :=  \tr_A[(\pi_A\otimes \one_B)H]$ as in \eqref{INTR3}.
 \end{lem}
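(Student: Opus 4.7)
The statement is a direct computation: pushing the partial trace over $A$ through the commutator $[H,\pi_A\otimes R]$ should yield a commutator with $H_P$ on $\cH_B$. The plan is to unfold the definitions $\cK X = -i[H,X]$ and $\cV R = \pi_A\otimes R$, apply $\tr_A$ term-by-term, and then invoke two elementary partial-trace identities together with cyclicity of the trace on the $A$-factor.

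\textbf{Main computation.} By definition, $\cK\cV R = -i[H,\pi_A\otimes R]$, so by linearity of $\tr_A$,
\begin{equation*}
\tr_A \cK\cV R \;=\; -i\bigl(\tr_A[H(\pi_A\otimes R)] \;-\; \tr_A[(\pi_A\otimes R)H]\bigr).
\end{equation*}
The first term I would rewrite using $\pi_A\otimes R = (\pi_A\otimes\one_B)(\one_A\otimes R)$ together with the identity $\tr_A[M(\one_A\otimes Y)] = \tr_A[M]\,Y$, obtaining
\begin{equation*}
\tr_A[H(\pi_A\otimes R)] \;=\; \tr_A[H(\pi_A\otimes\one_B)]\,R \;=\; H_P\, R.
\end{equation*}
Symmetrically, using $\tr_A[(\one_A\otimes Y)M] = Y\,\tr_A[M]$,
\begin{equation*}
\tr_A[(\pi_A\otimes R)H] \;=\; R\,\tr_A[(\pi_A\otimes\one_B)H].
\end{equation*}

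\textbf{Key intermediate identity.} It remains to check that
\begin{equation*}
\tr_A[(\pi_A\otimes\one_B)H] \;=\; \tr_A[H(\pi_A\otimes\one_B)] \;=\; H_P,
\end{equation*}
so that the second term equals $R H_P$. This is the one place where something needs to be said: write $H = \sum_i A_i\otimes B_i$ in any operator decomposition; then the two partial traces equal $\sum_i \tr(\pi_A A_i)\,B_i$ and $\sum_i \tr(A_i\pi_A)\,B_i$ respectively, which agree by cyclicity of the (scalar) trace on $\cH_A$. I expect this step to be the only thing that could be called an ``obstacle,'' and it is entirely routine.

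\textbf{Conclusion.} Putting the pieces together yields
\begin{equation*}
\tr_A\cK\cV R \;=\; -i(H_P R - R H_P) \;=\; -i[H_P,R] \;=\; \cK_P R,
\end{equation*}
which is the asserted identity. As a side remark, self-adjointness of $H_P$ is automatic since $H$ and $\pi_A\otimes\one_B$ are self-adjoint and $\tr_A$ preserves self-adjointness, so $\cK_P$ is a genuine commutator generator on $\widehat{\cH}_B$.
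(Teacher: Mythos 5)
Your proof is correct and follows essentially the same route as the paper: factor $\pi_A\otimes R=(\pi_A\otimes\one_B)(\one_A\otimes R)$, pull the $B$-factor through the partial trace, and invoke partial cyclicity $\tr_A[H(\pi_A\otimes\one_B)]=\tr_A[(\pi_A\otimes\one_B)H]=H_P$ (which the paper cites directly and you justify via a tensor decomposition of $H$). Your closing remark on self-adjointness of $H_P$ matches the paper's observation as well.
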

 
 \begin{proof} By the definition \eqref{CVVDEF} of $\cV$, 
 \begin{eqnarray*}\cK_P R = \cP\cK \pi_A\otimes R &=& -i\cP  \left(H(\pi_A \otimes R) -  (\pi_A\otimes R)H\right)\\
 &=& -i\cP  \left(H(\pi_A \otimes \one_B)(\one_A\otimes R) -  (\one_A\otimes R)(\pi_A\otimes \one_B)H\right)\\
 &=& -i\left(\pi_A\otimes \tr_A[ H(\pi_A \otimes \one_B)(\one_A\otimes R)] - \pi_A\otimes \tr_A[ (\one_A\otimes R)(\pi_A\otimes \one_B)H]\right)\\
 &=& -i\left(\pi_A\otimes \tr_A[ H(\pi_A \otimes \one_B)]R - \pi_A\otimes R \tr_A[ (\pi_A\otimes \one_B)H]\right)\ .
 \end{eqnarray*}
 By the partial cyclicity of the partial trace, 
 $ \tr_A[H(\pi_A\otimes \one_B)] =  \tr_A[(\pi_A\otimes \one_B)H] = H_P$,  which shows that $H_P$ is self-adjoint, and proves \eqref{cKredform}. 
 \end{proof}

For future use, 
note that since $\tr_A$ and $\cV$ are CPTP hence hence trace norm contractions, 
\begin{equation*}
\|\cK_P\|_{1\to1} \leq \|\cK \|_{1\to 1} \leq 2\|H\|_\infty
\end{equation*}
where the final inequality is from \eqref{KBND}. 

As an immediate consequence of Lemma~\ref{CommLem}, $\cK_P$ generates a coherent CPTP evolution. We now turn to $\cD_P$ as defined by
\begin{equation}\label{DPDEFX}
\cD_P := -\tr_A \cK \cS \cK \cV\ .
\end{equation}  
It was shown in \cite{PEPS} that under certain conditions, 
$\cD_P$ is a Lindblad generator so that $\cD_P$ generates a CPTP evolution,
and therefore so does
\begin{equation}\label{LPGDEFX}
\cL_{P,\gamma} := \cK_P +\frac{1}{\gamma}\cD_P\ .
\end{equation}.  

  We show that $\cD_P$ is a Lindblad generator  assuming only that $\cD_A$ is ergodic and gapped. This was shown in \cite{PEPS} under the additional assumptions that $\cD_A$ is diagonalizable, and more important, that a certain matrix determined by $\cD_A$  is positive semidefinite, but conditions for this matrix to be positive semidefinite are not discussed in \cite{PEPS}.

The next lemma presents the argument of \cite{PEPS} adapted to remove the assumption that $\cD_A$ is diagonalizable. Apart from the small changes needed to avoid assuming that $\cD_A$ is diagonalizable, we closely follow the argument in \cite{PEPS}.  For background on Jordan bases and dual bases, see Appendix~\ref{DBJBA}. 

\begin{lem}\label{LINDINGEN} Let $\cD_A$ be  ergodic and gapped  with the unique steady state $\pi_A$. Let $\{Y_0,\dots,Y_{n_A}\}$ be a Jordan basis for $\cD_A$ such that $Y_0 = \pi_A$, and let 
 $\{X_0,\dots,X_{n_A}\}$ be the dual basis. Define the operators $\{G_0,\dots,G_{n_A}\}$ on $\cH_B$ by 
 \begin{equation}\label{HEXPTHM}
 H = \sum_{j=0}^{n_{A}} X_j \otimes G_j \ .
 \end{equation}
 Define an $n_A\times n_A$ matrix $M$ by
 \begin{equation}\label{MMATDEF}
 M_{j,k} :=   -\langle \cS_A ^\dagger X_k, X_j\pi_A\rangle
 \end{equation}
 and define $A = \frac12(M + M^\dagger)$ and $B := \frac{1}{2i}(M - M^\dagger)$. 
Suppose that $A$ is positive semidefinite so that it has an spectral decomposition
\begin{equation}\label{JOADEF}
A_{j,k} = \sum_{\ell =1}^r \mu_\ell v^{(\ell)}_j  \overline{v^{(\ell)}_k}
\end{equation}
where $r$ is the rank of $A$, and each $\mu_\ell$ is strictly positive. 

Then $\cD_P$, defined in \eqref{DPDEFX}, is a Lindblad generator, and 
 for all operators $R$ on $\cH_B$, 
 \begin{equation}\label{PEPSLG2}
\cD_P R  = \sum_{\ell=1}^{r }\left(2V_\ell R V_\ell^\dagger - V_\ell^\dagger V_\ell  R - R V_\ell^\dagger V_\ell  \right)  -i[H_L,R]\ ,
\end{equation}
where 
 \begin{equation}\label{PEPSLG3}
  V_\ell :=  \sqrt{\mu_\ell}\sum_{k=1}^{n_A}\overline{v^{(\ell)}_k} G_k\quad{\rm and}\quad  H_L := \sum_{j,k=1}^{n_A} B_{j,k} G^\dagger_k G_j\ .
  \end{equation}
 \end{lem}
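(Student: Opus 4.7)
The plan is to directly expand $\cD_P R = -\tr_A[\cK\cS\cK\cV R]$ using the expansion $H = \sum_j X_j \otimes G_j$ of \eqref{HEXPTHM}, collect terms, and match the result with the claimed Kossakowski--Lindblad form \eqref{PEPSLG2}--\eqref{PEPSLG3}. This is essentially the derivation of \cite{PEPS}, adapted from their eigenbasis to the Jordan basis $\{Y_j\}$ and its dual $\{X_j\}$.

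First, I would compute
$\cK\cV R = -i[H,\pi_A\otimes R] = -i\sum_j(X_j\pi_A\otimes G_jR - \pi_A X_j\otimes R G_j)$.
Applying $\cS = \cS_A\otimes\one_B$ and using $\cS_A(\pi_A)=0$ (which follows from $\cQ_A \pi_A = 0$), the $j=0$ contribution (with $X_0 = \one$) drops out, so the sum collapses to $j\geq 1$. A second commutator with $H$ followed by $-\tr_A$ produces a double sum $\sum_{j,k\geq 1}$ of the four natural operator products on $\cH_B$ --- $G_k G_j R$, $G_j R G_k$, $G_k R G_j$, $R G_j G_k$ --- each weighted by a scalar trace coefficient on $\cH_A$. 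The terms with $k=0$ drop out because $\tr[\cS_A(Z)]=0$ for every $Z$, a consequence of $\cQ_A$ having traceless range together with the fact that $e^{t\cD_A}$ is trace-preserving. The trace coefficients are then recognized in Hilbert--Schmidt form via
$\tr\bigl[X_k\cS_A(X_j\pi_A)\bigr] = \langle X_k^*,\cS_A(X_j\pi_A)\rangle_{\widehat{\cH}_A} = \langle \cS_A^\dagger X_k^*, X_j\pi_A\rangle_{\widehat{\cH}_A}$
(and a parallel identity for the term involving $\pi_A X_j$), so that the matrix $M_{j,k}$ of \eqref{MMATDEF} (and its Hermitian conjugate) appears.

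Second, I would use $H=H^*$ to recast the expression in standard Kossakowski form. Writing $X_j^* = \sum_\ell \tau_{j\ell} X_\ell$ and equating $\sum_j X_j\otimes G_j = \sum_j X_j^*\otimes G_j^*$ yields the relations $G_\ell = \sum_j \tau_{j\ell} G_j^\dagger$, so each $G_k^\dagger$ is a linear combination of the $G_j$. Substituting these relations where appropriate groups the four-term double sum into
\[
\cD_P R = \sum_{j,k\geq 1}\Bigl( 2 A_{k,j}\,G_j R G_k^\dagger \;-\; A_{j,k}\,G_j^\dagger G_k R \;-\; A_{j,k}\,R G_j^\dagger G_k \Bigr) \;-\; i\Bigl[\sum_{j,k\geq 1} B_{j,k}\, G_k^\dagger G_j,\, R\Bigr]\ ,
\]
with $A = \tfrac12(M+M^\dagger)$ and $B = \tfrac{1}{2i}(M-M^\dagger)$; Hermiticity of $B$ makes $H_L = \sum_{j,k} B_{j,k} G_k^\dagger G_j$ self-adjoint. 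Under the hypothesis $A\succeq 0$ with spectral decomposition \eqref{JOADEF}, substituting $A_{j,k} = \sum_\ell \mu_\ell v_j^{(\ell)} \overline{v_k^{(\ell)}}$ into each of the three dissipative terms causes them to collapse into $\sum_\ell\bigl(2 V_\ell R V_\ell^\dagger - V_\ell^\dagger V_\ell R - R V_\ell^\dagger V_\ell\bigr)$ with $V_\ell$ exactly as in \eqref{PEPSLG3}. This yields \eqref{PEPSLG2} and displays $\cD_P$ in manifest Kossakowski--Lindblad form.

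The main obstacle is the bookkeeping in the second step: one must correctly convert the na\"ive ``bare-$G$'' expansion into the Kraus form involving both $G_k$ and $G_k^\dagger$, tracking the transition matrix $\tau$ between $\{X_j\}$ and $\{X_j^*\}$, and verifying that the Kossakowski matrix produced is \emph{exactly} $A$ as defined in \eqref{MMATDEF} rather than some conjugate or transpose of it. Conceptually, nothing in the argument uses diagonalizability of $\cD_A$: what is needed is only the existence of a generalized inverse $\cS_A$ satisfying \eqref{SAGENINV}, which is guaranteed by the spectral gap (whether or not $\cD_A$ is semisimple) through the integral representation \eqref{SADEF}. Thus $\cS_A$ can be applied to Jordan basis elements directly, without invoking any spectral decomposition, and the manipulation proceeds exactly as in the semisimple case treated by \cite{PEPS}.
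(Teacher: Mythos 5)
Your plan is correct and is essentially the paper's own proof: expand $-\tr_A \cK\cS\cK\cV$ in the dual basis, kill the $j=0$ and $k=0$ contributions using $\cS_A\pi_A=0$ and the tracelessness of the range of $\cS_A$, identify the scalar coefficients with the matrix $M$ of \eqref{MMATDEF}, split $M=A+iB$, and insert the spectral decomposition of $A$; and you are right that nothing requires diagonalizability of $\cD_A$, only the generalized inverse \eqref{SADEF}. The one place where the paper is cleaner than your proposal is precisely the bookkeeping step you flag as the main obstacle: rather than introducing a transition matrix $\tau$ between $\{X_j\}$ and $\{X_j^\dagger\}$, the paper expands the right-multiplication term $(\pi_A\otimes R)H$ using the second form $H=\sum_j X_j^\dagger\otimes G_j^\dagger$ of \eqref{Hexpans} and applies the dual-basis formula for $\cS$ to the adjoint dual pair $\{Y_k^\dagger\},\{X_k^\dagger\}$ together with the identity $\langle \cS^\dagger(X_k^\dagger),(X_j\pi_A)^\dagger\rangle=\overline{\langle\cS^\dagger X_k,X_j\pi_A\rangle}$, so the conjugate coefficients $\overline{M_{k,j}}$ and the operators $G_j^\dagger$ appear directly and the Kossakowski matrix is manifestly $A$ with no change of basis.
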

 
 \begin{proof} 
Since $H$ is self-adjoint, \eqref{HEXPTHM} can be extended to
${\displaystyle
 H = \sum_{j=0}^{n_{A}} X_j \otimes G_j = \sum_{j=0}^{n_{A}}  X^\dagger_j \otimes G^\dagger_j}$.  
Using both of these expression for $H$, 
$$ \cK \cV R = -i\left( H (\pi_A\otimes R)  - (\pi_A\otimes R)  H\right)
=  -i\sum_{j=0}^{n_A} \left( X_j \pi_A \otimes G_j R  - \pi_A   X^\dagger_j \otimes R G^\dagger_j \right)\ ,$$
 and
   since $\cS(Y_0\otimes \left(G_0 R - R G_0^\dagger\right) =0$,
 ${\displaystyle
 \cS \cK (\pi_A\otimes R)  = -i\sum_{j=1}^{n_A} \cS \left( X_j \pi_A \otimes G_j R  - \pi_A   X^\dagger_j \otimes R G^\dagger_j \right)}$.
 Using the expression for $\cS$ provided in Theorem~\ref{ALTS}, namely
 ${\displaystyle 
 \cS  = \sum_{k=1}^{n_A} |Y_k\rangle  \langle \cS^\dagger X_k|}$.
\begin{eqnarray*}
\cS  (X_j \pi_A \otimes G_j R) = \sum_{k=1}^{n_A}  \langle \cS ^\dagger X_k, X_j\pi_A\rangle Y_k \otimes G_jR
= \sum_{k=1}^{n_A} M_{k,j}  Y_k \otimes G_jR\ .
\end{eqnarray*}
Using the same expression for $\cS$, but now for the dual bases $\{Y_0^\dagger,\dots,Y_{n_A}^\dagger\}$ and  $\{X_0^\dagger,\dots,X_{n_A}^\dagger\}$,
\begin{eqnarray*}
\cS  ( \pi_A X_j^\dagger  \otimes R G_j^\dagger) = \sum_{k=1}^{n_A}  \langle \cS^\dagger (X_k^\dagger), (X_j\pi_A)^\dagger\rangle  Y_k^\dagger \otimes RG_j^\dagger\ .
\end{eqnarray*}
By cyclicity of the trace,
$
\langle e^{t\cD_A^\dagger}X_k^\dagger, (X_j\pi_A)^\dagger\rangle =   \tr[(X_j \pi_A )^\dagger e^{t\cD_A^\dagger}(X_k)] = \overline{\langle e^{t\cD_A^\dagger}X_k, X_j\pi_A\rangle}
$. (For $z\in \C$, $\bar z$ denotes its complex conjugate.) Integrating in $t$, 
${\displaystyle 
 \langle \cS^\dagger (X_k^\dagger), (X_j\pi_A)^\dagger\rangle  = \overline{ \langle \cS ^\dagger X_k, X_j\pi_A\rangle }}$.
Altogether,
$$
\cS \cK (\pi_A\otimes R)  = i\sum_{j,k=1}^{n_A} \left(M_{k,j}  Y_k \otimes G_jR - \overline{M_{k,j}} Y_k^\dagger \otimes RG_j^\dagger\right)\ .
$$
Therefore,
 \begin{eqnarray*}
 \cK \cS \cK (\pi_A\otimes R)  &=& \sum_{\ell=0}^{n_A}\sum_{j,k=1}^{n_A}   M_{k,j} \left(  X^\dagger _\ell Y_k  \otimes G^\dagger_\ell G_j R   -   Y_k X^\dagger_\ell \otimes  G_j R G^\dagger_\ell \right)\nonumber\\
 &-& \sum_{\ell=0}^{n_A}\sum_{j,k=1}^{n_A}   \overline{M_{k,j}}  \left( X_\ell  Y^\dagger_k\otimes   G_\ell  R G^\dagger_j   -  Y^\dagger_kX_\ell \otimes R G^\dagger_j G_\ell \right)\ .
 \end{eqnarray*}
Taking the partial trace over $\cH_A$, which produces a factor of $\delta_{k,\ell}$ in each term,
 \begin{equation}\label{KSKCALC}
 -\tr_A[ \cK \cS \cK (\pi_A\otimes R)] = \sum_{j,k=1}^{n_A} M_{k,j}  \left( G_j R G_k^\dagger -  G^\dagger_k G_j R  \right) 
 +\sum_{j,k=1}^{n_A}  \overline{M_{k,j}}   \left(    G_k R G^\dagger_j   -   R G^\dagger_j  G_k\right)\ .
 \end{equation}
 
 Note that 
 ${\displaystyle 
  \sum_{j,k=1}^{n_A}  \overline{M_{k,j}}   G_k R G^\dagger_j  = \sum_{j,k=1}^{n_A}  \overline{M_{j,k}}     G_k R G^\dagger_j  = \sum_{j,k=1}^{n_A}  M_{k,j}^\dagger   G_k R G^\dagger_j}$.
 In the same way, \hfill\break
 ${\displaystyle  \sum_{j,k=1}^{n_A}  \overline{M_{k,j}} R G^\dagger_j  G_k = \sum_{j,k=1}^{n_A}  M_{k,j}^\dagger R G^\dagger_k  G_j =
 \sum_{j,k=1}^{n_A}  (A -iB)_{k,j}^\dagger R G^\dagger_k  G_j}$. 
 Therefore, \eqref{KSKCALC} becomes 
 \begin{equation}\label{ALMLIND}
 -\tr_A[ \cK \cS \cK (\pi_A\otimes R)] = \sum_{j,k=1}^{n_A} A_{k,j} \left(2 G_j R G_k^\dagger - G^\dagger_k G_j R   -   R G^\dagger_k  G_j\right)
  -i[H_L,R]
\end{equation}
where
${\displaystyle
H_L := \sum_{j,k=1}^{n_A} B_{j,k} G^\dagger_k G_j}$.  Now assuming that $A$ is positive semidefinite, it has a spectral decomposition of 
the form \eqref{JOADEF} where $\mu_\ell > 0$ for $\ell = 1,\dots,r$. 
Substituting this into \eqref{ALMLIND} yields \eqref{PEPSLG2}. 
\end{proof}

It remains to prove that the self-adjoint part of the matrix $M$ defined in \eqref{MMATDEF} is positive semidefinite. We use the Gelfand-Naimark-Segal (GNS) inner product associated to $\pi_A$, defined as follows: 

Let $\cH$ be a finite dimensional Hilbert space, and let $\mu$ ve a density matrix on $\cH$. For operators $X$ and $Y$ on $\cH$, their GNS inner product $\langle X,Y\rangle_\mu$ is given by
\begin{equation}\label{GNSDEF}
\langle X,Y\rangle_\mu := \tr[X^*Y\mu] = \langle X\mu^{1/2},Y\mu^{1/2}\rangle_{\widehat{\cH}}\ .
\end{equation}
If $\cL$ is a Lindblad generator acting on $\widehat{\cH}$ with invariant state $\pi$, it is 
said to satisfy the {\em GNS detailed balance condition} with respect 
to $\pi$ if its  Hilbert-Schmidt adjoint $\cL^\dagger$ is self-adjoint
with respect to the GNS inner product associated to $\pi$. In the
quantum setting, as in the classical, detailed balance corresponds to
microscopic reversibility \cite{Ar73}. In the
quantum setting, there are other way to weight the inner product, for
example $\langle
\mu^{1/4}X\mu^{1/4},\mu^{1/4}Y\mu^{1/4}\rangle_{\widehat{\cH}}$, which
defines the Kubo-Martin-Schwinger (KMS) inner product and another
notion of detailed balance \cite{FU10}. The GNS notion of detailed balance is the
most restrictive \cite{AC21}, and a theorem of Alicki \cite{A78} gives
the form of all Lindblad generators satisfying this condition, which is very helpful \cite{CM17} when studying the rate of approach to stationarity. 

It is easy to see that when $\cD_A^\dagger$ is 
self-adjoint with respect to the GNS inner product induced by $\pi_A$,
which is the case in the example considered in \cite{PEPS}, the matrix
$A$ specified  in Lemma~\ref{LINDINGEN} is 
diagonal with non-negative entries and $H_L$ in \eqref{PEPSLG2} will always be identically zero. 
The following general inequality will be shown to yield the
positivity of $A$  without any extra assumption on $\cD_A$.

\begin{thm}\label{POSINEQINT} Let $\cD_A$ be an  ergodic and gapped Lindblad generator on $\cH_A$ with steady state $\pi_A$. 
Define $\cP_A$, $\cQ_A$ and $\cS_A$ in  terms of $\cD_A$ as in \eqref{PADEF}, \eqref{QADEF} and \eqref{SADEF}. 
Then for all $Z\in \widehat{\cH}_A$, 
\begin{equation*}
\langle \cS_A^\dagger Z,\cQ_A^\dagger Z\rangle_{\pi_A} + \langle \cQ_A^\dagger Z, \cS_A^\dagger Z\rangle_{\pi_A} \leq 0\ 
\end{equation*}
where the inner products are taken in the GNS inner product associated to $\pi_A$, as in \eqref{GNSDEF}. 
\end{thm}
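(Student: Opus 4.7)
The idea is to reformulate the inequality as a dissipativity statement for the generator $\cD_A^\dagger$ in the GNS inner product, and then to derive that dissipativity from the fact that $e^{t\cD_A^\dagger}$ is a unital completely positive semigroup whose Hilbert--Schmidt adjoint semigroup preserves $\pi_A$.

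First I would set $v := \cS_A^\dagger Z$ and $W := \cQ_A^\dagger Z$. Taking the Hilbert--Schmidt adjoint of the identity \eqref{SAGENINV} yields $\cD_A^\dagger \cS_A^\dagger = \cQ_A^\dagger$, so $\cD_A^\dagger v = W$. Substituting and using the fact that the GNS inner product is conjugate-symmetric (since $\pi_A$ is self-adjoint), the left-hand side of the target inequality becomes
\[
\langle v, \cD_A^\dagger v\rangle_{\pi_A} + \langle \cD_A^\dagger v, v\rangle_{\pi_A} = 2\,\Re\,\langle v, \cD_A^\dagger v\rangle_{\pi_A}\ .
\]
The theorem thus reduces to the GNS-dissipativity assertion $\Re\,\langle v, \cD_A^\dagger v\rangle_{\pi_A} \leq 0$ for every $v \in \widehat{\cH}_A$.

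To establish this, I would show that $T_t := e^{t\cD_A^\dagger}$ is a contraction on the Hilbert space $(\widehat{\cH}_A,\langle\cdot,\cdot\rangle_{\pi_A})$. Since $e^{t\cD_A}$ is CPTP, its Hilbert--Schmidt adjoint $T_t$ is completely positive, and a short computation with the explicit Lindblad form of $\cD_A$ gives $\cD_A^\dagger \one = 0$, so $T_t$ is unital. The Kadison--Schwarz inequality then yields $T_t(X^*X) \geq T_t(X)^*T_t(X)$ for every $X$. Pairing with $\pi_A$ and using that its Hilbert--Schmidt adjoint semigroup $T_t^\dagger = e^{t\cD_A}$ fixes $\pi_A$,
\[
\|T_t X\|_{\pi_A}^2 = \tr[T_t(X)^* T_t(X) \pi_A] \leq \tr[T_t(X^*X)\pi_A] = \tr[X^*X\,\pi_A] = \|X\|_{\pi_A}^2\ .
\]
Differentiating $\|T_t v\|_{\pi_A}^2 \leq \|v\|_{\pi_A}^2$ at $t=0$ (where the two sides coincide) gives $2\,\Re\,\langle v, \cD_A^\dagger v\rangle_{\pi_A} \leq 0$, completing the argument.

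The only real obstacle is the Kadison--Schwarz contractivity step: one must be sure that the unital CP map is placed correctly so that invariance of $\pi_A$ under the \emph{dual} semigroup converts the operator inequality into a scalar one. Once that is in hand, the rest is bookkeeping around the generalized inverse $\cS_A$ and its factorization $\cS_A^\dagger = \cS_A^\dagger \cQ_A^\dagger$, which is why the hypothesis enters only through $\cQ_A^\dagger Z$ rather than $Z$ itself.
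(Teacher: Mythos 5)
Your proposal is correct and follows essentially the same route as the paper: both arguments reduce the claim to the statement $\Re\,\langle W,\cD_A^\dagger W\rangle_{\pi_A}\leq 0$ with $W=\cS_A^\dagger Z$, and both derive it from the operator Schwarz inequality $T_t(X)^\dagger T_t(X)\leq T_t(X^\dagger X)$ for the unital completely positive semigroup $T_t=e^{t\cD_A^\dagger}$ together with invariance of $\pi_A$ under $e^{t\cD_A}$. The only (immaterial) difference is the order of operations — the paper differentiates the operator inequality at $t=0$ first to get Lindblad's dissipation inequality and then pairs with $\pi_A$, whereas you pair with $\pi_A$ first to get GNS-contractivity of $T_t$ and then differentiate.
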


\begin{proof}Since $e^{t\cD_A^\dagger}$  completely positive and $e^{t\cD_A^\dagger}\one_A = \one_A$, an inequality of Lieb and Ruskai \cite{LR74} (a slightly more general form was proved by  Choi \cite[Corollary 2.8]{Choi74}, but we only need the original; see \cite[Section 5.6]{C25}) says that for all operators $W$ on $\cH_A$,
$\left(e^{t\cD_A^\dagger}W\right)^\dagger \left( e^{t\cD_A^\dagger}W\right)   \leq  e^{t\cD_A^\dagger}(W^\dagger W)$.
Differentiating in $t$ at $t=0$ yields
\begin{equation}\label{LRC}
\cD_A^\dagger(W)^\dagger W + W^\dagger \cD_A^\dagger(W) \leq \cD_A(W^\dagger W)\ .
\end{equation}
The inequality \eqref{LRC}, derived in this manner, also appears in Lindblad's work \cite[(3.1)]{Lin76} on the generators of CPTP semigroups. Now apply \eqref{LRC} as follows:

Define $W := \cS_A^\dagger Z$, so that $\cD_A^\dagger W = \cQ_A^\dagger Z$. Then 
\begin{eqnarray*}
\langle \cS_A^\dagger (Z),\cQ_A^\dagger Z\rangle_{\pi_A} + \langle \cQ_A^\dagger Z, \cS_A^\dagger(Z)\rangle_{\pi_A} &=& 
\langle W,\cD_A^\dagger W\rangle_{\pi_A} + \langle \cD_A^\dagger W, W\rangle_{\pi_A}\\
&=& \tr\left[\left(W^\dagger \cD_A^\dagger(W) + \cD_A^\dagger(W^\dagger)W\right)\pi_A\right]\\
&\leq& \tr\left[\left( \cD_A^\dagger(W^\dagger W) \right)\pi_A\right]\\
&=& \langle \pi_A, \cD_A^\dagger(W^\dagger W) \rangle = \langle \cD_A(\pi_A), W^\dagger W\rangle = 0\ ,
\end{eqnarray*}
where the only inequality is \eqref{LRC}. 
\end{proof}

\begin{thm}\label{ALWAYS} Let $\cD_A$ be gapped and ergodic, and let 
$\cD_P = -\tr_A \cK \cS \cK \cV$ as in \eqref{DPDEFX}. Then $\cD_P$ is the Lindblad generator given by  \eqref{PEPSLG2} and \eqref{PEPSLG3}. 
\end{thm}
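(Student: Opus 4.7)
By Lemma~\ref{LINDINGEN} it suffices to verify that the Hermitian matrix
$A = \tfrac12(M+M^\dagger)$ defined by $M_{j,k} = -\langle \cS_A^\dagger X_k, X_j\pi_A\rangle$ is positive semidefinite; everything else (the Lindblad form \eqref{PEPSLG2}--\eqref{PEPSLG3}) follows automatically once positivity of $A$ is in hand. The whole burden of the theorem is therefore to upgrade the abstract inequality of Theorem~\ref{POSINEQINT} into the matrix-level statement $A\geq 0$, without any extra assumptions on $\cD_A$.

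The plan is to translate everything into the GNS inner product $\langle U,V\rangle_{\pi_A} = \tr[U^*V\pi_A]$ of \eqref{GNSDEF}. Observe first that the matrix entries may be written as $M_{j,k} = -\langle \cS_A^\dagger X_k, X_j\rangle_{\pi_A}$. Given $v\in\C^{n_A}$, set
\begin{equation*}
Z := \sum_{k=1}^{n_A} \bar v_k\, X_k \in \widehat{\cH}_A\ .
\end{equation*}
A short bilinearity check (choosing the conjugation conventions so that $\bar v_j v_k$ appears) gives
$v^\dagger M v = -\langle \cS_A^\dagger Z, Z\rangle_{\pi_A}$, and hence
\begin{equation*}
v^\dagger A v \;=\; \operatorname{Re}(v^\dagger M v) \;=\; -\operatorname{Re}\,\langle \cS_A^\dagger Z, Z\rangle_{\pi_A}\ .
\end{equation*}
So the goal becomes: $\operatorname{Re}\langle \cS_A^\dagger Z, Z\rangle_{\pi_A}\leq 0$ for every $Z\in\widehat{\cH}_A$ of the above form (and in fact for every $Z$).

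Now I would close the gap between this and Theorem~\ref{POSINEQINT} using the identity $\cP_A\cS_A = 0$. This identity is immediate: by \eqref{SADEF}, $\cS_A Y = -\int_0^\infty e^{t\cD_A}\cQ_A Y\,{\rm d}t$, and because $e^{t\cD_A}$ is trace preserving while $\tr[\cQ_A Y] = \tr[Y]-\tr[Y] = 0$, we get $\tr[\cS_A Y] = 0$, i.e.\ $\cP_A\cS_A = 0$. Taking adjoints yields $\cP_A^\dagger \cS_A^\dagger = 0$, so $\tr[\pi_A \cS_A^\dagger Z] = 0$ for all $Z$. Since $\cP_A^\dagger Z = \tr[\pi_A Z]\,\one$ is a multiple of the identity, this computation shows
\begin{equation*}
\langle \cS_A^\dagger Z, \cP_A^\dagger Z\rangle_{\pi_A} = \tr[\pi_A Z]\,\tr[(\cS_A^\dagger Z)^* \pi_A] = 0\ ,
\end{equation*}
and consequently $\langle \cS_A^\dagger Z, Z\rangle_{\pi_A} = \langle \cS_A^\dagger Z, \cQ_A^\dagger Z\rangle_{\pi_A}$. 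Applying Theorem~\ref{POSINEQINT} to the right-hand side gives $\operatorname{Re}\langle \cS_A^\dagger Z, Z\rangle_{\pi_A}\leq 0$, whence $v^\dagger A v \geq 0$. This proves $A\geq 0$ and finishes the proof via Lemma~\ref{LINDINGEN}.

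The only subtle step is the orthogonality $\langle \cS_A^\dagger Z,\cP_A^\dagger Z\rangle_{\pi_A}=0$, which is what lets us upgrade $\cS_A^\dagger Z$ paired with $Z$ into the exact form $\cS_A^\dagger Z$ paired with $\cQ_A^\dagger Z$ that Theorem~\ref{POSINEQINT} provides. I expect this to be the main (but mild) obstacle: the reader has to be guided through the right GNS-level book-keeping, since under a different weighting of the inner product (e.g.\ KMS), the analogous orthogonality would fail and one would need a different argument. All remaining steps are either bookkeeping or a direct invocation of the previously established lemmas.
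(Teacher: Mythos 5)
Your proposal is correct and follows essentially the same route as the paper: reduce to $A\geq 0$ via Lemma~\ref{LINDINGEN}, write $v^\dagger A v = -\Re\langle \cS_A^\dagger Z, Z\rangle_{\pi_A}$ for $Z$ a linear combination of the $X_k$, and invoke Theorem~\ref{POSINEQINT}. The paper dispenses with your orthogonality step by simply observing that $Z=\cQ_A^\dagger Z$, since each $X_k$ with $k\geq 1$ is fixed by $\cQ_A^\dagger=\sum_{j\geq 1}|X_j\rangle\langle Y_j|$; your detour through $\langle \cS_A^\dagger Z,\cP_A^\dagger Z\rangle_{\pi_A}=0$ is valid but unnecessary. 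One small slip there: the adjoint of $\cP_A\cS_A=0$ is $\cS_A^\dagger\cP_A^\dagger=0$, not $\cP_A^\dagger\cS_A^\dagger=0$; what you actually need is the adjoint of $\cS_A\cP_A=0$, which also holds because $\cS_A=\cS_A\cQ_A=\cQ_A\cS_A$ commutes with the complementary projections, so your conclusion stands.
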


\begin{proof} By Lemma~\ref{LINDINGEN} it suffices to prove that
for arbitrary
$(z_1,\dots,z_{n_A})\in \C^{n_A}$,
${\displaystyle 
\sum_{j,k=1}^{n_A} \overline{z_k} A_{k,j} z_j \geq 0}$.
By \eqref{MMATDEF} and cyclicity of the trace,
${\displaystyle
A_{k,j} = -\frac12\left(\langle \cS ^\dagger X_k, X_j\pi_A\rangle  + \langle X_j,\cS ^\dagger X_k\pi_A\rangle\right)}$.
Define ${\displaystyle Z := \sum_{j=1}^{n_A}z_j  X_j}$, and note that $Z = \cQ^\dagger Z$.  Then 
\begin{eqnarray}\label{poscrit3}
\sum_{j,k=1}^{n_A} \overline{z_k} A_{k,j} z_j &=&   -\frac12\sum_{j,k=1}^{n_A}  \left(  \left\langle z_j \cS^\dagger X_j,z_k X_k\pi_A\right\rangle^{}_{\text{\tiny HS}} +
\left\langle z_j X_j,z_k \cS^\dagger X_k\pi_A\right\rangle^{}_{\text{\tiny HS}}\right)\nonumber\\
&=&   -\frac12 \left(\left\langle \cS^\dagger(Z), \cQ^\dagger Z\pi_A\right\rangle^{}_{\text{\tiny HS}}  + \left\langle \cQ^\dagger Z,  \cS^\dagger(Z)\pi_A\right\rangle^{}_{\text{\tiny HS}}\right)\ .
 \end{eqnarray}
 By Theorem~\ref{POSINEQINT} the left side of \eqref{poscrit3} is non-negative.  \end{proof}

 \section{Approximate equations of motion}\label{APPROXEQ}
 
 In this section we present a simple and rigorous  derivation of the approximate equations of motion of \cite{ZC} and \cite{PEPS} that provides explicit error  estimates. We first consider initial data in $\cM$.

 \begin{lem}\label{COHERENTSC} Suppose that $\cD_A$ is ergodic with steady state $\pi_A$ and spectral gap $a>0$.  Let $R_0$ be any density matrix on $\cH_B$. 
 Let ${\displaystyle \cL_{P,\gamma} = \cK_P + \frac{1}{\gamma}\cD_P}$ as in \eqref{LPGDEFX}.  
 Then there exists a constant $C$ depending only on $\cD_A$  and the operator norm of $H$  such for all $T>0$, 
  \begin{equation}\label{COHERENTSC1}
 \| \tr_A  e^{t\cL_\gamma}\pi_A\otimes R_0 - e^{t\cL_{P,\gamma}}R_0\|_1 \leq  \frac{1}{\gamma^2} CT  \ 
 \end{equation}
 for all $0 \leq t \leq T$. 
 \end{lem}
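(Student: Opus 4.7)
The plan is to apply Lemma~\ref{COMPLEM} with $\cL = \cL_{P,\gamma}$, which is a Lindblad generator by Theorem~\ref{ALWAYS} so that $e^{t\cL_{P,\gamma}}$ is CPTP and contractive in trace norm. Writing $\dot R(t) = \cL_{P,\gamma} R(t) + F(t)$, the result will follow from the estimate $\int_0^T \|F(s)\|_1\,{\rm d}s \leq CT/\gamma^2$.

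First I identify $F$. Since $\tr_A \cD = 0$ (because $\cD = \cD_A \otimes \one_B$ and $\tr \cD_A X = 0$), one has $\dot R(t) = \tr_A \cK \rho(t)$. Splitting $\rho = \cP\rho + \cQ\rho = \cV R + \cQ\rho$ and using Lemma~\ref{CommLem} yields $\dot R = \cK_P R + \tr_A \cK \cQ\rho$, so that
\[
F(t) = \tr_A \cK\, \cQ \rho(t) + \gamma^{-1}\tr_A \cK \cS \cK \cV R(t) = \tr_A \cK\, N(t),
\]
where $N(t) := \cQ\rho(t) + \gamma^{-1}\cS\cK\cV R(t)$ measures the deviation of $\cQ\rho(t)$ from its putative quasi-steady value $-\gamma^{-1}\cS\cK\cV R(t)$. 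Because $\cQ\cS = \cS$, the operator $N(t)$ lies in the range of $\cQ$.

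Next I derive an evolution equation for $N$. Differentiating and combining $\dot\rho = \cL_\gamma\rho$, the identity $\cD\cS = \cQ$, the expression for $\dot R$ above, and $\cV\tr_A = \cP$, a direct calculation gives $\dot N = \gamma\cD N + G$ with
\[
G(t) := \cQ\cK\cQ\rho(t) + \gamma^{-1}\cS\cK\cV\cK_P R(t) + \gamma^{-1}\cS\cK\cP\cK\cQ\rho(t).
\]
Because $\rho_0 = \pi_A\otimes R_0 \in \cM$, estimate \eqref{SOM2S} of Theorem~\ref{TZCVS} gives $\|\cQ\rho(t)\|_1 \leq C_1/\gamma$ uniformly in $t$. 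Together with the boundedness of $\cS$, $\cV$, $\cP$ (the latter two are CPTP) and the bounds \eqref{KBND} and \eqref{KPBND} on $\cK$ and $\cK_P$, this yields $\|G(s)\|_1 \leq C/\gamma$, and also $\|N(0)\|_1 = \gamma^{-1}\|\cS\cK\cV R_0\|_1 \leq C'/\gamma$.

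Finally, Duhamel gives $N(t) = e^{\gamma t\cD}N(0) + \int_0^t e^{\gamma(t-s)\cD}G(s)\,{\rm d}s$; since $N(0)$ and $G(s)$ both lie in the range of $\cQ$, the gap estimate $\|e^{\gamma t\cD}\cQ\|_{1\to 1}\leq C_0 e^{-\gamma a t/2}$ (already used to prove Theorem~\ref{TZCVS}) produces the pointwise bound $\|N(t)\|_1 \leq C_1\gamma^{-1} e^{-\gamma a t/2} + C_2\gamma^{-2}$. Integrating $\|F(s)\|_1\leq \|\cK\|_{1\to 1}\|N(s)\|_1$ from $0$ to $T$, the exponential transient contributes $O(\gamma^{-2})$ after integration (via $\int_0^T e^{-\gamma a s/2}\,{\rm d}s \leq 2/(\gamma a)$) and the constant contributes $O(T/\gamma^2)$, giving $\int_0^T\|F(s)\|_1\,{\rm d}s \leq C(1+T)/\gamma^2$, from which the claim follows after absorbing the constant into $CT$. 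Lemma~\ref{COMPLEM} then concludes. The principal obstacle is the careful derivation of $\dot N = \gamma\cD N + G$ (with its repeated use of $\cQ\cS = \cS$, $\cD\cS = \cQ$, $\cV\tr_A = \cP$) and the verification that $G = O(1/\gamma)$ rather than $O(1)$; the crucial input here is the uniform-in-time bound \eqref{SOM2S} on $\|\cQ\rho(t)\|_1$, which is available precisely because $\rho_0 \in \cM$.
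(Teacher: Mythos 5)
Your proof is correct, and it takes a genuinely different route from the paper's. The paper works with the Volterra integral form: it iterates Duhamel's formula for $Y(t)=e^{t\cL_\gamma}\pi_A\otimes R_0$, substitutes $\cQ Y(t)=\gamma^{-1}\int_0^{\gamma t}e^{r\cD}\cQ\cK Y(t-r/\gamma)\,{\rm d}r$, and controls the time-shift error $Y(t-r/\gamma)-Y(t)$ via the global Lipschitz bound of Lemma~\ref{LipLEM}, together with the tail of the integral defining $\cS$. You instead pass to the differential form and introduce the first-order corrector $N(t)=\cQ\rho(t)+\gamma^{-1}\cS\cK\cV R(t)$, show (I checked the algebra: with $\cD\cS=\cQ$, $\cQ\cD=\cD$, $\cV\tr_A=\cP$ it comes out exactly as you state) that $\dot N=\gamma\cD N+G$ with $G=\mathcal{O}(\gamma^{-1})$ supported in ${\rm ran}(\cQ)$, and then exploit the decay $\|e^{\gamma t\cD}\cQ\|_{1\to1}\leq C_0e^{-\gamma at/2}$ to get $\|N(t)\|_1=\mathcal{O}(\gamma^{-2})$ after a fast transient; since $F=\tr_A\cK N$, Lemma~\ref{COMPLEM} applied to $\cL_{P,\gamma}$ (legitimately a Lindblad generator by Theorem~\ref{ALWAYS}) finishes. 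Both proofs rest on the same key inputs --- the gap estimate of Theorem~\ref{GENINVTNBLM}, the uniform bound \eqref{SOM2S} on $\|\cQ\rho(t)\|_1$ (available because $\rho_0\in\cM$), and the trace-norm contractivity underlying Lemma~\ref{COMPLEM} --- but your corrector argument dispenses with Lemma~\ref{LipLEM} and with the change-of-variables/truncation bookkeeping, at the price of having to verify the evolution equation for $N$; it is arguably the cleaner way to make the first-order Hilbert/Chapman--Enskog step rigorous, and it yields the same quality of constants (depending only on $\cD_A$ and $\|H\|_\infty$). One cosmetic remark: like the paper's own proof, your final estimate is really $C(1+T)/\gamma^{2}$ rather than $CT/\gamma^{2}$, because of the $T$-independent transient contribution; this is harmless in all applications (where $T$ is replaced by $\gamma T$), but worth stating honestly.
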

 
\begin{proof}
 Let $R_0$ be any density matrix on $\cH_B$. Define $R(t) := \tr_A  e^{t\cL_\gamma}\pi_A\otimes R_0$ so that $\cP (e^{t\cL_\gamma}\pi_A\otimes R_0) = \pi_A\otimes R(t)$.
 Define $Y(t) = e^{t\cL_\gamma}\pi_A\otimes R_0$  so that  $R(t) := \tr_A Y(t)$. 
 By Duhammel's formula \eqref{DuHam1},  and $e^{t\gamma \cD}\pi_A\otimes R_0 = \pi_A\otimes R_0$,
 \begin{equation}\label{DuHamFoPr}
 Y(t) =   Y(0) + \int_0^t e^{(t-s)\gamma \cD}\cK Y(s) {\rm d}s\ .
 \end{equation}
Since $\cD = \cD_A\otimes \one$ and   $\tr_A \cD_A =0$,  taking the partial trace over $\cH_A$ throughout 
\eqref{DuHamFoPr} yields
 \begin{equation}\label{DuHamPr}
 R(t)   = R_0 + \int_0^t \tr_A \cK e^{s\cL_\gamma}\pi_A\otimes R_0 {\rm d}s\ .
 \end{equation}
 
 Since $\cP+ \cQ =\one$ and $\cP = \cV\tr_A$,
 ${\displaystyle 
 e^{s\cL_\gamma}\pi_A\otimes R_0  = \cV R(s) + \cQ e^{s\cL_\gamma}\pi_A\otimes R_0}$.
 Inserting this in \eqref{DuHamPr}, and using Lemma~\ref{CommLem},
 \begin{equation}\label{DuHamPr2}
 R(t)   = R_0 + \int_0^t \cK_P R(s) {\rm d}s + \int_0^t \tr_A \cK \cQ Y(s) {\rm d}s\ .
 \end{equation}

  Applying $\cQ$ to both sides of \eqref{DuHamFoPr}, and making a simple change of the integration variable,
 $$
 \cQ Y(t) =  \frac{1}{\gamma}  \int_0^{\gamma t} e^{r \cD}\cQ \cK Y(t- r/\gamma) {\rm d}r\ .
 $$
 Define $W(t,r) := Y(t- r/\gamma) - Y(t)$. Then
 ${\displaystyle 
 \cQ Y(t) =  \frac{1}{\gamma}  \int_0^{\gamma t} e^{r \cD}\cQ \cK Y(t) {\rm d}r +  \frac{1}{\gamma}  \int_0^{\gamma t} e^{r \cD}\cQ \cK W(t,r) {\rm d}r}$.
 Inserting this in \eqref{DuHamPr2} yields
 \begin{equation*}
 R(t)   = R_0 + \int_0^t \cK_P R(s) {\rm d}s + \int_0^t \tr_A \cK \left( \frac{1}{\gamma}  \int_0^{\gamma s} \left(e^{r \cD}\cQ \cK {\rm d}r \right) Y(s) +  \frac{1}{\gamma}  \int_0^{\gamma s} e^{r \cD}\cQ \cK W(t,r) {\rm d}r \right) {\rm d}s\ .
 \end{equation*}
 In the first double integral, we again use $Y(s) = \cP Y(s) + \cQ Y(s) = \cV R(s) + \cQ Y(s)$ to obtain
 \begin{equation}\label{DuHamPr2V1B}
 R(t)   = R_0 + \int_0^t  \left(\cK_P  +   \frac{1}{\gamma}  \int_0^{\gamma s} \tr_A \cK e^{r \cD}\cQ \cK \cV {\rm d}r \right)R(s){\rm d}s +  \int_0^t F_0(s){\rm d}s \ 
 \end{equation}
 where 
${\displaystyle  F_0(s) :=    \frac{1}{\gamma}    \int_0^{\gamma s} \tr_A \cK\left(  e^{r \cD}\cQ \cK(\cQ Y(s)+ W(t,r)\right)  {\rm d}r}$.

 By Lemma~\ref{LipLEM}, $\|W(t,r)\|_1 \leq  Lr/\gamma$. By Theorem~\ref{GENINVTNBLM} in the appendix, there is a constant $C$ depending only on $\cD_A$ such that $\|\cQ e^{t\cD}\|_{1\to 1}\leq Ce^{-ta/2}$. 
 Therefore,
 $$
 \left\Vert  \frac{1}{\gamma}  \int_0^{\gamma s} \cK  e^{r \cD}\cQ \cK W(t,r) {\rm d}r \right\Vert_1 \leq \|\cK\|_{1\to 1}^2 \frac{CL}{\gamma^2}\int_0^\infty r e^{-ta/2}{\rm d} r\ .
 $$
 and again by Theorem~\ref{TZCVS}, $\|\cK\ e^{r \cD}\cQ \cK(\cQ Y(s)\|_1 \leq 4\|\cK\|^2_{1\to1} \frac{C_1}{\gamma}$. Altogether there is a constant $C$ depending only on $\cD_A$ and $\|\cK\|_{1\to1}$ such that 
 for all $s \geq 0$, 
  \begin{equation}\label{DuHamPr2V3}
 \|F_0(s) \|_1  \leq \frac{C}{\gamma^2}\ .
  \end{equation}

 Define ${\displaystyle F_1(s) :=  \left(\frac{1}{\gamma} \cD_P - \frac{1}{\gamma}  \int_0^{\gamma s} \tr_A \cK e^{r \cD}\cQ \cK \cV {\rm d}r\right)R(s)}$, so that with $F(t) = F_0(t) + F_1(t)$, 
 \eqref{DuHamPr2V1B} becomes
  \begin{equation}\label{DuHamPr2VJ}
 R(t)   = R_0 + \int_0^t  \left(\cK_P  +   \frac{1}{\gamma}  \cD_P \right)R(s){\rm d}s +  \int_0^t F(s){\rm d}s \ .
 \end{equation}

 Since ${\displaystyle   -\cS\cQ\cK \cV  = \int_0^\infty e^{r \cD}\cQ \cK \cV {\rm d}r}$, 
 $$
\left\Vert  \cD_P   -  \int_0^{\gamma s} \tr_A \cK e^{r \cD}\cQ \cK \cV {\rm d}r \right\Vert _{1\to 1}   =\left\Vert \int_{s\gamma}^\infty \tr_A \cK e^{r \cD}\cQ \cK \cV {\rm d}r  \right\Vert_{1\to1}  \leq 
2\|\cK\|^2_{1\to 1} \int_{\gamma s}^\infty \| e^{r \cD}\cQ \|_{1\to 1}{\rm d}r 
 $$
 Again by Theorem~\ref{GENINVTNBLM} proved in the appendix,  for a finite constant $C$ depending only on $\cD_A$ $
    \|e^{r\gamma \cD}\cQ \|_{1\to 1} \leq  Ce^{- r\gamma a/2}$.
 Therefore,
 ${\displaystyle
 \int_0^t \| F_1(s)\|_{1} {\rm d}s \leq  \frac{4\|\cK\|^2_{1\to 1} C}{\gamma^2 a}}$. Combining this with \eqref{DuHamPr2V3} yields 
 ${\displaystyle \int_0^t \|F(s)\|_1{\rm d}s \leq \frac{C}{\gamma^2}}$ for a constant $C$ depending only on $\cD_A$ and $\|\cK\|_{1\to 1}$. 
  Now differentiate \eqref{DuHamPr2VJ} and apply Lemma~\ref{COMPLEM} and \eqref{KBND}.
\end{proof}

\begin{rem} Since differentiating \eqref{DuHamPr2VJ}
 yields ${\displaystyle \frac{{\rm d}}{{\rm d}t}R(t) = \cL_{P,\gamma} R(t) + F(t)}$. 
The estimates obtained above show that  $F(t) = {\mathcal O}(\gamma^{-2})$. Therefore,
$
 {\displaystyle \frac{{\rm d}}{{\rm d}t}R(t)  =  (\cK_P + \gamma^{-1}\cD_P)R(t) + {\mathcal O}(\gamma^{-2})}$, 
which by \eqref{INTR6} is the same as \eqref{INTR7}.     
\end{rem}

While Lemma~\ref{COHERENTSC} applies to initial data in $\cM$, we may extend its domain of application to general initial data using Theorem~\ref{SOMS} and Theorem~\ref{EULLIM}, yielding the following theorem: 

\begin{thm}\label{MTILRM} Suppose that $\cD_A$ is ergodic with steady state $\pi_A$ and spectral gap $a>0$.
Let $\rho_0$ be any density matrix on $\cH_{AB}$, and define $R_0 := \tr_A[\rho_0]$ so that
$\cP \rho_0 = \pi_A\otimes R_0$. Let $a$ denote the spectral gap of $\cD_A$.  For any $\epsilon > 0$ and any $T>0$,
\begin{equation}\label{BTM4}
 \| e^{t\cL_\gamma}\rho_0 - \pi_A\otimes e^{t \cL_{P,\gamma}}R_0\|_1 \leq C \frac{\log(1+\gamma) +1+T}{\gamma}  ,
\end{equation}
uniformly on $[\epsilon, \gamma T]$ for any $0 < \epsilon < T$ for all $\gamma$ such that $ \frac{2\log(\gamma)}{a\gamma} \leq  \epsilon$.
\end{thm}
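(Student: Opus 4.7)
The plan is to assemble the bound from the three complementary estimates already established, via the triangle inequality decomposition announced in the introduction:
\begin{align*}
\| e^{t\cL_\gamma}\rho_0 - \pi_A\otimes e^{t \cL_{P,\gamma}}R_0\|_1 &\leq \| e^{t\cL_\gamma}\rho_0 - e^{t \cL_{\gamma}}\pi_A\otimes R_0\|_1 \\
  &+ \|  e^{t \cL_{\gamma}}\pi_A\otimes R_0 - \pi_A\otimes R(t)\|_1 \\
   &+ \|  \pi_A\otimes R(t) - \pi_A\otimes e^{t \cL_{P,\gamma}}R_0\|_1 \ ,
\end{align*}
where, following the setup of Lemma~\ref{COHERENTSC}, $R(t) := \tr_A e^{t\cL_\gamma}(\pi_A\otimes R_0)$. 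Since $\cP\rho_0 = \pi_A\otimes R_0$, the first line is exactly the quantity estimated in Theorem~\ref{EULLIM}. Since $\cP(e^{t\cL_\gamma}\pi_A\otimes R_0) = \pi_A \otimes R(t)$, the second line is of the form controlled by \eqref{SOM2S} of Theorem~\ref{TZCVS}. The third line equals $\|R(t) - e^{t\cL_{P,\gamma}}R_0\|_1$ because $\cV$ is a trace norm isometry on density matrices, so it falls under Lemma~\ref{COHERENTSC}.

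Concretely, first I would invoke Theorem~\ref{EULLIM}: provided $t\geq t_\gamma := \frac{2\log(\gamma)}{a\gamma}$, the first term is bounded by $C\log(1+\gamma)/\gamma$. The hypothesis $\epsilon \geq t_\gamma$ and the fact that $t\geq \epsilon$ on the interval of interest ensures this applies. Second, I would apply \eqref{SOM2S}: the second term is bounded by $C_1/\gamma$ uniformly in $t>0$. Third, I would apply Lemma~\ref{COHERENTSC} \emph{not} with the fixed scale $T$ but with $T$ replaced by $\gamma T$: the lemma yields
\[
\|R(t) - e^{t\cL_{P,\gamma}} R_0\|_1 \leq \frac{C(\gamma T)}{\gamma^2} = \frac{CT}{\gamma}
\]
valid for $0\leq t \leq \gamma T$, so on the whole interval $[\epsilon,\gamma T]$.

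Combining the three estimates by the triangle inequality on $[\epsilon,\gamma T]$ gives
\[
\| e^{t\cL_\gamma}\rho_0 - \pi_A\otimes e^{t \cL_{P,\gamma}}R_0\|_1 \leq \frac{C\log(1+\gamma)}{\gamma} + \frac{C_1}{\gamma} + \frac{CT}{\gamma}\ ,
\]
which after enlarging the constant yields \eqref{BTM4}. No step is a real obstacle; the only point requiring care is the choice of time scale in the application of Lemma~\ref{COHERENTSC} — one must exploit that the linear-in-$T$ error has a prefactor $\gamma^{-2}$, so it remains small when time is rescaled by $\gamma$. This is also why the bound must be stated on $[\epsilon,\gamma T]$ rather than on $[\epsilon,\infty)$: on time scales much longer than $\gamma$, the accumulated ${\mathcal O}(\gamma^{-2})$ error in \eqref{COHERENTSC1} would overwhelm the gain, consistent with the remarks following \eqref{INTR4} about the breakdown of the reduced description at times of order $\gamma^2$.
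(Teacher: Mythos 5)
Your proposal is correct and is essentially identical to the paper's own proof: the same three-term triangle inequality decomposition, with the first term controlled by Theorem~\ref{EULLIM}, the second by \eqref{SOM2S} of Theorem~\ref{TZCVS}, and the third by Lemma~\ref{COHERENTSC} applied with $T$ replaced by $\gamma T$. The remark about why the bound is restricted to $[\epsilon,\gamma T]$ is also consistent with the paper's discussion.
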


\begin{proof}
By the triangle inequality, 
\begin{eqnarray}
\| e^{t\cL_\gamma}\rho_0 - \pi_A\otimes e^{t \cL_{P,\gamma}}R_0\|_1 &\leq& \| e^{t\cL_\gamma}\rho_0 - e^{t \cL_{\gamma}}\pi_A\otimes R_0\|_1\label{BTM1}\\
  &+& \|  e^{t \cL_{\gamma}}\pi_A\otimes R_0 - \pi_A\otimes \tr_A  e^{t\cL_\gamma}\pi_A\otimes R_0\|_1 \label{BTM2}\\
   &+& \|  \pi_A\otimes \tr_A  e^{t\cL_\gamma}\pi_A\otimes R_0 - \pi_A\otimes e^{t \cL_{P,\gamma}}R_0\|_1 \label{BTM3}\ .
\end{eqnarray}
Fix $0 < \epsilon < T$, and consider $t$ such that $0 \leq t \leq \gamma T$. 
Using \eqref{EULLIM1} of Theorem~\ref{EULLIM}  the term on the right in \eqref{BTM1} is bounded by $\frac{\log(1+\gamma)}{\gamma}C$ for all $\gamma$ such that $ \frac{2\log(\gamma)}{a\gamma} \leq  \epsilon$.
  The
   term in \eqref{BTM2} equals
$\|  e^{t \cL_{\gamma}}\pi_A\otimes R_0 - \cP(e^{t \cL_{\gamma}}\pi_A\otimes R_0)\|_1\leq \frac{C}{\gamma}$  where the inequality is  \eqref{SOM2S} of Theorem~\ref{TZCVS}. The term in \eqref{BTM3} equals 
$$
\left\|  \pi_A\otimes \left(\tr_A  e^{t\cL_\gamma}\pi_A\otimes R_0 -  e^{t \cL_{P,\gamma}}R_0\right)\right\|_1  = \left\|  \tr_A  e^{t\cL_\gamma}\pi_A\otimes R_0 -  e^{t \cL_{P,\gamma}}R_0\right\|_1 \leq \frac{CT}{\gamma^2}\ , 
$$
where the inequality is \eqref{COHERENTSC1} of Lemma~\ref{COHERENTSC}. This last inequality holds for any $T>0$, and then all $0\leq t \leq T$.  Now replacing $T$ by $\gamma T$, 
and putting all of the estimates together, yields \eqref{BTM4}
\end{proof}

The following simple consequence of Theorem~\ref{MTILRM} relating the evolution generated by $\cL_\gamma$  to the approximate coherent evolution generated by $\cK_P$.

\begin{thm}\label{MTILRMEUL}
Let $\rho_0$ be any density matrix on $\cH_{AB}$, and define $R_0 := \tr_A[\rho_0]$ so that
$\cP \rho_0 = \pi_A\otimes R_0$. Let $a$ denote the spectral gap of $\cD_A$.  For any $\epsilon > 0$ and any $T>0$,
\begin{equation}\label{MTILRMEUL1}
 \| e^{t\cL_\gamma}\rho_0 - \pi_A\otimes e^{t \cK_P}R_0\|_1 \leq C \frac{\log(1+\gamma) +1+T/\gamma}{\gamma}  ,
\end{equation}
uniformly on $[\epsilon, T]$ for any $0 < \epsilon < T$ for all $\gamma$ such that $ \frac{2\log(\gamma)}{a\gamma} \leq  \epsilon$.
In particular,
 \begin{equation*}
 \lim_{\gamma\to\infty} \tr_A e^{t \cL_\gamma}\rho = e^{t\cK_P}\tr_A \rho\ ,
  \end{equation*}
 where the convergence is in the trace norm, and for any $0 < \epsilon < T$, it is uniform on  $\epsilon < t \leq \gamma T$.
\end{thm}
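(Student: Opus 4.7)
The plan is to combine Theorem~\ref{MTILRM} with a simple Duhamel estimate that compares the dissipative evolution $e^{t\cL_{P,\gamma}}$ on $\widehat{\cH}_B$ with the purely coherent evolution $e^{t\cK_P}$, and then invoke the triangle inequality. Since $\cL_{P,\gamma} = \cK_P + \gamma^{-1}\cD_P$ differs from $\cK_P$ by a small term of order $\gamma^{-1}$, on any bounded time interval the two semigroups are close in the trace norm.

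First I would apply Theorem~\ref{MTILRM} not with its ``default'' upper endpoint $\gamma T$ but with $T$ replaced by $T/\gamma$, so that the resulting interval $[\epsilon,\gamma\cdot(T/\gamma)]$ becomes precisely $[\epsilon,T]$. This yields, for all $\gamma$ with $2\log(\gamma)/(a\gamma)\leq \epsilon$ and all $t\in[\epsilon,T]$,
\begin{equation*}
\| e^{t\cL_\gamma}\rho_0 - \pi_A\otimes e^{t\cL_{P,\gamma}}R_0\|_1 \leq C\,\frac{\log(1+\gamma)+1+T/\gamma}{\gamma}.
\end{equation*}

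Second, I would estimate $\|e^{t\cL_{P,\gamma}}R_0 - e^{t\cK_P}R_0\|_1$ by the standard Duhamel identity
\begin{equation*}
e^{t\cL_{P,\gamma}} - e^{t\cK_P} = \gamma^{-1}\int_0^t e^{(t-s)\cL_{P,\gamma}}\,\cD_P\,e^{s\cK_P}\,{\rm d}s.
\end{equation*}
By Lemma~\ref{CommLem}, $\cK_P R = -i[H_P,R]$ with $H_P$ self-adjoint, so $e^{s\cK_P}$ is a unitary conjugation and hence a trace-norm contraction. By Theorem~\ref{ALWAYS}, $\cD_P$ is a Lindblad generator, so $e^{(t-s)\cL_{P,\gamma}}$ is CPTP and therefore also a trace-norm contraction. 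Therefore
\begin{equation*}
\|e^{t\cL_{P,\gamma}}R_0 - e^{t\cK_P}R_0\|_1 \leq \frac{t}{\gamma}\,\|\cD_P\|_{1\to 1} \leq \frac{T}{\gamma}\,\|\cD_P\|_{1\to 1}
\end{equation*}
for $0\leq t\leq T$.

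Combining these two bounds through the triangle inequality, together with the trace-norm identity $\|\pi_A\otimes X\|_1 = \|X\|_1$ (since $\pi_A$ is a density matrix), produces a bound of the form $C'(\log(1+\gamma)+1+T/\gamma)/\gamma$ after absorbing $\|\cD_P\|_{1\to 1}$ and the lower-order terms into the constant. The limit statement is then immediate: for any fixed $0<\epsilon<T$ the right-hand side tends to $0$ as $\gamma\to\infty$ uniformly in $t\in[\epsilon,T]$, because both $\log(1+\gamma)/\gamma$ and $1/\gamma$ vanish and $T/\gamma^2$ is negligible. There is no real obstacle here: the key inputs (the CPTP property of the two semigroups on $\widehat{\cH}_B$ and the quantitative comparison of $e^{t\cL_\gamma}$ with $\pi_A\otimes e^{t\cL_{P,\gamma}}$) are all provided by previously established results, and the only careful point is the rescaling of Theorem~\ref{MTILRM}'s time parameter so that its estimate applies on $[\epsilon,T]$ rather than $[\epsilon,\gamma T]$.
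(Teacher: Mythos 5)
Your proof is correct and follows essentially the same route as the paper, which likewise combines Theorem~\ref{MTILRM} (rescaled so that its interval becomes $[\epsilon,T]$) with the estimate $\|e^{t\cL_{P,\gamma}}R_0-e^{t\cK_P}R_0\|_1\leq T\|\cD_P\|_{1\to 1}/\gamma$, obtained there via Lemma~\ref{COMPLEM} rather than by writing the Duhamel identity explicitly. The only quibble --- shared with the paper's own statement --- is that this Duhamel term is $O(T/\gamma)$, not $O(T/\gamma^2)$, so the numerator in \eqref{MTILRMEUL1} should strictly read $T$ rather than $T/\gamma$; this does not affect the uniform convergence claim.
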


\begin{proof} We first show that 
for all $T>0$ and all $\gamma >0$, there is a finite constant $C$ independent of $T$ and $\gamma$ such that for all $0 < t < T$,
\begin{equation}\label{QEULER1}
\|e^{t\cL_{P,\gamma}}R_0 - e^{t\cK_P}R_0\|_1 \leq \frac{1}{\gamma}T\|\cD_P\|_{1\to 1}\ .
\end{equation}
To see this, define $R(t) := e^{t\cL_{P,\gamma}}R_0$. Then evidently, ${\displaystyle \frac{{\rm d}}{{\rm d}t}R(t) =  \cK_P R(r) + F(t)}$ where $F(t) = \gamma^{-1}\cD_P R(t)$. 
Then $\|F(t)\|_1 \leq  \gamma^{-1}\|\cD_P\|_{1\to 1}$, and then \eqref{QEULER1} follows from  Lemma~\ref{COMPLEM}.  

By the triangle inequality and the fact that $\| \pi_A\otimes (e^{t \cL_{P,\gamma}}R_0 - e^{t \cK_P}\R_0)\|_1 = \| e^{t \cL_{P,\gamma}}R_0 - e^{t \cK_P}\R_0\|_1$.
\begin{equation}\label{QEULER2}
\| e^{t\cL_\gamma}\rho_0 - \pi_A\otimes e^{t \cK_P}R_0\|_1  \leq  \| e^{t\cL_\gamma}\rho_0 - \pi_A\otimes e^{t \cL_{P,\gamma}}R_0\|_1 + \| e^{t \cL_{P,\gamma}}R_0 - e^{t \cK_P}\R_0\|_1
\end{equation}

By Theorem~\ref{MTILRMEUL},
${\displaystyle 
\| e^{t\cL_\gamma}\rho_0 - \pi_A\otimes e^{t \cL_{P,\gamma}}R_0\|_1 \leq C \frac{\log(1+\gamma) +1+T/\gamma}{\gamma} }$
uniformly on $[\epsilon, T]$ for any $0 < \epsilon < T$ for all $\gamma$ such that $ \frac{2\log(\gamma)}{a\gamma} \leq  \epsilon$. Using this to bound the first term on the right in \eqref{QEULER2}, and using 
\eqref{QEULER1} to bound the second, and then harmonizing constants proves \eqref{MTILRMEUL1}. 
\end{proof}

While the proof of Theorem~\ref{MTILRM} contains a rigorous derivation of the equation 
${\displaystyle \frac{{\rm d}}{{\rm d}t}R(t) = \cL_{P,\gamma}R(t)}$, 
there is another simple and natural but  formal approach that not only produces this equation, but provides an infinite sequences of higher order ``corrections''.  The approach is based on the Hilbert expansion from kinetic theory \cite{DH01,DH12}. 

 For a solution $\rho(t)$ of \eqref{LindEq11},  define  $R(t)$ by $\pi_A\otimes R(t) = \cP \rho(t)$ as before, and then
define $n(t)$ by 
\begin{equation}\label{ndef}
 \rho(t) = \pi_A\otimes R(t) + \frac1\gamma n(t) \ .
\end{equation}
Equivalently, $n(t) = \gamma \cQ \rho(t)$, and hence $\cP n(t) =0$ and $\cQ(\pi_A \otimes Z) =0$ for all operators $Z$ on $\cH_A$. 
Inserting \eqref{ndef} into \eqref{LindEq11},
\begin{equation}\label{neq1}
\pi_A\otimes \frac{{\rm d}}{{\rm d}t}R(t) = \cK( \pi_A\otimes R(t)) + \cD n(t) + \frac{1}{\gamma}\cK n(t) - \frac1\gamma \frac{{\rm d}}{{\rm d}t} n(t)\ .
\end{equation}
Since $\cD$ and $\cP$ commute, applying $\cP$ to both sides of \eqref{neq1} yields
\begin{equation}\label{neq2}
\pi_A\otimes \frac{{\rm d}}{{\rm d}t}R(t) = \cP\cK( \pi_A\otimes R(t)) +  \frac{1}{\gamma}\cP \cK n(t)\ .
\end{equation}
Since $\cD$ and $\cQ$ commute applying $\cQ$ to both sides of  \eqref{neq1} yields
\begin{equation}\label{neq3}
\frac{1}{\gamma}\frac{{\rm d}}{{\rm d}t}n(t) = \cQ \cK( \pi_A\otimes R(t)) + \cD n(t) + \frac{1}{\gamma}\cQ \cK n(t)\ .
\end{equation}
Apply $\cS$ to both sides of \eqref{neq3} and use  the fact \eqref{SGENINV} that $\cS\cD = \cQ$ to obtain
\begin{equation}\label{neq4}
n(t) = -\cS \cK( \pi_A\otimes R(t)) -  \frac{1}{\gamma}\cS \cK n(t) + \frac{1}{\gamma}\frac{{\rm d}}{{\rm d}t}\cS n(t) \ .
\end{equation}

Following Hilbert \cite{DH12}, assume that $n(t)$ has an expansion 
 in inverse powers of $\gamma$:
\begin{equation}\label{ngexpan}
n(t) = \sum_{j=0}^\infty \gamma^{-j}n_j(t)\ .
\end{equation}
Inserting  the expansion into  the equation and then applying $\cP$ and $\cQ$ to both sides, yields two coupled equations. 

Inserting this into \eqref{neq4} and equating like powers of $\gamma$ yields
\begin{eqnarray}
n_0(t) &=& -\cS \cK( \pi_A\otimes R(t)) \label{neq5}\\
n_j(t) &=& -  \cS \cK n_{j-1}(t) +  \cS\frac{{\rm d}}{{\rm d}t} n_{j-1}(t) \quad{\rm for}\quad j\geq 1\ .\label{neq6}
\end{eqnarray}

Using \eqref{neq5}, and replacing $n(t)$ by $n_0(t)$ in \eqref{neq2} formally yields
$$
\pi_A\otimes \frac{{\rm d}}{{\rm d}t}R(t) = \cP \cK( \pi_A\otimes R(t))  - \frac{1}{\gamma}\cP\cK \cS \cK( \pi_A\otimes R(t)) + \mathcal{O}\left(\frac{1}{\gamma^2}\right)\ ,
$$
and then taking the trace of this over $\cH_A$  yields
\begin{equation}\label{Hilbert1}
\frac{{\rm d}}{{\rm d}t}R(t) = \tr_A[\cK( \pi_A\otimes R(t))] - \frac{1}{\gamma}\tr_A[\cK \cS \cK( \pi_A\otimes R(t))] + \mathcal{O}\left(\frac{1}{\gamma^2}\right)\ .
\end{equation}

The Hilbert expansion provides a systematic  method for computing corrections to \eqref{Hilbert1} at any order. To go to the next order, all one needs to do is to use $n_0(t) +\frac1\gamma n_1(t)$ in place of $n(t)$ in  \eqref{neq2}.
 Using \eqref{neq6} to determine $n_1(t)$ then yields
\begin{eqnarray}
n_1(t) &=&  -\cS \cK n_{0}(t) + \frac{{\rm d}}{{\rm d}t}n_{0}(t)\nonumber\\
&=& \cS \cK \cS \cK( \pi_A\otimes R(t))  -   \cS^2 \cK( \pi_A\otimes  \frac{{\rm d}}{{\rm d}t} R(t))\nonumber\\
&=& \cS \cK \cS \cK( \pi_A\otimes R(t))  -   \cS^2 \cK\left( \cP\cK( \pi_A\otimes R(t)) +  \frac{1}{\gamma}\cP \cK n(t)\right)\nonumber\\
&=&   (\cS \cK \cS \cK - \cS^2\cK\cP\cK) ( \pi_A\otimes R(t))  -    \frac{1}{\gamma}\cS^2 \cK \cP \cK n(t)\label{neq7}\ .
\end{eqnarray}

Now combine \eqref{neq2}, \eqref{ngexpan} and \eqref{neq7} and then take the trace over $\cH_A$  to obtain
\begin{eqnarray*}
 \frac{{\rm d}}{{\rm d}t}R(t) &=& \tr_A[\cK( \pi_A\otimes R(t))] -  \frac{1}{\gamma}\tr_A[ \cK \cS \cK( \pi_A\otimes R(t)] \nonumber \\
 &+&  \frac{1}{\gamma^2}\tr_A\left[ \cK \left(  \cS \cK \cS \cK - \cS^2\cK\cP\cK) ( \pi_A\otimes R(t))\right)\right] + \mathcal{O}\left(\frac{1}{\gamma^3}\right)\ .
\end{eqnarray*}

It is clear at this point that the corrections of every order can be expressed in terms of $\cK$, $\cS$ and $\cP$.   Define the super-operator $\cB_P$ by
\begin{equation}\label{BURDEF}
\cB_P(R) := \tr_A\left[ \cK \left(  \cS \cK \cS \cK - \cS^2\cK\cP\cK) ( \pi_A\otimes )\right)\right] \ .
\end{equation}
We have then formally derived the approximate equation
$$
\frac{{\rm d}}{{\rm d}t} R = \cK_P R(t) +\gamma^{-1} \cD_P R(t) + \gamma^{-2}\cB_P R(t)\ .
$$
The utility of this equation is in doubt. 
As mentioned in the introduction, an 
exact calculation for the model in Example~\ref{EXAMP1} shows that there are simple cases in which $\cK_P  +\gamma^{-1} \cD_P  + \gamma^{-2}\cB_P$ cannot be put in Lindblad form for any $\gamma>0$. 
 Nonetheless, it still may be that the assumption of an expansion as in \eqref{ngexpan} could be rigorously justified. Later, we shall prove convergence of a stationary form of the Hilbert expansion for steady state solutions of \eqref{LindEq11}.

\section{Davies' oscillation averaging operation}\label{DAOP}

As in \cite{Da74}, for  any super-operator $\cT$ on $\cH_B$, define $\cT^\sharp$ by
 \begin{equation}\label{DAVLIM56}
 \cT^\sharp(X) := \lim_{T\to\infty}\frac{1}{2T}\int_{-T}^T e^{t\cK_P}\cT( e^{-t\cK_P} X)  {\rm d}t\ .
 \end{equation}  
 We are mainly interested in $\cD_P^\sharp$ where $\cD_P$ is a projected Lindbladian generator on $\widehat{\cH}_B$. However, many properties of Davies' coherent averaging operation are valid for all super-operators $\cT$ on $\cH_B$.

 It is also useful to define the $\sharp$ operation on operators as well as super-operators. For an operator $X$ on $\cH_B$ define
 \begin{equation}\label{OPSHARP}
 X^\sharp := \lim_{T\to\infty}\frac{1}{2T} \int_{-T}^T e^{t\cK_P} X {\rm d} t\ .
 \end{equation}
 
 \begin{lem}
 For all super-operators $\cT$ on $\cH$, 
$\|\cT^\sharp\|_{1\to1} \leq \|\cT\|_{1\to1}$.
\end{lem}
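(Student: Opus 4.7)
The plan is to chain together three very elementary observations: (i) $e^{t\cK_P}$ is a unitary conjugation (coherent evolution), hence CPTP, hence an isometry in the trace norm for every $t \in \R$; (ii) the super-operator trace norm is submultiplicative and bounds any application of $\cT$; (iii) the trace norm satisfies a triangle inequality under integration, and is continuous under the limit $T \to \infty$ in the finite-dimensional setting.

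More concretely, I would proceed as follows. Fix $X$ with $\|X\|_1 = 1$ and fix $T > 0$. For every $t \in [-T,T]$, since $e^{-t\cK_P}$ is CPTP (it is just $Y \mapsto e^{-itH_P} Y e^{itH_P}$), one has $\|e^{-t\cK_P}X\|_1 = \|X\|_1 = 1$. Therefore
\begin{equation*}
\|e^{t\cK_P}\cT(e^{-t\cK_P}X)\|_1 = \|\cT(e^{-t\cK_P}X)\|_1 \leq \|\cT\|_{1\to1}\|e^{-t\cK_P}X\|_1 = \|\cT\|_{1\to1}\ ,
\end{equation*}
using the same isometry property again for the outer $e^{t\cK_P}$. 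The integrand is thus uniformly bounded in trace norm by $\|\cT\|_{1\to1}$ on $[-T,T]$.

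Next I would integrate: by the triangle inequality for the (Riemann, in this finite-dimensional setting) integral,
\begin{equation*}
\left\|\frac{1}{2T}\int_{-T}^T e^{t\cK_P}\cT(e^{-t\cK_P}X)\,\mathrm{d}t\right\|_1 \leq \frac{1}{2T}\int_{-T}^T \|\cT\|_{1\to1}\,\mathrm{d}t = \|\cT\|_{1\to1}\ .
\end{equation*}
Since $\widehat{\cH}$ is finite-dimensional, the trace norm is continuous, so passing to $T\to\infty$ on the left and invoking the existence of the limit \eqref{DAVLIM56} yields $\|\cT^\sharp(X)\|_1 \leq \|\cT\|_{1\to1}$. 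Taking the supremum over $\|X\|_1 = 1$ gives the claim.

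There is essentially no obstacle here; the only step worth flagging is the implicit use that the limit in \eqref{DAVLIM56} exists, which will presumably be established elsewhere in Section~\ref{DAOP} (it follows from a standard spectral decomposition of $\cK_P$, whose eigenvalues are purely imaginary). If one wishes to avoid citing existence, one can simply state the conclusion for each finite-$T$ Cesàro average and note that the same bound then passes to any limit point.
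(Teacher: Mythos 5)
Your argument is correct and is exactly the paper's argument spelled out in full: the paper simply asserts the bound is immediate from the definition of $\cT^\sharp$ and the unitary invariance (hence isometry) of the trace norm under $e^{\pm t\cK_P}$, which is precisely the chain of observations you make explicit. Nothing further is needed.
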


\begin{proof}
This is immediate from the definition of $\cT^\sharp$ and the unitary invariance of the trace norm. 
\end{proof}

 An alternate expression for $\cT^\sharp$, also figuring in Davies' work \cite{Da74}, will be useful. 
 The operator $i \cK_P$  is self-adjoint on $\widehat{\cH}_B$. If $H_P\phi_\mu = \mu\phi_\mu$ and $H_P\phi_\nu = \nu \phi_\nu$, then
 $i\cK_P|\phi_\mu \rangle\langle \phi_\nu| = (\mu - \nu)|\phi_\mu \rangle\langle \phi_\nu|$. 
 Therefore, the spectrum of $i\cK_P$, $\sigma(i\cK_P)$, is related to the spectrum of $H_P$, $\sigma(H_P)$, by
 \begin{equation}\label{SpeciKP}
 \sigma(i\cK_P) = \{\mu -\nu\ :\ \mu,\nu\in \sigma(H_P)\}\ .
 \end{equation}

 Let $H_P = \sum_{\mu\in \sigma(H_P)}\mu P_\mu$ be the spectral decomposition of $H_P$.  For $\mu,\nu\in \sigma(H_P)$ define the orthogonal projections $Q_{\mu,\nu}$ on $\widehat{\cH}$ by
 \begin{equation}\label{DAVLIM54}
 Q_{\mu,\nu}(X) = P_\mu X P_\nu\ .
 \end{equation} 
 (One readily checks that  $Q^2_{\mu,\nu} =  Q_{\mu,\nu}$ and $Q^\dagger _{\mu,\nu} =  Q_{\mu,\nu}$, so that $Q_{\mu,\nu}$ is indeed an orthogonal projection.)
 Then 
 \begin{equation}\label{DAVLIM55}
 i\cK_P  = \sum_{\mu,\nu\in \sigma(H_P)} (\nu - \mu) Q_{\mu,\nu} \quad{\rm so\ that}\quad e^{t\cK_P} = \sum_{\mu,\nu\in \sigma(H_P)} e^{it(\nu - \mu)} Q_{\mu,\nu}\ .
 \end{equation} 
 The first equation in 
\eqref{DAVLIM55} is not exactly  the spectral decomposition of $\cK_P$ since  for $\omega\in \sigma(\cK_P)$, 
the corresponding spectral projection of $\cK_P$ is $\sum_{\mu,\nu\in \sigma(H_P)\ :\ \mu-\nu = \omega}Q_{\mu,\nu}$.
However, this somewhat finer decomposition  will be useful.

 By  \eqref{DAVLIM56}, ${\displaystyle 
 \cT^\sharp =  
 \sum_{\mu,\mu'\nu,\nu'\in \sigma(H_P)} Q_{\mu,\nu}\cT Q_{\mu',\nu'} \lim_{T\to\infty}\frac{1}{2T}\int_{-T}^T e^{it((\nu - \mu) -(\nu' - \mu'))}  {\rm d}t}$ and hence an alternate expression for 
 $\cT^\sharp$ is
 \begin{equation}\label{DAVLIM58}
 \cT^\sharp =  
 \sum_{\mu,\mu',\nu,\nu'\in \sigma(H_P)}\delta_{(\nu - \mu) -(\nu' - \mu')} Q_{\mu,\nu}\cT Q_{\mu',\nu'}\ .
 \end{equation}

The next theorem bounds the rate at which $\frac{1}{2T} \int_{-T}^T e^{-t\cK_P}\cT(e^{t \cK_P}X){\rm d}t$ converges to $\cT^\sharp$. This rate depends on
 \begin{equation}\label{Kgap}
 b := \min\{ |(\nu - \mu) -(\nu' - \mu')| \ :\ \mu,\mu'\nu,\nu'\in \sigma(H_P)\ ,\  (\nu - \mu) \neq(\nu' - \mu')\ \}\ .
 \end{equation}
 Evidently $b>0$, and in fact $b$ is the spectral gap of $|\cK_P|$.    In fact, we need a slightly more general bound that depends only on the length of the time interval and does not require it to be centered. 

\begin{thm}
For all $s\in \R$ and all $T>0$, 
\begin{equation}\label{OscEst} 
 \left\Vert 
 \cT^\sharp(X) - \frac{1}{2T}\int_{-T-s}^{T+s} e^{-t\cK_P}\cT(e^{t \cK_P}X){\rm d}t \right\Vert_{1\to 1} 
 \leq   |\sigma(H_P)|^4 \frac{3\pi}{2bT} \ 
 \end{equation}
 where $|\sigma(H_P)|$ denote the number of distinct eigenvalues of $H_P$. 
\end{thm}

\begin{proof}
By \eqref{DAVLIM58},  for all $X\in \widehat{H}_B$,
\begin{equation*} \cT^\sharp(X)  - \frac{1}{2T}\int_{-T}^T e^{-t\cK_P}\cT(e^{t \cK_P}X){\rm d}t =
 \sum_{\mu,\mu'\nu,\nu'\in \sigma(H_P)}  C_{\mu,\mu'\nu,\nu'}(T)Q_{\mu,\nu}\cT Q_{\mu',\nu'}(X)
 \end{equation*}
 where
 $$
 C_{\mu,\mu'\nu,\nu'}(T) :=  \frac{1}{2T}\int_{-T}^T e^{it((\nu - \mu) -(\nu' - \mu'))}  {\rm d}t - \delta_{(\nu - \mu) -(\nu' - \mu')} 
 $$
  Then $C_{\mu,\mu'\nu,\nu'}(T)  = 0$  for all $T$ if $(\nu - \mu) = (\nu' - \mu')$, and for  $(\nu - \mu) \neq(\nu' - \mu')$,
 $$
 |C_{\mu,\mu'\nu,\nu'}(T)| \leq \frac{2\pi/ b}{2T} = \frac{\pi}{2bT}\ .
 $$
 Moreover, for any $s\in \R$, and $(\nu - \mu) \neq(\nu' - \mu')$,
 $$
\left| \frac{1}{2T}\int_{-T}^T e^{it((\nu - \mu) -(\nu' - \mu'))} - \frac{1}{2T}\int_{-T-s}^{T+s} e^{it((\nu - \mu) -(\nu' - \mu'))}\right| \leq  \frac{\pi}{bT}\ .
 $$
 By \eqref{DAVLIM54}, for each $\mu,\nu$, $\|Q_{\mu,\nu}\|_{1\to 1} \leq 1$.  This proves \eqref{OscEst}. 
\end{proof} 

The next theorem says that if $\sD$ is a Lindblad generator acting on $\widehat{\cH}_B$, then so is $\sD^\sharp$, and yields an explicit formula of $\sD^\sharp$ in terms of the spectral decomposition of $H_P$
and the Lindblad form of $\sD$.  We are mainly interested in the case that $\sD = \cD_P$, but this spacial choice of $\sD$ does not enter the proof.

 \begin{thm}\label{INTMAIN} Let ${\displaystyle \sD R = \sum_{\ell=1}^{r }\left(2V_\ell R V_\ell^\dagger - V_\ell^\dagger V_\ell  R - R V_\ell^\dagger V_\ell  \right)  -i[H_L,R]}$ be any 
 Lindblad generator on $\widehat{\cH}_B$.
 Let $\sD^\sharp$ be defined in terms of $\sD$ through  \eqref{DAVLIM58}.  
 Let $H_P = \sum_{\lambda\in \sigma(H_P)}\lambda P_\lambda$ be the spectral decomposition of $H_P$ so that $P_\lambda$ is the orthogonal projection on the the 
eigenspace with eigenvalue $\lambda$, and $\sigma(H_P)$ is the spectrum of $H_P$. 

Then  $\sD^\sharp$  has the form 
 \begin{equation}\label{DPSHFO}
 \sD^\sharp R = \sum_{\omega\in \sigma(i\cK_P)} \sum_{j=1}^{n_A} \left(2 V_{j,\omega} R V_{j,\omega}^\dagger - V^\dagger_{j,\omega}V_{j,\omega} R - RV^\dagger_{j,\omega}V_{j,\omega} \right)  - i [(H_L)_0,R]
 \end{equation}
 where for any operator $Z$ on $\cH_B$ and any $\omega\in  \sigma(i\cK_P)$,
  \begin{equation}\label{DPSHFO2}
 Z_\omega := \sum_{\mu,\mu'\in \sigma(H_P)\ :\ \mu-\mu' = \omega}P_{\mu'}ZP_{\mu}\ .
  \end{equation}
 \end{thm}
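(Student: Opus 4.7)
The plan is a direct computation starting from the time-averaging definition~\eqref{DAVLIM56}, using the spectral structure of $H_P$ to pick out the surviving Fourier modes. For any operator $Z$ on $\cH_B$ and any $\omega\in\sigma(i\cK_P)$, the components $Z_\omega := \sum_{\mu-\mu'=\omega}P_{\mu'}ZP_\mu$ defined in~\eqref{DPSHFO2} satisfy (by a one-line computation using $H_P P_\mu = \mu P_\mu$)
\[
 [H_P,Z_\omega] = -\omega Z_\omega, \qquad e^{itH_P}Z_\omega e^{-itH_P} = e^{-i\omega t}Z_\omega, \qquad Z = \sum_{\omega\in\sigma(i\cK_P)}Z_\omega,
\]
and $e^{itH_P}Z_\omega^\dagger e^{-itH_P} = e^{+i\omega t}Z_\omega^\dagger$ by taking adjoints. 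Applied to each jump operator and to $H_L$, this produces the decompositions $V_\ell = \sum_\omega V_{\ell,\omega}$ and $H_L = \sum_\omega (H_L)_\omega$ used in the statement, in which each $V_{\ell,\omega}$ is an eigenvector of the super-operator $-\cK_P$ with eigenvalue $\omega$.

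Next I would substitute these decompositions into $e^{-t\cK_P}\sD(e^{t\cK_P}R) = e^{itH_P}\sD(e^{-itH_P}Re^{itH_P})e^{-itH_P}$ and treat each Lindblad summand in turn. For $2V_\ell R V_\ell^\dagger$, expansion yields $2\sum_{\omega,\omega'}e^{i(\omega'-\omega)t}V_{\ell,\omega}RV_{\ell,\omega'}^\dagger$; the orthogonality $(2T)^{-1}\int_{-T}^T e^{i(\omega'-\omega)t}\,{\rm d}t\to \delta_{\omega,\omega'}$ retains only the diagonal, producing $2\sum_\omega V_{\ell,\omega}RV_{\ell,\omega}^\dagger$. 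The same argument applied to $V_\ell^\dagger V_\ell = \sum_{\omega,\omega'}V_{\ell,\omega}^\dagger V_{\ell,\omega'}$ in the two anticommutator terms forces $\omega = \omega'$ and gives $-\sum_\omega V_{\ell,\omega}^\dagger V_{\ell,\omega}R$ and $-\sum_\omega R V_{\ell,\omega}^\dagger V_{\ell,\omega}$. Finally, $-i[H_L,R]$ transforms into $-i[e^{itH_P}H_L e^{-itH_P},R] = -i\sum_\omega e^{-i\omega t}[(H_L)_\omega,R]$, of which only the $\omega = 0$ mode survives, leaving $-i[(H_L)_0,R]$. Summing the four contributions yields \eqref{DPSHFO}; convergence of the averaged integrals is uniform by Theorem~\ref{AVETHM}, and interchanging the finite sum over $(\omega,\omega')$ with the integration requires no additional estimate in this finite-dimensional setting.

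To conclude that $\sD^\sharp$ is itself a Lindblad generator, I would observe that $(H_L)_0 = \sum_{\mu}P_\mu H_L P_\mu$ is self-adjoint (as a sum of Hermitian blocks of the self-adjoint $H_L$), and that the dissipative part has already been expressed in canonical form with jump operators $\{V_{\ell,\omega}\}_{\ell,\omega}$, so \eqref{DPSHFO} itself certifies the Lindblad property. There is no substantial obstacle here; the only point requiring care is the tracking of signs in the identities $e^{\pm itH_P}V_{\ell,\omega}^{(\dagger)}e^{\mp itH_P}$, which ensures that the exponent in the $2V_\ell R V_\ell^\dagger$ term becomes $i(\omega'-\omega)t$ rather than $i(\omega'+\omega)t$---without this pairing of opposite signs on a jump operator and its adjoint, the surviving diagonal terms would not assemble into Lindblad form after time averaging.
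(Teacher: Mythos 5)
Your proposal is correct, and it is the same underlying idea as the paper's proof --- the secular (rotating-wave) decomposition of the jump operators into eigenoperators of the adjoint action of $H_P$ --- but executed along a slightly different route. The paper works from the discrete projector formula \eqref{DAVLIM58}, $\cT^\sharp = \sum \delta_{(\nu-\mu)-(\nu'-\mu')}Q_{\mu,\nu}\cT Q_{\mu',\nu'}$, and evaluates each Lindblad summand combinatorially; for the anticommutator terms this first produces the pinched form $\sum_{\mu}P_\mu V^\dagger V P_\mu$, and the paper must then separately verify the identity $\sum_{\omega}V_\omega^\dagger V_\omega = \sum_{\mu}P_\mu V^\dagger V P_\mu$ (via the argument that for each pair $\lambda,\mu\in\sigma(H_P)$ there is exactly one $\omega$ with $\lambda+\omega=\mu$). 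You instead work directly from the time-average definition, decompose $V_\ell = \sum_\omega V_{\ell,\omega}$ and $V_\ell^\dagger V_\ell = \sum_{\omega,\omega'}V_{\ell,\omega}^\dagger V_{\ell,\omega'}$, and let the orthogonality $\frac{1}{2T}\int_{-T}^T e^{i(\omega-\omega')t}\,{\rm d}t \to \delta_{\omega,\omega'}$ deliver $\sum_\omega V_{\ell,\omega}^\dagger V_{\ell,\omega}$ in one stroke, so that identity never needs to be stated --- a genuine (if modest) economy. Your treatment of the remaining terms and of the Hamiltonian part, and your observation that \eqref{DPSHFO} is already in canonical Lindblad form with self-adjoint $(H_L)_0$, are all sound; the interchange of the finite double sum with the $T\to\infty$ limit is indeed immediate here. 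One trivial slip: since $\cK_P Z_\omega = -i[H_P,Z_\omega] = i\omega Z_\omega$, the operator $V_{\ell,\omega}$ is an eigenvector of $-i\cK_P$ (not of $-\cK_P$) with eigenvalue $\omega$; this does not affect any subsequent step.
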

 
 \begin{proof} For $V\in \widehat{\cH}_B$,  consider the  Lindblad generator 
${\displaystyle \sD R := 2VRV^\dagger  - V^\dagger VR - RV^\dagger V}$.
We now compute $\sD^\sharp$ term by term. First consider $\cT(X) = VXV^\dagger$. Then by \eqref{DAVLIM54} and \eqref{DAVLIM58},
$$
\cT^\sharp(X) = \sum_{\mu,\mu',\nu,\nu'\in \sigma(H_P)}\delta_{(\nu-\mu)- (\nu'-\mu')} P_{\mu}VP_{\mu'}XP_{\nu'}V^\dagger P_{\nu} \ ,
$$
For $\omega \in \sigma(\cK_P)$,  define $V_\omega$ by \eqref{DPSHFO2}.
  Then 
  ${\displaystyle 
\cT^\sharp(X) = \sum_{\omega \in \sigma(\cK_P)}V_\omega XV_\omega^\dagger}$. If for the same $V$, $\cT(X) = V^\dagger V X$, 
\begin{eqnarray*}
\cT^\sharp(X) &=& \sum_{\mu,\mu'\nu,\nu'\in \sigma(H_P)}\delta_{(\nu-\mu)- (\nu'-\mu')} P_{\mu}V^\dagger VP_{\mu'}XP_{\nu'} P_{\nu} \\
&=& \sum_{\mu,\mu'\nu,\nu'\in \sigma(H_P)}\delta_{(\nu-\mu)- (\nu'-\mu')}\delta_{\nu-\nu'} P_{\mu}V^\dagger V P_{\mu'}XP_{\nu} 
=\sum_{\mu\in \sigma(H_P)} P_\mu V^\dagger V P_\mu X\ . 
\end{eqnarray*}

We must show that 
${\displaystyle
\sum_{\omega\in \sigma(i\cK_P)} V^\dagger _\omega V_\omega = \sum_{\mu\in \sigma(H_P)}P_\mu V^\dagger V P_\mu}$.
 Define $P_{\lambda+\omega} = 0$ if $\lambda+\omega \notin \sigma(H_P)$. Then 
\begin{eqnarray*}
V^\dagger _\omega V_\omega   &=& \sum_{\mu,\mu'\nu,\nu'\in \sigma(H_P)\ :\ \mu-\mu' = \nu-\nu' = \omega}P_{\mu}V^\dagger P_{\mu'}P_{\nu'}VP_{\nu}\\
 &=& \sum_{\lambda\in \sigma(H_P)}P_{\lambda+\omega}V^\dagger P_{\lambda}V P_{\lambda+\omega}\\
\end{eqnarray*}
If $P_{\lambda+\omega} \neq 0$, $\omega = \mu - \lambda$ for some  $\mu\in \sigma(H_P)$.  
 For all $\lambda,\mu\in \sigma(H_P)$, there is exactly one $\omega\in \sigma(i\cK_P)$ such that
$\lambda+\omega = \mu$. To see this, note that $\omega_0 := \mu - \lambda\in \sigma(i\cK_P)$, so that there is at least one such $\omega$. But every other $\omega\in  \sigma(i\cK)$ satisfies either $\omega > \omega_0$ or 
$\omega < \omega_0$. Therefore, for every other $\omega$, $\lambda+\omega > \mu$ or $\lambda+\omega < \mu$,
and hence
$$
\sum_{\omega\in \sigma(i\cK_P)} V^\dagger _\omega V_\omega = \sum_{\omega\in \sigma(i\cK_P)} \sum_{\lambda\in \sigma(H_P)}P_{\lambda+\omega}V^\dagger P_{\lambda}VP_{\lambda+\omega} = 
\sum_{\lambda,\mu\in \sigma(H_P)}P_\mu V^\dagger P_\lambda V P_\mu = \sum_{\mu\in \sigma(H_P)}P_\mu V^\dagger V P_\mu\ .
$$
The same reasoning applies to $\cT(X) = XV^\dagger V$.

Finally, define $\cT(X) :=  -i[H_L,X]$.  Then
\begin{eqnarray*}
\cT^\sharp(X) &=& -i\sum_{\mu,\mu'\nu,\nu'\in \sigma(H_P)}\delta_{(\nu-\mu)- (\nu'-\mu')} \left( P_{\mu}H_L P_{\mu'}XP_{\nu'} P_{\nu} - P_{\mu}P_{\mu'}XP_{\nu'} H_L P_{\nu} \right)\\
&=& \sum_{\mu,\nu\in \sigma(H_P)} \left(P_\mu H_L P_\mu X P_\nu - P_\mu X P_\nu H_L P_\nu\right)\\
&=& \left(\textstyle{\sum_{\mu\in \sigma(H_P)}P_\mu H_L P_\mu}\right) X - X \left(\textstyle{\sum_{\mu\in \sigma(H_P)}P_\mu H_L P_\mu}\right)\ .
\end{eqnarray*}
Putting together the pieces yields the claim for the general Lindblad generator as stated. 
\end{proof}

 \begin{lem}\label{DHARPLEM} Let $\cT$ be any super operator on $\cH_B$.  For all $X\in {\rm ker}(\cT^\sharp)$, $X^\sharp \in {\rm ker}(\cT^\sharp)$ where $X^\sharp$ is defined in 
 \eqref{OPSHARP}.  In particular, if $\cD_P^\sharp$ is ergodic, then the unique density matrix $\bar R$ satisfying $\cD_P^\sharp \bar R$ commutes with $H_P$. 
 \end{lem}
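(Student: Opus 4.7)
\medskip

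\noindent\textbf{Proof plan.} The plan is to exploit the fact that the $\sharp$ operation produces, both on operators and on super-operators, objects that are invariant under conjugation by $e^{t\cK_P}$, and that the $\sharp$ operation commutes with this conjugation. First I would establish the intertwining property: for every super-operator $\cT$ on $\cH_B$ and every $s\in\R$,
\begin{equation*}
e^{s\cK_P}\circ \cT^\sharp = \cT^\sharp\circ e^{s\cK_P}\ .
\end{equation*}
This follows immediately from the definition \eqref{DAVLIM56} by the translation change of variables $t\mapsto t-s$ in the Ces\`aro integral, using that $\lim_{T\to\infty}\frac{1}{2T}\int_{-T-s}^{T-s}=\lim_{T\to\infty}\frac{1}{2T}\int_{-T}^{T}$. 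The same change of variable applied to \eqref{OPSHARP} shows that $e^{s\cK_P}X^\sharp=X^\sharp$ for all $s$, and differentiating at $s=0$ yields $[H_P,X^\sharp]=0$.

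With these two facts in hand, the kernel statement is essentially a one-line computation. For $X\in\ker(\cT^\sharp)$, use linearity (and continuity, automatic in finite dimensions) of $\cT^\sharp$ to pass it through the integral, then use the intertwining property:
\begin{equation*}
\cT^\sharp(X^\sharp)
= \lim_{T\to\infty}\frac{1}{2T}\int_{-T}^{T} \cT^\sharp\!\left(e^{t\cK_P}X\right){\rm d}t
= \lim_{T\to\infty}\frac{1}{2T}\int_{-T}^{T} e^{t\cK_P}\cT^\sharp(X){\rm d}t = 0\ .
\end{equation*}

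For the ``in particular'' statement, suppose $\cD_P^\sharp$ is ergodic and let $\bar R$ be the unique density matrix with $\cD_P^\sharp\bar R=0$. Since $\cK_P X=-i[H_P,X]$, the map $e^{t\cK_P}$ acts as $\rho\mapsto e^{-itH_P}\rho\,e^{itH_P}$, a unitary conjugation, and therefore sends density matrices to density matrices. Consequently each Ces\`aro average $\frac{1}{2T}\int_{-T}^{T}e^{t\cK_P}\bar R\,{\rm d}t$ is a convex combination of density matrices, hence itself a density matrix; passing to the limit in the closed compact set of density matrices on $\cH_B$, $\bar R^\sharp$ is again a density matrix. By the first part, $\bar R^\sharp\in\ker(\cD_P^\sharp)$, so by uniqueness $\bar R^\sharp=\bar R$. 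Since $\bar R^\sharp$ commutes with $H_P$ by the invariance noted above, so does $\bar R$.

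\medskip

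\noindent\textbf{Main obstacle.} There is no real obstacle once one recognizes that $\sharp$ and $e^{t\cK_P}$ commute; the only subtlety is justifying the interchange of $\cT^\sharp$ with the limit defining $X^\sharp$, which is trivial in the present finite-dimensional setting since both $\cT^\sharp$ and the integrand are continuous linear maps on a finite-dimensional normed space. Alternatively, one could give a purely spectral proof by decomposing $X=\sum_{\mu,\nu}Q_{\mu,\nu}X$ as in \eqref{DAVLIM55}, noting that $X^\sharp=\sum_{\mu}Q_{\mu,\mu}X$ and that \eqref{DAVLIM58} makes manifest that $\cT^\sharp$ preserves each eigenspace of $\cK_P$, but the integral argument above is shorter.
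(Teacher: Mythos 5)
Your proposal is correct and follows essentially the same route as the paper: establish the intertwining relation $e^{s\cK_P}\cT^\sharp = \cT^\sharp e^{s\cK_P}$ by a change of variables in the Ces\`aro integral, average over $s$ to get $\cT^\sharp(X^\sharp)=0$, and then use uniqueness of the steady state to conclude $\bar R^\sharp=\bar R$ and hence $[H_P,\bar R]=0$. You supply two small details the paper leaves implicit — that $\bar R^\sharp$ is again a density matrix, and the justification for interchanging $\cT^\sharp$ with the limit — which is fine.
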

 
 \begin{proof} For any real $s$, and any $X\in {\rm ker} (\cK_P)$,
 \begin{eqnarray*}
 e^{s\cK_P}\cT^\sharp(X) &=& \lim_{T\to\infty}\frac{1}{2T}\int_{-T}^T e^{(t+s)\cK_P}\cT( e^{-t\cK_P} X)  {\rm d}t\\
 &=& \lim_{T\to\infty}\frac{1}{2T}\int_{-T}^T e^{(t+s)\cK_P}\cT( e^{-(t+s)\cK_P} e^{s\cK_P}X)  {\rm d}t  = \cT^\sharp(e^{s \cK_P} X)\\
 \end{eqnarray*}
 Therefore, if $\cT^\sharp(X) =0$, then  $\cT^\sharp(e^{s \cK_P} X)= 0$ for all $s$, and hence $\cT^\sharp(X^\sharp) =0$.  Taking $\cT = \cD_P$, we see that the null space of $\cD_P^\sharp$ is invariant under that map $R\ \mapsto R^\sharp$, and hence if $\cD_P^\sharp$ is ergodic, the unique density matrix in its null space commutes with $H_P$.   \end{proof}

\begin{example}\label{EXAMP1} We consider a very simple 2 qubit system. Let $\cH_A$ and $\cH_B$  both be $\C^2$ with the usual inner product. Let 
  $|0\rangle := {\scriptscriptstyle \left(\begin{array}{c} 0\cr 1\end{array}\right)}$ and $|1\rangle := {\scriptscriptstyle \left(\begin{array}{c} 1\cr 0\end{array}\right)}$.
  Let $\sigma_1$, $\sigma_2$ and $\sigma_3$ be the Pauli matrices, and define 
  $$
  \sigma_- := \frac12(\sigma_1 -i\sigma_2) = |0\rangle\langle 1| \quad{\rm and }\quad  \sigma_+ := \sigma_-^\dagger = \frac12(\sigma_1+i\sigma_2) =  |1\rangle\langle 0|\ .
  $$
  For $\beta \in \R$, define the jump operators
   $V_1$ and $V_2$ by 
  \begin{equation}\label{ProjJump}
  V_1 := \sqrt{ \frac{ e^{-\beta/2} }{e^{\beta/2}+e^{-\beta/2}}}\sigma_+ \quad{\rm and}\quad V_2 :=  \sqrt{ \frac{ e^{\beta/2} }{e^{\beta/2}+e^{-\beta/2}}}\sigma_-\ .
  \end{equation}
  and then define  the Lindblad generator $\cD_A$ by
  \begin{equation}\label{DAEX1DEF}
 \cD_A(R) :=  \sum_{j=1}^{2 }\left(2V_j R V_j^\dagger - V_j^\dagger V_j  R - R V_j^\dagger V_j  \right)\ .
 \end{equation}
 As we now show, $\cD_A$ is ergodic and gapped, and the unique steady state is a Gibbs state for $H_A =\sigma_3$.

Define
   ${\displaystyle \pi_A :=  \frac12 \one_A - \frac12 \tanh(\beta/2) \sigma_3}$.  Then with $H_A := \sigma_3$, $\pi_A = \frac{1}{Z_{\beta/2}}e^{-(\beta/2) H_A}$. 
Simple calculations show that 
\begin{equation}\label{DASPECEX1}
\cD_A(\pi_A) =0\ ,\quad  \cD_A(\sigma_-) =  -\sigma_-\ ,\quad  
\cD_A \sigma_+ = -\sigma_+\quad{\rm and} \quad\cD_A(\sigma_3) = -2(\sigma_3)\ .
\end{equation}
   Therefore, $\cD_A$ is ergodic with unique steady state $\pi_A$, and its spectral gap is $1$.   Let $\{X_0,X_1,X_2,X_3\}$ be the dual basis to $\{Y_0,Y_1,Y_2,Y_3\} := \{\pi_A,\sigma_+,\sigma_-,\sigma_3\}$. 
   Using properties of the Pauli matrices, it is easy to see that
  \begin{equation}\label{DUBAEX1}
  \{X_0,X_1,X_2,X_3\} = \{\one_A, \sigma_+,\sigma_-, \tfrac12(\tanh(\beta/2)\one_A + \sigma_3)\}\ .
  \end{equation}
  By \eqref{DASPECEX1} and Theorem~\ref{DUBAJF2}, 
  \begin{equation}\label{DASPECEX2}
  \cD_A^\dagger(X_0) =0\ ,\quad\cD_A^\dagger(X_1)=-X_1\ ,\quad
\cD_A^\dagger(X_2)=-X_2\quad{\rm and}\quad \cD_A^\dagger(X_3)=-2X_3\ .
\end{equation}

   Take $H := H_A\otimes \one_B + H_{AB} + \one_A\otimes H_B$ where $H_A = \sigma_3$ as above, 
 $$
  H_{AB}  = \sigma_-\otimes \sigma_+ + \sigma_+\otimes \sigma_- \ ,
$$
  and  $H_B$ is a self-adjoint operator to be specified.  To  compute the projected Hamiltonian and the projected dissipator,  the first step is to write the Hamiltonian in the form
  $H = \sum_{j=0}^3 X_j\otimes G_j$ as in \eqref{HEXPTHM} of Lemma~\ref{LINDINGEN}. By \eqref{DUBAEX1} and simple computations,
  $H_A = 2X_3 -\tanh(\beta/2)X_0$ and then
  $$
 H  = 2 X_3 \otimes\one + X_1\otimes \sigma_+ + X_2\otimes \sigma_- + X_0\otimes \left(H_B  -\tanh(\beta/2)\one \right)\ .
  $$
  This gives the expansion $H = \sum_{j=0}^3 X_j\otimes G_j$
 where 
\begin{equation}\label{GVALEX1}
G_0 =H_B  -\tanh(\beta/2)\one\ ,\quad  G_1 = \sigma_+\ ,\quad
G_2 = \sigma_-\quad{\rm and}\quad G_3 = 2\one\ .
\end{equation}

By Lemma~\ref{CommLem}, 
\begin{equation}\label{HBHPT}
  H_P := \tr_A[(\pi_A\otimes \one_B)H] = G_0 =H_B  -\tanh(\beta/2)\one\ .
  \end{equation}
  
 To complute $\cD_P$, we use Lemma~\ref{LINDINGEN} (or the corresponding formulas in \cite{PEPS}). First,
  we work out the  matrix 
  $M_{j,k} :=   -\langle \cS_A ^\dagger X_k, X_j\pi_A\rangle$ in \eqref{MMATDEF} where $1 \leq j,k\leq 3$. 
  Then defining $\lambda_1 = \lambda_2 =-1$ and $\lambda_3 =-2$, it follows from \eqref{DASPECEX2} and $\cS_A^\dagger\cD_A^\dagger = \cQ_A^\dagger$  that 
$$
M_{j,k} = -\frac{1}{\lambda_k}\langle X_k,X_j\rangle_{\pi_A}
$$
where the inner product is the GNS inner product determined by $\pi_A$ as in \eqref{GNSDEF}. 
Simple computations show that $\langle X_k,X_j\rangle_{\pi} = \delta_{j,k}\tr[X_j^\dagger X_j\pi_A]$.  Therefore, $M$ is diagonal and positive definite. The decomposition  $M = A+iB$ is then trivial with $B=0$, so that $H_L=0$ by \eqref{PEPSLG3}. Since $A$ is already diagonal, it follows also from \eqref{PEPSLG3} that the jump operators are multiples of 
$G_1$, $G_2$ and $G_3$. But since $G_3$ is a multiple of the identity,
which would make a trivial contribution to $\cD_P$, we need only consider the jump operators that are multiples of $G_1= \sigma_+$ and $G_2 = \sigma_-$. Since 
$$
 \langle X_1,X_1\rangle_{\pi_A} =  \frac{ e^{-\beta/2} }{e^{\beta/2}+e^{-\beta/2}}\quad{\rm and}\quad   \langle X_2,X_2\rangle_{\pi_A} =  \frac{ e^{\beta/2} }{e^{\beta/2}+e^{-\beta/2}}\ ,
  $$
  these are exactly the jump operators $V_1$ and $V_2$ given by
  \eqref{ProjJump}. 
  That is,
\begin{equation}\label{EX1DP}
    \cD_P(R) :=  \sum_{j=1}^{2 }\left(2V_j R V_j^\dagger - V_j^\dagger V_j  R - R V_j^\dagger V_j  \right)\ .
 \end{equation}
 This is true independent of the choice of $H_B$; by \eqref{GVALEX1}, $H_B$ only figures in $G_0$, and $\cD_P$ does not depend on $G_0$.
 
Comparing this with \eqref{DAEX1DEF}, we see that  $\cD_P$  has the exact same form as $\cD_A$,  which we already know to be  ergodic and gapped with unique ground state $\pi_A$.
 
 Moreover, simple computations show that if $H_B =0$, or more generally if $H_B$ commutes with $\sigma_3$, then 
  $[H, \pi_A\otimes \pi_A] = [H_{AB}, \pi_A\otimes \pi_A] = 0$
  for all $\beta$.  Therefore, with such a choice of $H_B$,  $\pi_A\otimes \pi_A$ is the exact steady state for $\cK + \gamma \cD_A$ for all values of  $\gamma$.  It will be no surprise that in this case, simple computations show that $\cD_P^\sharp = \cD_P$. (The may be checked by following the procedure implemented just below.)
  
The situation is more interesting when we choose $H_B$  so that $H_P$ does not commute with $\sigma_3$.  By \eqref{HBHPT}, for an appropriate choice of $H_B$ we can arrange that $H_P$ is any self-adjoint operator on $\cH_B$, and we have seen that $\cD_P$ does not depend on the choice of $H_B$. 

In the rest of this example, we 
consider the case in which $H_B := \sigma_2 +\tanh(\beta/2)\one$ so that 
$H_P = \sigma_2$, and $\cD_P$ is given by \eqref{EX1DP}. 

  We now turn to the computation of $\cD_P^\sharp$ for this model, illustrating the application of Theorem~\ref{INTMAIN}. 
   The spectrum of $H_B$ is  $\{-1,1\}$, and the spectral projections are given by  $P_1= \frac12(\one + \sigma_2)$
  and $P_{-1} =  \frac12(\one - \sigma_2)$. 
  By \eqref{SpeciKP}, the spectrum of $i\cK_P$ consists of differences of eigenvalues of $H_p$, so that it is $\{-2,0,2\}$.  From \eqref{DPSHFO2}, for any $Z\in \widehat{\cH}_B$, we have the following expressions for $Z_\omega$, $\omega = -2,0,2$. 
  $$
Z_{-2} = P_{1}ZP_{-1}\ ,\quad Z_{0} = P_{-1}ZP_{-1} + P_{1}ZP_{1}\quad{\rm and}\quad Z_{2} = P_{-1}ZP_{1}\ .
  $$
We now appply this for $Z = V_1$ and $Z = V_2$ with $V_1,V_2$ given by 
\eqref{ProjJump}. Define 
$c := \sqrt{ \frac{ e^{-\beta/2} }{e^{\beta/2}+e^{-\beta/2}}}$ and $s := \sqrt{ \frac{ e^{\beta/2} }{e^{\beta/2}+e^{-\beta/2}}}$ so that
$V_1 = c\sigma_+$ and $V_2 = s\sigma_-$. 

 Simple computations yield $P_1\sigma_1P_{-1} =\frac12(\sigma_1 -i \sigma_3)$ and $P_1\sigma_1P_1 = P_{-1}\sigma_1 P_{-1} =0$. Likewise, $P_1\sigma_2P_{-1} =0$, $P_1\sigma_2P_1 = P_1$ and 
 $P_{-1}\sigma_2P_{-1} = -P_{-1}$. 
 Therefore, define
 \begin{equation}\label{aDEF}
 a := \frac12(\sigma_1 - i\sigma_3)\ ,
 \end{equation}
 and note that $a^2 =0$  and $a^\dagger a + a a^\dagger = \one$. 
 We compute  
 \begin{eqnarray*}
 V_{1,2} &=& \frac{c}{2} a\ , \quad V_{1,-2} = \frac{c}{2}a^\dagger \quad{\rm and}\quad V_{1,0} = \frac{ic}{2}\sigma_2\\
  V_{2,2} &=& \frac{s}{2} a\ , \quad V_{2,-2} = \frac{s}{2}a^\dagger \quad{\rm and}\quad V_{2,0} = \frac{is}{2}\sigma_2\ ,
 \end{eqnarray*}
 and then by \eqref{DPSHFO}, 
 \begin{equation}\label{DPSHARP}
 \cD_P^\sharp(R) = \frac12 a R a^\dagger + \frac12  a^\dagger R a + \frac12 \sigma_2 R \sigma_2 -  R\ ,
 \end{equation}
 independent of $\beta$,
 where $a$ is given by \eqref{aDEF}.
It follows that
  \begin{equation}\label{DPSHARP2}
 \cD_P^\sharp(\one) = 0\ ,\quad \cD_P^\sharp(a) = -a\ ,\quad  \cD_P^\sharp(a^\dagger) = -a^\dagger \ ,\quad \cD_P^\sharp(\sigma_2) = -\frac32\sigma_2\ .
 \end{equation}
 Therefore, the normalized identity, $\frac12\one$, is the unique steady state for  $\cD_P^\sharp$, independent of $\beta$. Note that $\cD_P^\sharp$ is self-adjoint 
 with respect to the GNS inner product for its unique steady state, and therefore it satisfies the detailed balance condition, which makes it a much simpler generator to work with than $\cK_P + \gamma^{-1}\cD_P$ for any finite 
 $\gamma$.  
 
Finally, simple but tedious computations with the Pauli basis  yield the explicit form of the operator $\cB_P$ defined in \eqref{BURDEF}, and hence the explicit form of the operator $\cK_P + \gamma^{-1}\cD_P + \gamma^{-2}\cB_P$.
 A convenient necessary and sufficient condition for a super-operator to generate a CPTP evolution due to Gorini, Kossakowski and Sudarshan \cite{GKS76} may be applied to show that this operator does not generate a CPTP evolution for any $\gamma>0$.  
 \end{example}

  \section{Oscillation averaging and the projected evolution}\label{InteractionPicture}

The main result of this section relates the semigroups generated by $\cL_{P,\gamma}$ and by $\cD_P^\sharp$.

\begin{thm}\label{PROJMOZLTH}   There are finite constants $C_0$ and $C_1$ independent of $\gamma$ such that for all $T>0$, 
\begin{equation}\label{PROJMOZLTH10} 
\| e^{-\tau \gamma \cK_P} e^{\gamma\tau   \cL_{P,\gamma}} - e^{\tau \cD_P^\sharp}\|_{1\to 1} \leq \frac{1}{\gamma} C_0T e_{\phantom{0}}^{C_1T} \ .
\end{equation}
\end{thm}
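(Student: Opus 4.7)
The plan is to pass to the interaction picture generated by $\cK_P$ and then exploit the oscillation-averaging bound of Theorem~\ref{AVETHM}. Set $U(\tau) := e^{-\gamma\tau \cK_P} e^{\gamma\tau \cL_{P,\gamma}}$ and $V(\tau) := e^{\tau \cD_P^\sharp}$; the target is $\|U(\tau)-V(\tau)\|_{1\to 1} \leq \gamma^{-1} C_0 T e^{C_1 T}$ on $[0,T]$. Since $\cL_{P,\gamma} = \cK_P + \gamma^{-1}\cD_P$ and $\cK_P$ commutes with $e^{-\gamma\tau\cK_P}$, a direct differentiation gives
\[
\tfrac{\rm d}{{\rm d}\tau} U(\tau) = e^{-\gamma\tau\cK_P}\cD_P e^{\gamma\tau\cL_{P,\gamma}} = \cD_P(\gamma\tau)\, U(\tau), \qquad \cD_P(s) := e^{-s\cK_P}\cD_P\, e^{s\cK_P}.
\]
Because $e^{\pm s\cK_P}$ is unitary conjugation and hence a trace-norm contraction, $\|\cD_P(s)\|_{1\to 1}\leq \|\cD_P\|_{1\to 1} =: M$ uniformly in $s$. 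By Theorem~\ref{INTMAIN}, $\cD_P^\sharp$ is a Lindblad generator, so $V(\tau)$ is CPTP and $\|V(\tau)\|_{1\to 1}\leq 1$; moreover $\|\cD_P^\sharp\|_{1\to 1}\leq M$ by Lemma~\ref{TSHBND}.

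With $E(\tau) := U(\tau)-V(\tau)$ and $U(0)=V(0)=\one$, the Volterra equations satisfied by $U$ and $V$ yield
\[
E(\tau) = \int_0^\tau \cD_P(\gamma s)\, E(s)\,{\rm d}s \;+\; \int_0^\tau [\cD_P(\gamma s) - \cD_P^\sharp]\,V(s)\,{\rm d}s.
\]
The first integral will be closed off by Gronwall's inequality; the work is in extracting a factor $\gamma^{-1}$ from the ``driving'' term, for which $\|\cD_P(\gamma s) - \cD_P^\sharp\|_{1\to 1}$ is only $O(1)$ pointwise. The idea is to integrate by parts against $V(s)$ so that the oscillatory cancellation hidden in $\cD_P(\gamma s) - \cD_P^\sharp$ is exposed. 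Set $G(\tau) := \int_0^\tau [\cD_P(\gamma s) - \cD_P^\sharp]\,{\rm d}s$. After the change of variables $t=\gamma s$,
\[
G(\tau) = \frac{1}{\gamma}\int_0^{\gamma\tau}[\cD_P(t)-\cD_P^\sharp]\,{\rm d}t,
\]
and applying Theorem~\ref{AVETHM} to the non-centered interval $[0,\gamma\tau]$ bounds the remaining integral by a constant depending only on $|\sigma(H_P)|$ and the spectral gap $b$ of $|\cK_P|$, uniformly in $\gamma\tau$. Hence $\|G(\tau)\|_{1\to 1}\leq C/\gamma$ for all $\tau\geq 0$.

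The identity $\tfrac{\rm d}{{\rm d}s}[G(s)V(s)] = [\cD_P(\gamma s)-\cD_P^\sharp]V(s) + G(s)\cD_P^\sharp V(s)$ together with $G(0)=0$ gives
\[
\int_0^\tau [\cD_P(\gamma s)-\cD_P^\sharp]\,V(s)\,{\rm d}s = G(\tau)V(\tau) - \int_0^\tau G(s)\,\cD_P^\sharp V(s)\,{\rm d}s,
\]
whose trace norm is at most $C\gamma^{-1}(1+M\tau)$. Plugging this into the error equation yields
\[
\|E(\tau)\|_{1\to 1} \leq \frac{C(1+M\tau)}{\gamma} + M\int_0^\tau \|E(s)\|_{1\to 1}\,{\rm d}s,
\]
and Gronwall's inequality gives $\|E(\tau)\|_{1\to 1}\leq \gamma^{-1}C(1+M\tau)e^{M\tau}$, which for $\tau\leq T$ can be absorbed into a bound of the stated form $\gamma^{-1}C_0 T e^{C_1 T}$ after adjusting constants. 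The main obstacle is precisely the bound on $G(\tau)$: without the integration by parts, one only gets the useless pointwise estimate $\|\cD_P(\gamma s)-\cD_P^\sharp\|_{1\to 1}\leq 2M$, so the oscillatory cancellation packaged in Theorem~\ref{AVETHM}, which rests on the gap $b$ of $|\cK_P|$, is what powers the $\gamma^{-1}$ decay.
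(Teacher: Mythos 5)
Your argument is correct, and it follows the same overall strategy as the paper (interaction picture, perturbation of a Volterra/integral equation, oscillation averaging), but the execution of the key step differs in a way worth noting. The paper casts $Y(\tau)=e^{-\gamma\tau\cK_P}e^{\gamma\tau\cL_{P,\gamma}}$ as the solution of two Volterra equations and invokes Theorem~\ref{VOLTIWTHM}; to extract the factor $\gamma^{-1}$ from the oscillatory driving term it performs a half-period translation inside $\int_0^\tau e^{i\gamma(\omega-\omega')\sigma}Y(\sigma)\,{\rm d}\sigma$ and uses the $\gamma$-uniform Lipschitz bound on $Y$ from Lemma~\ref{PROJMOZLLEM}. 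You instead integrate by parts against the \emph{known} smooth factor $V(s)=e^{s\cD_P^\sharp}$, so that the cancellation is packaged entirely in the antiderivative $G(\tau)=\gamma^{-1}\int_0^{\gamma\tau}[\cD_P(t)-\cD_P^\sharp]\,{\rm d}t$, which Theorem~\ref{AVETHM} (applied to the non-centered interval $[0,\gamma\tau]$, which is exactly what that theorem is set up for) bounds by $C/\gamma$ uniformly in $\tau$; you then close with Gronwall, which is equivalent to the paper's iteration bound. Your route has the advantage of not needing the Lipschitz estimate on the unknown $Y_P^{(\gamma)}$ and of reusing Theorem~\ref{AVETHM} rather than redoing the oscillation estimate by hand. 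One cosmetic remark: your final bound $\gamma^{-1}C(1+MT)e^{MT}$ does not literally have the form $\gamma^{-1}C_0Te^{C_1T}$ as $T\to 0$ (the boundary term $G(\tau)V(\tau)$ contributes an additive $C/\gamma$ with no factor of $T$), but the paper's own proof has exactly the same feature --- its boundary terms \eqref{BTERM1} and \eqref{BTERM2} are $O(1/\gamma)$ with no $T$ and are silently absorbed into the $T$ factor --- so this is a shared imprecision in the statement for small $T$, not a defect of your argument.
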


\begin{lem}\label{PROJMOZLLEM} 
For any  $\tau> 0$, define
$Y_P^{(\gamma)}(\tau) := e^{-\tau \gamma \cK_P} e^{\tau \gamma \cL_{P,\gamma}}$.  Then for all $\tau$, 
\begin{equation}\label{PROJMOZLLEM1} 
\|Y_P^{(\gamma)}(\tau) \|_{1\to1} \leq 1
\end{equation}
and as a function of $\tau$, $Y_P^{(\gamma)}$ is globally Lipschitz, uniformly in $\gamma$. More specifically, for all $\tau'> \tau \geq 0$. 
\begin{equation}\label{PROJMOZLLEM2} 
\|Y_P^{(\gamma)}(\tau') - Y_P^{(\gamma)}(\tau)\|_{1\to 1}    \leq   \|\cD_P\|_{1\to 1}(\tau' - \tau)
\end{equation}
\end{lem}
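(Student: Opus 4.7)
The plan is to handle both bounds by exploiting the fact that everything in sight is CPTP after Theorem~\ref{ALWAYS}. For the contraction bound \eqref{PROJMOZLLEM1}, I would note that $\cK_P R = -i[H_P,R]$ generates the unitary conjugation $e^{s\cK_P}R = e^{-isH_P} R\, e^{isH_P}$, which is CPTP (hence a trace-norm isometry) for every real $s$; in particular this holds with $s = -\tau\gamma$. By Theorem~\ref{ALWAYS}, $\cD_P$ is a Lindblad generator, so $\cL_{P,\gamma} = \cK_P + \gamma^{-1}\cD_P$ is as well, and therefore $e^{\tau\gamma \cL_{P,\gamma}}$ is CPTP for all $\tau \geq 0$. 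The general sub-multiplicative property of $\|\cdot\|_{1\to 1}$ then gives
\begin{equation*}
\|Y_P^{(\gamma)}(\tau)\|_{1\to 1} \leq \|e^{-\tau\gamma\cK_P}\|_{1\to 1}\,\|e^{\tau\gamma\cL_{P,\gamma}}\|_{1\to 1} \leq 1\cdot 1 = 1,
\end{equation*}
which is \eqref{PROJMOZLLEM1}.

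For the Lipschitz bound \eqref{PROJMOZLLEM2}, the key step is to differentiate $Y_P^{(\gamma)}$ and observe a cancellation. Since $\cK_P$ commutes with $e^{-\tau\gamma\cK_P}$,
\begin{equation*}
\frac{{\rm d}}{{\rm d}\tau} Y_P^{(\gamma)}(\tau)
= -\gamma e^{-\tau\gamma\cK_P}\cK_P\, e^{\tau\gamma\cL_{P,\gamma}} + \gamma\, e^{-\tau\gamma\cK_P}\cL_{P,\gamma}\, e^{\tau\gamma\cL_{P,\gamma}}
= e^{-\tau\gamma\cK_P}\cD_P\, e^{\tau\gamma\cL_{P,\gamma}},
\end{equation*}
where in the last equality I used $\cL_{P,\gamma} - \cK_P = \gamma^{-1}\cD_P$. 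Taking the super-operator trace norm and using that both $e^{-\tau\gamma\cK_P}$ and $e^{\tau\gamma\cL_{P,\gamma}}$ are CPTP (hence norm-$1$ as before), sub-multiplicativity gives
\begin{equation*}
\left\Vert \frac{{\rm d}}{{\rm d}\tau} Y_P^{(\gamma)}(\tau) \right\Vert_{1\to 1} \leq \|\cD_P\|_{1\to 1},
\end{equation*}
uniformly in $\tau$ and $\gamma$. Integrating from $\tau$ to $\tau'$ yields \eqref{PROJMOZLLEM2}.

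There is no real obstacle here; the only subtle point is that the apparent $\gamma$-blowup from the factor $\gamma$ on both exponents cancels after differentiation, which is why the uniform-in-$\gamma$ Lipschitz bound is available at all. That cancellation is precisely the reason for passing to the interaction picture in the first place, and it is what makes the limit \eqref{DALIMP8} tractable. The only external input beyond routine calculus is Theorem~\ref{ALWAYS}, which guarantees that $\cL_{P,\gamma}$ (and not just $\cK_P$) generates a CPTP semigroup.
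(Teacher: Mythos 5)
Your proof is correct and follows essentially the same route as the paper: part one uses that both factors are CPTP (the paper simply notes the composite $Y_P^{(\gamma)}(\tau)$ is CPTP), and part two rests on the same cancellation $\gamma(\cL_{P,\gamma}-\cK_P)=\cD_P$ together with trace-norm contractivity of the two semigroups. The only cosmetic difference is that you differentiate $Y_P^{(\gamma)}$ directly and integrate the bound, whereas the paper writes the same identity in integrated form via Duhamel's formula (which it reuses as a Volterra equation in the proof of Theorem~\ref{PROJMOZLTH}).
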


\begin{proof} The bound \eqref{PROJMOZLLEM1}  is a consequence of the fact that 
$Y_P^{(\gamma)}(\tau)$ is CPTP for all $\tau$.
Next, by Duhammel's formula,
${\displaystyle
e^{\tau \gamma \cL_{P,\gamma}}  = e^{\tau \gamma \cK_P} + \int_0^\tau e^{(\tau- \sigma) \gamma \cK_P} \cD_P e^{\sigma \gamma \cL_{P,\gamma}}  {\rm d}\sigma}$
and hence
\begin{equation}
Y_P^{(\gamma)}(\tau)  = \one +  \int_0^\tau e^{- \sigma \gamma \cK_P} \cD_Pe^{\sigma \gamma \cK_P} e^{-\sigma \cK_P} e^{\sigma \gamma \cL_{P,\gamma}}  {\rm d}\sigma
= \one +  \int_0^\tau e^{- \sigma \gamma \cK_P} \cD_Pe^{\gamma \sigma \cK_P} Y_P^{(\gamma)}(\sigma) {\rm d}\sigma\ .\label{PROJMOZLLEM3} 
\end{equation}
Therefore, for $\tau' > \tau > 0$, using the fact that $\cK_P$ generates a group of unitary conjugations and the trace norm is unitarily invariant, 
\begin{eqnarray*}
\|Y_P^{(\gamma)}(\tau') - Y_P^{(\gamma)}(\tau)\|_{1\to 1}   &\leq & \int_{\tau}^{\tau'} \left\Vert  e^{- \sigma \gamma \cK_P} \cD_Pe^{\gamma \sigma \cK_P} Y_P^{(\gamma)}(\tau)\right\Vert_{1\to 1}  {\rm d}\sigma\\
&\leq & \int_{\tau}^{\tau'} \left\Vert  \cD_Pe^{\gamma \sigma \cK_P} Y_P^{(\gamma)}(\tau)\right\Vert_{1\to 1}  {\rm d}\sigma\\
&\leq& \|\cD_P\|_{1\to1}  \int_{\tau}^{\tau'} \left\Vert  e^{\gamma \sigma \cK_P} Y_P^{(\gamma)}(\tau)\right\Vert_1 {\rm d}\sigma = \|\cD_P\|_{1\to 1}(\tau' - \tau)\ ,
\end{eqnarray*}
which proves \eqref{PROJMOZLLEM2}. 
\end{proof}

The proof of Theorem~\ref{PROJMOZLTH} takes advantage of the perturbation theory for Volterra integral equations. In fact, in \eqref{PROJMOZLLEM3} we have already written 
$Y_P^{(\gamma)}(\tau)$ is the solution to a Volterra integral equation. The relevant definition and results are recalled in Appendix~\ref{VOLTERRA}.

\begin{proof}[Proof of Theorem~\ref{PROJMOZLTH}] With $Y_P^{(\gamma)}$ defined as in Lemma~\ref{PROJMOZLLEM}, we must bound $\|Y_P^{(\gamma)}(\tau) - e^{\tau \cD_P^\sharp}\|_{1\to 1}$. We do this by showing that
$Y_P^{(\gamma)}$ arises as the solution of two Volterra integral equations, which we then compare using Theorem~\ref{VOLTIWTHM}. 

Define $\sC$ to be the space of continuous functions form $[0,T]$ into $\widehat{\cH}_B$ equipped with the norm $\|Y(\tau)\|_\sC := \max_{0\leq \tau \leq T}\{\|Y(\tau)\|_{1\to 1}\|\}$.

First, define  $\sH^{(\gamma)}$ to act on $\sC$ by
\begin{eqnarray}
\sH^{(\gamma)} Y(\tau) &:=&  \int_0^\tau e^{- \sigma \gamma \cK_P} \cD_Pe^{\gamma \sigma \cK_P} Y(\sigma) {\rm d}\sigma\nonumber\\
&=& \sum_{\mu,\nu,\mu',\nu'\in \sigma(H_P)} Q_{\mu,\nu} \cD_PQ_{\mu',\nu'}  \int_0^\tau e^{i\gamma(\omega - \omega')\sigma} Y(\sigma) {\rm d}\sigma\ ,\label{PROJMOZLTH5} 
\end{eqnarray}
where $\omega = \mu -\nu$ and $\omega' = \mu'-\nu'$.  Define $Z(\tau) = \one$ for all $\tau$.  Then by \eqref{PROJMOZLLEM3},  $Y_P^{(\gamma)}$ solves $Y = \sH Y + Z$. 

Second, define $\widetilde{\sH}^{(\gamma)}$ to act on $\sC$ by
$$
\widetilde{\sH}^{(\gamma)} Y(\tau) :=  \int_0^\tau  \cD_P^\sharp Y(\sigma) {\rm d}\sigma\ ,
$$
and define
$$
Z^{(\gamma)}(\tau) := \one + \sH^{(\gamma)} Y_P^{(\gamma)}(\tau) - \widetilde{\sH}^{(\gamma)} Y_P^{(\gamma)}(\tau)\ .
$$
Then 
$Y_P^{(\gamma)}$ solves the Volterra integral equation
$Y  = \widetilde{\sH}^{(\gamma)} Y  + Z^{(\gamma)}$.
Since $\widetilde{Y}_P^{(\gamma)} := e^{\tau \cD_P^\sharp}$ satisfies $\widetilde{Y}_P^{(\gamma)} = \widetilde{\sH}^{(\gamma)} \widetilde{Y}_P^{(\gamma)}  +Z$ where again
$Z(\tau) =\one$ for all $\tau$,
it follows from the second part of Theorem~\ref{VOLTIWTHM} that
$$
\| Y_P^{(\gamma)}  - \widetilde{Y}_P^{(\gamma)} \|_{\sC} \leq e^{2GT} \sup_{0 \leq \tau \leq T}\left\{\  \| \sH^{(\gamma)} Y_P^{(\gamma)}(\tau) - \widetilde{\sH}^{(\gamma)} Y_P^{(\gamma)}(\tau)\|_{1\to 1} \ \right\}\ .
$$
By \eqref{PROJMOZLTH5}, 
$$
\sH^{(\gamma)} Y_P^{(\gamma)}(\tau) - \widetilde{\sH}^{(\gamma)} Y_P^{(\gamma)}(\tau) =
\sum_{\mu,\nu,\mu',\nu'\in \sigma(H_P)\ ,\ \omega \neq \omega'} Q_{\mu,\nu} \cD_PQ_{\mu',\nu'}  \int_0^\tau e^{i\gamma(\omega - \omega')\sigma}  Y_P^{(\gamma)}(\sigma) {\rm d}\sigma
$$

Then since for $\omega \neq \omega'$,
\begin{eqnarray}
\int_0^\tau e^{i\gamma(\omega - \omega')\sigma}  Y_P^{(\gamma)}(\sigma) {\rm d}\sigma &=& 
\frac12\int_0^\tau \left(e^{i\gamma(\omega - \omega')\sigma} - e^{i\gamma(\omega - \omega')(\sigma+ \frac{\pi}{\gamma(\omega - \omega')}} \right)Y_P^{(\gamma)}(\sigma) {\rm d}\sigma\nonumber\\
&=& \frac12\int_{\frac{\pi}{\gamma(\omega - \omega')}}^\tau e^{i\gamma(\omega - \omega')\sigma}\left(Y_P^{(\gamma)}(\sigma) -Y_P^{(\gamma)}\left(\sigma- \tfrac{\pi}{\gamma(\omega - \omega')}\right)\right){\rm d}\sigma\label{LIPTERM}\\
&+& \frac12 \int_0^{\frac{\pi}{\gamma(\omega - \omega')}} e^{i\gamma(\omega - \omega')\sigma}Y_P^{(\gamma)}(\sigma) {\rm d}\sigma\label{BTERM1}\\
&-& \frac12 \int_\tau^{\tau +\frac{\pi}{\gamma(\omega - \omega')}} e^{i\gamma(\omega - \omega')\sigma}Y_P^{(\gamma)}\left(\sigma- \tfrac{\pi}{\gamma(\omega - \omega')}\right) {\rm d}\sigma\ .\label{BTERM2}
\end{eqnarray}
By \eqref{PROJMOZLLEM2}  of Lemma~\ref{PROJMOZLLEM}, 
$$\left\|Y_P^{(\gamma)}(\sigma) -Y_P^{(\gamma)}\left(\sigma- \tfrac{\pi}{\gamma(\omega - \omega')}\right)\right\|  \leq \|\cD_P\|_{1\to 1}\frac{\pi}{\gamma|\omega - \omega'|} \leq  \|\cD_P\|_{1\to 1}
\frac{\pi}{\gamma b}$$
where $b$ is the gap for $\cK_P$ as defined in \eqref{Kgap}. Therefore, the super-operator trace norm of the integral in \eqref{LIPTERM} is no more than $ \frac12 \|\cD_P\|_{1\to 1}
\frac{T\pi}{\gamma b}$. 
By \eqref{PROJMOZLLEM1}, $\|Y_P^{(\gamma)}(\sigma)\|_{1\to1} \leq 1$ for all $\sigma$, and hence for all $0 \leq \tau \leq T$, the super-operator trace norm of each of the integrals in \eqref{BTERM1} and \eqref{BTERM2} is bounded by $\frac12 \frac{\pi}{\gamma b}$.  Altogether, 
$$
\left\Vert \int_0^\tau e^{i\gamma(\omega - \omega')\sigma}  Y_P^{(\gamma)}(\sigma) {\rm d}\sigma\right\Vert_{1\to1} \leq \left(\frac{1}{2} \|\cD_P\|_{1\to 1} + 1\right)\frac{T \pi}{\gamma |\omega - \omega'|}\ .
$$
By \eqref{DAVLIM54}, $\|Q_{\mu,\nu}\|_{1\to 1} \leq 1$ for all $\mu,\nu$, and hence
for each $\mu,\nu,\mu',\nu'$, $\|Q_{\mu,\nu} \cD_PQ_{\mu',\nu'} \|_{1\to1} \leq  \|\cD_P\|_{1\to1}$.  The cardinality of the set $\{\mu,\nu,\mu',\nu'\in \sigma(H_P)\ ,\ \omega \neq \omega'\}$ is evidently no more that 
$|\sigma(H_P)|^4$, where $|\sigma(H_P)|$ is the number of distinct eigenvalues of $H_P$. Therefore,
$$
\|\sH^{(\gamma)} Y_P^{(\gamma)}(\tau) - \widetilde{\sH}^{(\gamma)} Y_P^{(\gamma)}(\tau) \|_{1\to1} \leq  |\sigma(H_P)|^4 \|\cD_P\|_{1\to 1} \left(\frac{1}{2} \|\cD_P\|_{1\to 1} + 1\right)\frac{T \pi}{\gamma b}
$$
where $b$ is the gap for $\cK_P$ as defined in \eqref{Kgap}. 
This proves \eqref{PROJMOZLTH10}. 
\end{proof}

Combing Theorem~\ref{PROJMOZLTH} with Theorem~\ref{COHERENTSC}  yields tight intertwining relation between the semigroups $e^{\gamma \tau \cL_\gamma}$ and $e^{\tau \cD_P^\sharp}$.

\begin{thm}\label{PROJMOZLTHA}   There are finite constants $C_0$ and $C_1$ independent of $\gamma$ such that for all $T>0$, and all density matrices $R_0$ on $\cH_B$, 
\begin{equation}\label{PROJMOZLTHA10} 
\| e^{-\tau \gamma \cK_P} \tr_A e^{\gamma\tau   \cL_{\gamma}}\cV R_0   - e^{\tau \cD_P^\sharp}R_0\|_{1} \leq \frac{1}{\gamma} C_0T e_{\phantom{0}}^{C_1T} \ .
\end{equation}
\end{thm}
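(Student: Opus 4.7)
The plan is to derive the bound by combining Theorem~\ref{COHERENTSC} with Theorem~\ref{PROJMOZLTH} through the triangle inequality, inserting the intermediate quantity $e^{-\tau\gamma\cK_P}e^{\gamma\tau\cL_{P,\gamma}}R_0$:
\begin{align*}
\| e^{-\tau\gamma\cK_P}\tr_A e^{\gamma\tau\cL_\gamma}\cV R_0 - e^{\tau\cD_P^\sharp}R_0\|_1
&\leq \| e^{-\tau\gamma\cK_P}\bigl(\tr_A e^{\gamma\tau\cL_\gamma}\cV R_0 - e^{\gamma\tau\cL_{P,\gamma}}R_0\bigr)\|_1 \\
&\quad + \| e^{-\tau\gamma\cK_P}e^{\gamma\tau\cL_{P,\gamma}}R_0 - e^{\tau\cD_P^\sharp}R_0\|_1.
\end{align*}

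First I would handle the first term. Since $\cK_P$ generates a group of unitary conjugations, $e^{-\tau\gamma\cK_P}$ is an isometry in the trace norm, so that term equals $\|\tr_A e^{\gamma\tau\cL_\gamma}\cV R_0 - e^{\gamma\tau\cL_{P,\gamma}}R_0\|_1$. I would then invoke Theorem~\ref{COHERENTSC} with the running time $t := \gamma\tau$ and the enlarged time horizon $T' := \gamma T$. Since $0\leq t\leq T'$, the cited estimate gives the bound $\gamma^{-2}\,C\,T' = \gamma^{-1} C T$, where $C$ depends only on $\cD_A$ and $\|H\|_\infty$. Crucially, the linearity of the Theorem~\ref{COHERENTSC} bound in $T$ converts one power of $\gamma$ from the denominator into the factor $T$, which is precisely what is needed.

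Second, the remaining term is bounded directly by Theorem~\ref{PROJMOZLTH} (applied to $R_0$ rather than taking the full super-operator norm), yielding $\gamma^{-1}\,C_0' T\, e^{C_1' T}$ with constants independent of $\gamma$. Combining the two estimates and harmonizing the constants gives the desired $\gamma^{-1} C_0 T e^{C_1 T}$.

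I do not anticipate a genuine obstacle here; the statement is designed to be the easy triangle-inequality composition of the two deeper results already proved. The only point requiring attention is the rescaling of the time horizon from $T$ to $\gamma T$ when applying Theorem~\ref{COHERENTSC}, and verifying that the $T$-linearity of that bound is exactly what compensates the extra factor $\gamma$ so that both contributions end up of the same order $\gamma^{-1}$.
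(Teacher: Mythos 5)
Your proposal is correct and is essentially identical to the paper's own proof: the paper likewise rescales $t\mapsto\gamma\tau$, $T\mapsto\gamma T$ in Theorem~\ref{COHERENTSC}, uses the unitary invariance of the trace norm to insert $e^{-\tau\gamma\cK_P}$, and then combines with Theorem~\ref{PROJMOZLTH} via the triangle inequality.
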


\begin{proof} Replacing $t$ by $\gamma\tau$ and $T$ by $\gamma T$ in \eqref{COHERENTSC1} of Theorem~\ref{COHERENTSC} 
  \begin{equation}\label{COHERENTSC1Z}
 \| \tr_A  e^{\gamma\tau\cL_\gamma}\cV R_0 - e^{\gamma\tau\cL_{P,\gamma}}R_0\|_1 \leq  \frac{1}{\gamma} CT  \ 
 \end{equation}
 for all $\tau \leq \gamma T$. By the unitary invariance of the trace norm, 
  \begin{equation}\label{COHERENTSC1Z1}
   \| \tr_A  e^{\gamma\tau\cL_\gamma}\cV R_0 - e^{\gamma\tau\cL_{P,\gamma}}R_0\|_1 = \| e^{-\tau \gamma \cK_P}\tr_A  e^{\gamma\tau\cL_\gamma}\cV R_0 - e^{-\tau \gamma \cK_P}e^{\gamma\tau\cL_{P,\gamma}}R_0\|_1\ .
    \end{equation}
By Theorem~\ref{PROJMOZLTH},
 \begin{equation}\label{COHERENTSC1Z2}
\| e^{-\tau \gamma \cK_P} e^{\gamma\tau R_0    \cL_{P,\gamma}} - e^{\tau \cD_P^\sharp}R_0 \|_{1} \leq \frac{1}{\gamma} C_0T e_{\phantom{0}}^{C_1T} \ .
 \end{equation}
By \eqref{COHERENTSC1Z},  \eqref{COHERENTSC1Z1} and \eqref{COHERENTSC1Z2} together with the triangle inequality, we obtain
\eqref{PROJMOZLTHA10} after harmonizing the constants. 
\end{proof}

We now apply Theorem~\ref{PROJMOZLTH} and Theorem~\ref{PROJMOZLTHA}  to mixing times. 
The next theorem relates $t_{{\rm mix}}(\cD_P^\sharp)$ to ${\displaystyle \frac{t_{{\rm mix}}(\cL_{P,\gamma},\epsilon')}{\gamma}}$ and ${\displaystyle \frac{t_{{\rm mix}}(\cL_{\gamma},\epsilon')}{\gamma}}$.

\begin{lem}\label{MIXRELSH} {\it (1)} For all $0 < \epsilon < \epsilon' <\tfrac12$, 
\begin{equation}\label{MIXRELSH1}
\varlimsup_{\gamma\to\infty}\frac{t_{{\rm mix}}(\cL_{P,\gamma},\epsilon')}{\gamma} \leq  
t_{{\rm mix}}(\cD_P^\sharp,\epsilon) \quad {\rm and}\quad  \varlimsup_{\gamma\to\infty}\frac{t_{{\rm mix}}(\cL_{\gamma},\epsilon')}{\gamma} \leq  t_{{\rm mix}}(\cD_P^\sharp,\epsilon)\\ .
\end{equation}
 In particular,  when $\cD_P^\sharp$ is ergodic and gapped, for all sufficiently large  $\gamma$, $\cL_{P,\gamma}$ and $\cL_\gamma$ are ergodic and gapped. 

\smallskip
\noindent{\it (2)} For all $0 < \epsilon < \epsilon' <\tfrac12$,
$$
t_{{\rm mix}}(\cD_P^\sharp,\epsilon') \leq \varliminf_{\gamma\to \infty}\frac{t_{{\rm mix}}(\cL_{P,\gamma},\epsilon)}{\gamma}
\quad{\rm and}\quad 
t_{{\rm mix}}(\cD_P^\sharp,\epsilon') \leq \varliminf_{\gamma\to \infty}\frac{t_{{\rm mix}}(\cL_{\gamma},\epsilon)}{\gamma}
$$
In particular, if for some $0 <\epsilon <\tfrac12$, either ${\displaystyle
\varlimsup_{\gamma\to\infty}\frac{t_{{\rm mix}}(\cL_{P,\gamma},\epsilon)}{\gamma}<\infty}$ or 
${\displaystyle
\varlimsup_{\gamma\to\infty}\frac{t_{{\rm mix}}(\cL_{P,\gamma},\epsilon)}{\gamma}<\infty}$, then $\cD_P^\sharp$ is ergodic and gapped. 
\end{lem}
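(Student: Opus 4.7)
The plan is to exploit Theorems~\ref{PROJMOZLTH} and~\ref{PROJMOZLTHA} to compare the rescaled semigroups $e^{\gamma\tau\cL_{P,\gamma}}$ and $e^{\gamma\tau\cL_\gamma}$ with $e^{\tau\cD_P^\sharp}$ at the rescaled time $\tau$, using two elementary facts: (i) $e^{-\gamma\tau\cK_P}$ is a unitary conjugation and hence a trace-norm isometry, and (ii) $\tr_A$ and $\cV$ are CPTP and hence trace-norm contractions. Everything else is a triangle-inequality argument combined with the monotonicity of $d_{{\rm TV}}$ under CPTP evolution.

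For the upper bound in part (1) for $\cL_{P,\gamma}$, fix $\tau^* > t_{{\rm mix}}(\cD_P^\sharp,\epsilon)$. For any density matrices $R_0,R_1$ on $\cH_B$, the isometry of $e^{-\gamma\tau^*\cK_P}$, Theorem~\ref{PROJMOZLTH}, and the triangle inequality yield
\[
\|e^{\gamma\tau^*\cL_{P,\gamma}}R_0 - e^{\gamma\tau^*\cL_{P,\gamma}}R_1\|_1 \leq \|e^{\tau^*\cD_P^\sharp}R_0 - e^{\tau^*\cD_P^\sharp}R_1\|_1 + \frac{2}{\gamma}C_0\tau^* e^{C_1\tau^*} < 2\epsilon + o_\gamma(1),
\]
which is below $2\epsilon'$ for all sufficiently large $\gamma$. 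Thus $t_{{\rm mix}}(\cL_{P,\gamma},\epsilon') \leq \gamma\tau^*$, and sending $\tau^* \downarrow t_{{\rm mix}}(\cD_P^\sharp,\epsilon)$ gives the bound. The $\cL_\gamma$ version proceeds identically after reducing arbitrary initial data $\rho_0,\rho_1$ on $\cH_{AB}$ to $\cV R_i \in \cM$ (with $R_i = \tr_A\rho_i$) via Theorem~\ref{EULLIM} at cost $O(\log\gamma/\gamma)$, decomposing $e^{\gamma\tau^*\cL_\gamma}\cV R_i = \pi_A\otimes\tr_A e^{\gamma\tau^*\cL_\gamma}\cV R_i + O(1/\gamma)$ via \eqref{SOM2S} of Theorem~\ref{TZCVS}, and invoking Theorem~\ref{PROJMOZLTHA} in place of Theorem~\ref{PROJMOZLTH}.

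Part (2) runs in the reverse direction along a subsequence. If $L := \varliminf_\gamma t_{{\rm mix}}(\cL_{P,\gamma},\epsilon)/\gamma < \infty$, then along some $\gamma_k\to\infty$ and for any $\tau^* > L$, one eventually has $\|e^{\gamma_k\tau^*\cL_{P,\gamma_k}}R_0 - e^{\gamma_k\tau^*\cL_{P,\gamma_k}}R_1\|_1 < 2\epsilon$ for all $R_0,R_1$. Applying the isometry $e^{-\gamma_k\tau^*\cK_P}$ and Theorem~\ref{PROJMOZLTH} in the opposite direction,
\[
\|e^{\tau^*\cD_P^\sharp}R_0 - e^{\tau^*\cD_P^\sharp}R_1\|_1 \leq 2\epsilon + \frac{2}{\gamma_k}C_0\tau^* e^{C_1\tau^*},
\]
and sending $\gamma_k \to \infty$ gives $\leq 2\epsilon < 2\epsilon'$, so $t_{{\rm mix}}(\cD_P^\sharp,\epsilon') \leq \tau^*$; then $\tau^* \downarrow L$ concludes. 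For the $\cL_\gamma$ version one specializes the mixing hypothesis to the legitimate initial states $\cV R_0, \cV R_1 \in \cM \subset \widehat{\cH}_{AB}$, applies the contraction $\tr_A$ to the evolved states, and uses Theorem~\ref{PROJMOZLTHA} in place of Theorem~\ref{PROJMOZLTH}. Both ``In particular'' statements then follow from the finite-dimensional equivalence (Theorem~\ref{TMIXK}) between ergodicity-plus-gap and finiteness of mixing times.

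The main technical wrinkle is the asymmetry between $\cL_\gamma$, which acts on $\widehat{\cH}_{AB}$, and $\cD_P^\sharp$, which acts on $\widehat{\cH}_B$: for the forward direction one must project $\rho_0,\rho_1$ down onto $\cM$ using Theorem~\ref{EULLIM}, whereas for the reverse direction one lifts $R_0,R_1$ up via $\cV$ so that the mixing hypothesis on $\cL_\gamma$ applies to them. In both cases the necessary approximation is already packaged by Theorem~\ref{PROJMOZLTHA}, so no deeper obstacle arises beyond careful bookkeeping of the $O(\log\gamma/\gamma)$ and $O(\tau^* e^{C_1\tau^*}/\gamma)$ error terms, both of which vanish uniformly in $R_0,R_1$ as $\gamma\to\infty$ for fixed $\tau^*$.
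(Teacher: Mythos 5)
Your proposal is correct and follows essentially the same route as the paper: both parts rest on transferring total-variation distances between the rescaled semigroups and $e^{\tau\cD_P^\sharp}$ via Theorems~\ref{PROJMOZLTH} and~\ref{PROJMOZLTHA}, together with the monotonicity of $d_{{\rm TV}}$ under CPTP maps and the finite-mixing-time criteria from Appendix~\ref{MTAPP}. If anything you are more explicit than the paper about the $\cL_\gamma$ case, where the reduction of general initial data to $\cM$ via Theorem~\ref{EULLIM} and the use of \eqref{SOM2S} are genuinely needed but are left implicit in the paper's ``proved in exactly the same way'' remark.
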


\begin{proof}
Let $R_0$ and $R_1$ be any two density matrices on $\cH_B$. Then since
$$
 e^{\gamma \tau \cK_P}(e^{\gamma\tau \cL_{P,\gamma}} R_0 - e^{\gamma\tau \cL_{P,\gamma}}R_1) - (e^{\tau \cD_P^\sharp}R_0 - e^{\tau \cD_P^\sharp}R_1)  = 
( e^{\gamma \tau \cK_P}e^{\gamma\tau \cL_{P,\gamma}}- e^{\tau \cD_P^\sharp})(R_0- R_1)\ ,
$$
it follows from Theorem~\ref{PROJMOZLTH} that 
${\displaystyle 
\left| \|e^{\gamma\tau \cL_{P,\gamma}} R_0 - e^{\gamma\tau \cL_{P,\gamma}}R_1 \|_1 - \|e^{\tau \cD_P^\sharp}R_0 - e^{\tau \cD_P^\sharp}R_1\|_1\right| \leq \frac{2}{\gamma} C_0T e_{\phantom{0}}^{C_1T} }$.
Therefore
\begin{equation}\label{MIXRELSH2}
\left|d_{{\rm TV}}(e^{\gamma\tau \cL_{P,\gamma}} R_0, e^{\gamma\tau \cL_{P,\gamma}} R_1) - d_{{\rm TV}}(e^{\tau \cD_P^\sharp}R_0,e^{\tau \cD_P^\sharp}R_1)\right| \leq
\frac{1}{\gamma} C_0T e_{\phantom{0}}^{C_1T}\ .
\end{equation}
It follows that for $\tau \geq t_{{\rm mix}}(\cD_P^\sharp,\epsilon)$, so that $d_{{\rm TV}}(e^{\tau \cD_P^\sharp}R_0,e^{\tau \cD_P^\sharp}R_1) \leq \epsilon$, 
\begin{equation*}
d_{{\rm TV}}(e^{\gamma\tau \cL_{P,\gamma}} R_0, e^{\gamma\tau \cL_{P,\gamma}} R_1) \leq \epsilon + \frac{1}{\gamma} C_0(t_{{\rm mix}}(\cD_P^\sharp,\epsilon)) e_{\phantom{0}}^{C_1 t_{{\rm mix}}(\cD_P^\sharp,\epsilon)}
\end{equation*}
Hence for all $\gamma$ such that $(\epsilon' -\epsilon)\gamma \geq  C_0(t_{{\rm mix}}(\cD_P^\sharp)) e_{\phantom{0}}^{C_1 t_{{\rm mix}}(\cD_P^\sharp)}$,
$d_{{\rm TV}}(e^{\gamma\tau \cL_{P,\gamma}} R_0, e^{\gamma\tau \cL_{P,\gamma}} R_1) \leq \epsilon'$. This proves the first inequality in \eqref{MIXRELSH1}. The second is proved in exactly the same way
using Theorem~\ref{PROJMOZLTHA} in place of Theorem~\ref{PROJMOZLTH}.
The final statement now follows from Theorem~\ref{POSLEM}. This proves  {\it (1)}.

To prove {\it (2)}, suppose that ${\displaystyle \tau_* := \varliminf_{\gamma\to \infty}\frac{t_{{\rm mix}}(\cL_{P,\gamma},\epsilon)}{\gamma} < \infty}$. The for $\tau \geq \tau_*$, there exist arbitrarily large $\gamma$ such that 
$t_{{\rm mix}}(\cL_{P,\gamma})\leq \gamma \epsilon$. 
By \eqref{MIXRELSH2}, for such $\tau$ and $\gamma$,
$$
d_{{\rm TV}}(e^{\tau \cD_P^\sharp}R_0,e^{\tau \cD_P^\sharp}R_1) \leq \epsilon +  \frac{1}{\gamma} C_0T e_{\phantom{0}}^{C_1T}\ .
$$
For sufficiently large $\gamma$, the right side is less than $\epsilon'$, while the left side is independent of $\gamma$. This proves the first statement, and the second is proved in exactly the same way
using Theorem~\ref{PROJMOZLTHA} in place of Theorem~\ref{PROJMOZLTH}.
The final statement now follows from Theorem~\ref{POSLEM}.
\end{proof}

A point of continuity of 
$t_{{\rm mix}}(\cD_P^\sharp(\epsilon))$ is a value 
$0 < \epsilon_0<\tfrac12$ such that ${\displaystyle \lim_{\epsilon\to \epsilon_0}t_{{\rm mix}}(\cD_P^\sharp,\epsilon) =
t_{{\rm mix}}(\cD_P^\sharp,\epsilon_0)}$. Since $t_{{\rm mix}}(\cD_P^\sharp(\epsilon)$ increases monotonically on $(0,\tfrac12)$,
all but countably many $\epsilon_0\in (0,\tfrac12)$ are points of continuity. 

\begin{thm}\label{TMIXZL3} Suppose that $t_{{\rm mix}}(\cD_P^\sharp,\epsilon) <\infty$ for some, and hence all $0 < \epsilon < \infty$. Then at all points of continuity $\epsilon_0$ of 
$t_{{\rm mix}}(\cD_P^\sharp,\epsilon)$,
${\displaystyle \lim_{\gamma\to\infty} \frac{t_{{\rm mix}}(\cL_{\gamma},\epsilon_0)}{\gamma}}$ and  
${\displaystyle \lim_{\gamma\to\infty} \frac{t_{{\rm mix}}(\cL_{P,\gamma},\epsilon_0)}{\gamma}}$ and satisfy
$$
\lim_{\gamma\to\infty} \frac{t_{{\rm mix}}(\cL_{\gamma},\epsilon_0)}{\gamma} = 
\lim_{\gamma\to\infty} \frac{t_{{\rm mix}}(\cL_{P,\gamma},\epsilon_0)}{\gamma} = t_{{\rm mix}}(\cD_P^\sharp,\epsilon_0)\ .
$$
\end{thm}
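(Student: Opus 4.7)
The plan is to show that at a point of continuity $\epsilon_0$ of the (monotone) function $\epsilon \mapsto t_{{\rm mix}}(\cD_P^\sharp,\epsilon)$, the $\varlimsup$ and $\varliminf$ in Lemma~\ref{MIXRELSH} can be squeezed together by letting the auxiliary tolerance approach $\epsilon_0$ from the appropriate side. Concretely, I would run the sandwich for $\cL_{P,\gamma}$ first and then observe that the argument for $\cL_\gamma$ is literally identical because Lemma~\ref{MIXRELSH} provides the analogous two-sided bounds for both generators.

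For the upper bound, fix $\epsilon \in (0,\epsilon_0)$. Applying part (1) of Lemma~\ref{MIXRELSH} with $\epsilon' = \epsilon_0$ gives
\begin{equation*}
\varlimsup_{\gamma\to\infty}\frac{t_{{\rm mix}}(\cL_{P,\gamma},\epsilon_0)}{\gamma} \leq t_{{\rm mix}}(\cD_P^\sharp,\epsilon).
\end{equation*}
Since $t_{{\rm mix}}(\cD_P^\sharp,\cdot)$ is (weakly) decreasing and continuous at $\epsilon_0$, letting $\epsilon \uparrow \epsilon_0$ yields
\begin{equation*}
\varlimsup_{\gamma\to\infty}\frac{t_{{\rm mix}}(\cL_{P,\gamma},\epsilon_0)}{\gamma} \leq t_{{\rm mix}}(\cD_P^\sharp,\epsilon_0).
\end{equation*}
For the lower bound, fix $\epsilon' \in (\epsilon_0,\tfrac12)$ and apply part (2) of Lemma~\ref{MIXRELSH} with $\epsilon = \epsilon_0$ to obtain
\begin{equation*}
t_{{\rm mix}}(\cD_P^\sharp,\epsilon') \leq \varliminf_{\gamma\to\infty}\frac{t_{{\rm mix}}(\cL_{P,\gamma},\epsilon_0)}{\gamma}.
\end{equation*}
Letting $\epsilon' \downarrow \epsilon_0$ and again using continuity at $\epsilon_0$ gives
\begin{equation*}
t_{{\rm mix}}(\cD_P^\sharp,\epsilon_0) \leq \varliminf_{\gamma\to\infty}\frac{t_{{\rm mix}}(\cL_{P,\gamma},\epsilon_0)}{\gamma}.
\end{equation*}

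Stringing these two bounds together, we find
\begin{equation*}
t_{{\rm mix}}(\cD_P^\sharp,\epsilon_0) \leq \varliminf_{\gamma\to\infty}\frac{t_{{\rm mix}}(\cL_{P,\gamma},\epsilon_0)}{\gamma} \leq \varlimsup_{\gamma\to\infty}\frac{t_{{\rm mix}}(\cL_{P,\gamma},\epsilon_0)}{\gamma} \leq t_{{\rm mix}}(\cD_P^\sharp,\epsilon_0),
\end{equation*}
so the $\varliminf$ equals the $\varlimsup$, the limit exists, and it coincides with $t_{{\rm mix}}(\cD_P^\sharp,\epsilon_0)$. The same two-step sandwich with $\cL_\gamma$ in place of $\cL_{P,\gamma}$ works verbatim, using the second half of each assertion in Lemma~\ref{MIXRELSH}. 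There is no real obstacle: everything substantive is already packaged in Lemma~\ref{MIXRELSH}, which in turn rests on the quantitative TV estimate \eqref{MIXRELSH2} derived from Theorems~\ref{PROJMOZLTH} and~\ref{PROJMOZLTHA}. The only point to check carefully is that the hypothesis $t_{{\rm mix}}(\cD_P^\sharp,\epsilon) < \infty$ for some (hence all) $\epsilon \in (0,\tfrac12)$ guarantees that the quantities being sent to the limit are finite, so that the $\epsilon \uparrow \epsilon_0$ and $\epsilon' \downarrow \epsilon_0$ limits on the right-hand sides make sense and equal $t_{{\rm mix}}(\cD_P^\sharp,\epsilon_0)$ precisely at continuity points; this is where the assumption that $\epsilon_0$ is a point of continuity (of which all but countably many $\epsilon_0 \in (0,\tfrac12)$ qualify, by monotonicity) enters in an essential way.
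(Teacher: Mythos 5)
Your proposal is correct and follows essentially the same route as the paper: both squeeze $\varliminf$ and $\varlimsup$ between $t_{{\rm mix}}(\cD_P^\sharp,\epsilon_0-\delta)$ and $t_{{\rm mix}}(\cD_P^\sharp,\epsilon_0+\delta)$ via the two parts of Lemma~\ref{MIXRELSH} and then invoke continuity at $\epsilon_0$ (the paper parametrizes both one-sided approaches by a single $\delta$, you use separate $\epsilon\uparrow\epsilon_0$ and $\epsilon'\downarrow\epsilon_0$, which is the same argument). No gaps.
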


\begin{proof}
For any point of continuity $\epsilon_0$ of 
$t_{{\rm mix}}(\cD_P^\sharp(\epsilon))$, and all $\delta>0$ such that
$0 < \epsilon_0-\delta < \epsilon+\delta < \tfrac12$,
$$
t_{{\rm mix}}(\cD_P^\sharp,\epsilon_0-\delta) \leq \varliminf_{\gamma\to \infty}\frac{t_{{\rm mix}}(\cL_{\gamma},\epsilon_0)}{\gamma}
\leq 
\varlimsup_{\gamma\to \infty}\frac{t_{{\rm mix}}(\cL_{\gamma},\epsilon_0)}{\gamma} \leq  t_{{\rm mix}}(\cD_P^\sharp,\epsilon_0+\delta)\ 
$$
where we have used the bounds from Lemma~\ref{MIXRELSH}.
Now take the limit $\delta \to 0$ to conclude that 
${\displaystyle \lim_{\gamma\to\infty} \frac{t_{{\rm mix}}(\cL_{\gamma},\epsilon_0)}{\gamma}}$ exists and equals $t_{{\rm mix}}(\cD_P^\sharp,\epsilon_0)$. The statement for $\cL_{P,\gamma}$ is proved in the same way. 
\end{proof}

 \section{Application to  the computation of steady states}\label{STSTEXP}
 
 In this section, we make the additional assumption that $\cD_P^\sharp$ is ergodic and gapped so that it has a finite mixing time. Let $\bar R$ 
 denote the  unique steady state for $\cD_P^\sharp$.  By  part {\it (1)} of Lemma~\ref{DHARPLEM}, $\bar R = \bar R^\sharp$, or what is the same thing, $\cK_P \bar R =0$. 
Moreover,  $\cL_\gamma$ is also ergodic with a finite mixing time. Let $\bar \rho_\gamma$ denote 
  the  unique steady state for $\cL_\gamma$.

As we have seen,  for large enough $\gamma$,  $\bar\rho_\gamma \approx \pi_A \otimes \bar R$.  To get a better approximation to $\bar\rho_\gamma$, we  assume an expansion of the form
\begin{equation}\label{SSTEX0}
 \bar\rho_\gamma =  \pi_A\otimes \bar R + \frac{1}{\gamma}\bar n \quad{\rm where}\quad \bar n = \sum_{k=0}^\infty \gamma^{-k}\bar n_k\ .
\end{equation}
We shall show that is formal assumption leads to a system of equations for the $\bar n_k$, $k\geq 0$, that has a unique solution. Moreover, we then show with this choice of the $\bar n_k$, the sum in \eqref{SSTEX0} converges for sufficiently large $\gamma$, producing the unique steady state of 
$\cL_\gamma$. This finally proves that $\bar \rho_\gamma$ does indeed have the expansion in \eqref{SSTEX0}.

Inserting \eqref{SSTEX0}  into $\cL_\gamma \bar\rho_\gamma =0$ and equating like powers of $\gamma$ leads to the equations
\begin{equation}\label{SSTEX1}
\cD \bar n_0 = - \cK(\pi\otimes \bar R)
\end{equation}
and for $k\geq 1$,
\begin{equation}\label{SSTEX2}
\cD \bar n_k = - \cK(\bar n_{k-1})
\end{equation}

Throughout this section,  $\cX$ denotes the space of traceless operators on $\cH_B$,  
$\cV$ denotes ${\rm ran}(\cK_P)$ and $\cW$ denotes ${\rm ker}(\cK_P)\cap \cX$.  Since $\cK_P$ is skew adjoint, its kernel and range are orthogonal complements in $\widehat{\cH}_B$, 
and evidently every operator in the range of $\cK_P$ is traceless. That is, $\cV \subset \cX$. It follows that $\cX$ is the orthogonal direct sum of $\cV$ and $\cW$:
 \begin{equation}\label{DHARPLEM1}
 \cX = \cV \oplus \cW \ ;
 \end{equation}
 Let $P_{\cW}$ denote the orthogonal projection in $\widehat{\cH}_B$ onto ${\rm ker}(\cK_P)$ and note that its restriction to $\cX$ is the orthogonal projection in $\cX$ onto $\cW$. By the von Neumann Ergodic Theorem,
  \begin{equation*}
 P_{\cW}  = \lim_{T\to\infty}\frac{1}{2T}\int_{-T}^T e^{t\cK_P}{\rm d} t\ . 
 \end{equation*}
 It follows that $P_{\cW}$ is a CPTP operation, and hence $\|P_{\cW}\|_{1\to 1} =1$.

In solving the equations \eqref{SSTEX1} and \eqref{SSTEX2}, the following lemma will be essential.

 \begin{lem}\label{DHARPLEME} 
 
  \smallskip
 
 \noindent{\it (1)} For all $X\in {\rm ker} (\cK_P)$, $X\in {\rm ker}(\cD_P^\sharp)$ if and only if $\cD_P(X) \in {\rm ran}(\cK_P)$.
 
  \smallskip
 
 \noindent{\it (2)} for all $V'\in \cV$, there is a unique $V\in \cV$ such that $\cK_P V = V'$, and 
 $\|V\|_1 \leq b^{-1}\|V'\|_1$
 where $b>0$ is the spectral gap of $|\cK_P|$; see \eqref{Kgap}. 
 
   \smallskip
 
 \noindent{\it (3)}  
 Under the further hypothesis  that  $\cD_P^\sharp$ is ergodic,  for every $X\in \cX$ there exist uniquely determined $V\in \cV$ and $W\in \cW$ such that
 \begin{equation}\label{DHARPLEM2}
 X =  \cK_P(V) + \cD_P(W)\ .
 \end{equation}
 Moreover, there are finite  constants $C_\cV$ and $C_\cW$ such that 
$\|V\|_1 \leq C_\cV \|X\|_1$ $\|W\|_1 \leq C_\cW \|X\|_1$.
 \end{lem}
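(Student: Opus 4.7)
The plan is to handle the three parts in sequence, with Parts (1) and (2) providing the linear-algebraic tools needed for Part (3).

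For Part (1), the key observation is that if $X\in\ker(\cK_P)$, then $e^{-t\cK_P}X = X$ for all $t$, so the Davies averaging \eqref{DAVLIM56} collapses to
\[
\cD_P^\sharp X \;=\; \lim_{T\to\infty}\frac{1}{2T}\int_{-T}^{T}e^{t\cK_P}\cD_P(X)\,dt \;=\; P_{\ker(\cK_P)}\bigl(\cD_P X\bigr),
\]
the last equality being the von Neumann ergodic theorem embodied in \eqref{VNET} applied to $\cD_P X$. Since $\cK_P$ is skew-adjoint in the Hilbert--Schmidt inner product, $(\ker\cK_P)^\perp = \mathrm{ran}(\cK_P)$, so $\cD_P^\sharp X = 0$ is equivalent to $\cD_P X\in\mathrm{ran}(\cK_P) = \cV$.

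For Part (2), existence and uniqueness of the preimage are automatic: $\cK_P$ preserves the orthogonal decomposition $\widehat{\cH}_B = \ker(\cK_P)\oplus\cV$, and its restriction $\cK_P|_\cV\colon\cV\to\cV$ is a bijection. For the trace-norm bound, I would diagonalize $H_P$ and note that $V'\in\cV$ has vanishing ``diagonal blocks'' $P_\mu V' P_\mu$, while the equation $\cK_P V = V'$ reduces entrywise to $V_{\mu\nu} = iV'_{\mu\nu}/(\mu-\nu)$ for $\mu\ne\nu$. This is the Schur (Hadamard) product of $V'$ with the multiplier $D_{\mu\nu}=i/(\mu-\nu)$; the desired bound $\|V\|_1\le b^{-1}\|V'\|_1$ then follows from the Schur-multiplier estimate on the Schatten $1$-class associated with commutator inversion. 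Verifying this Schatten-$1$ Schur multiplier bound, rather than the easier HS-norm bound that follows from spectral theory, is the main technical step of the lemma.

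For Part (3), I decompose $X = X_\cV + X_\cW$ via \eqref{DHARPLEM1}, with $X_\cW = P_\cW X$ and $X_\cV = (I-P_\cW)X$, and seek $V\in\cV, W\in\cW$ solving $\cK_P V + \cD_P W = X$. Since $\cK_P V\in\cV$ and, for $W\in\cW\subset\ker(\cK_P)$, the argument of Part (1) gives $P_\cW(\cD_P W) = \cD_P^\sharp W$ (using that $\cD_P$ preserves trace, so $\cD_P W$ is traceless and the kernel projection lands in $\cW$), the equation splits into the uncoupled pair
\[
\cD_P^\sharp W \;=\; X_\cW, \qquad \cK_P V \;=\; X_\cV - (I-P_\cW)\,\cD_P W.
\]
Ergodicity of $\cD_P^\sharp$ means $\ker(\cD_P^\sharp) = \C\bar R$ with $\tr\bar R = 1$, so the restriction $\cD_P^\sharp|_\cW\colon\cW\to\cW$ is injective (a traceless multiple of $\bar R$ must vanish), hence bijective in finite dimensions, yielding a unique $W$ with $\|W\|_1\le C'\|X\|_1$ from the bounded inverse. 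Part (2) then supplies the unique $V\in\cV$, and since $P_\cW$ is CPTP as a limit of CPTP averages (so $\|I-P_\cW\|_{1\to 1}\le 2$),
\[
\|V\|_1 \;\le\; b^{-1}\bigl(\|X\|_1 + 2\|\cD_P\|_{1\to 1}\|W\|_1\bigr),
\]
yielding the required constants $C_\cV,C_\cW$. Uniqueness of the decomposition follows by applying the same splitting to a putative homogeneous solution $\cK_P V + \cD_P W = 0$: projecting onto $\cW$ forces $\cD_P^\sharp W = 0$ and hence $W = 0$, then $\cK_P V = 0$ with $V\in\cV$ forces $V = 0$ by Part (2). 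The only nontrivial ingredient outside standard finite-dimensional linear algebra is the Schatten-$1$ Schur multiplier bound in Part (2), which I anticipate as the main obstacle.
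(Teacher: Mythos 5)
Your parts (1) and (3) are correct and follow essentially the paper's own route. For (1) you invoke the von Neumann ergodic theorem to write $\cD_P^\sharp X = P_{\ker(\cK_P)}(\cD_P X)$ for $X\in\ker(\cK_P)$; the paper obtains the same identity by decomposing $\cD_P X = Y + \cK_P W$ and averaging the total derivative to zero, which is the same computation. For (3), your identification $P_\cW\cD_P|_\cW = \cD_P^\sharp|_\cW$ (using that $\cD_P$ has traceless range and $P_\cW$ is trace preserving) lets you replace the paper's dimension-counting argument (showing $\cD_P(\cW)\cap\cV=0$ and then $\dim(\cD_P(\cW)+\cV)=\dim\cX$) with direct bijectivity of $\cD_P^\sharp$ on $\cW$; both rest on the same use of part (1) plus ergodicity, and your triangular system, uniqueness argument, and norm bounds mirror the paper's.

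The one genuine gap is in part (2), and you have correctly located it but not closed it, and the route you propose would not close it as stated. The spectral theorem gives $\|V\|_2\le b^{-1}\|V'\|_2$ in the Hilbert--Schmidt norm; passing to the trace norm is the issue. A general Schatten-$1$ Schur-multiplier bound for the symbol $i/(\mu-\nu)$ with constant exactly $b^{-1}$ is not available: such multipliers are of discrete-Hilbert-transform type, and their $S^1$ multiplier norms are not controlled by $b^{-1}$ uniformly (for equally spaced eigenvalues they grow with the number of eigenvalues, by the same mechanism that makes triangular truncation unbounded on $S^1$). So the inequality $\|V\|_1\le b^{-1}\|V'\|_1$ should not be expected in that generality, and attempting to prove it via a general multiplier theorem is a dead end. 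What you actually need downstream is only a finite constant, and that is immediate from the spectral decomposition \eqref{DAVLIM55}: on $\cV$ one has
\begin{equation*}
(\cK_P|_{\cV})^{-1} \;=\; \sum_{\omega\in\sigma(i\cK_P),\,\omega\neq 0}\frac{1}{-i\omega}\sum_{\substack{\mu,\nu\in\sigma(H_P)\\ \nu-\mu=\omega}}Q_{\mu,\nu}\ ,
\end{equation*}
and since each $Q_{\mu,\nu}$ is a trace-norm contraction by \eqref{DAVLIM54}, the triangle inequality gives $\|V\|_1\le |\sigma(H_P)|^2\,b^{-1}\|V'\|_1$. Replacing $b^{-1}$ by this constant throughout (2) and in your bound for $\|V\|_1$ in (3) repairs the argument without affecting anything else in the lemma or its applications.
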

 
 \begin{proof} Suppose that $\cK_P(X) =0$. Then 
 ${\displaystyle
 \cT^\sharp(X) =  \lim_{T\to\infty}\frac{1}{2T}\int_{-T}^T e^{t\cK_P}\cT(  X)  {\rm d}t}$.
 Since $\cK$ is skew-adjoint,  $\cT(X)$ has a unique decomposition $\cT(X) = Y+Z$ where $Y\in {\rm ker}(\cK_P)$ and $Z\in {\rm ran}(\cK_P)$, so that $Z = \cK_P(W)$ for some operator $W$.  Then
 ${\displaystyle 
  e^{t\cK_P}\cT(  X)  = Y + \frac{{\rm d}}{{\rm d}t}e^{t\cK_P} W}$
 and consequently
 $$
 \cT^\sharp(X) = Y +\lim_{T\to\infty}\frac{1}{2T}(e^{T\cK_P}W - e^{-T\cK_P}W) = Y\ .
 $$
 Therefore,  $\cT^\sharp(X) =0$ if and only if $Y=0$, which is the same as $\cT(X) \in {\rm ran}(\cK_P)$. This proves {\it (1)}. 
 
  For part {\it (2)}, since $\cK_P$ is skew-adjoint, it is invertible on its range, and hence there is a unique $V\in \cV$ such that $\cK_P V = V'$. By the Spectral Theorem, $\|V\|_1 \leq b^{-1}\|V'\|$.
   
  For part {\it (3)},
we first show that when $\cD_P^\sharp$ is ergodic, $\cD_P(\cW)$ has the same dimension as $\cW$.  
 Suppose that $W\in \cW$ satisfies $\cD_P(W)\in \cV$.  Then by part {\it (1)}, $\cD_P^\sharp(W) =0$. Since $\cD_P^\sharp$ is ergodic and $W$ is traceless, this implies that $W=0$.  Therefore, 
 $\cD_P(\cW) \cap \cV = 0$.
 
 By standard linear algebra,  and then \eqref{DHARPLEM1},
 \begin{eqnarray*}
 {\rm dim}(\cD_P(\cW) + \cV) &=&  {\rm dim}(\cD_P(\cW)) + {\rm dim}(\cV) - {\rm dim}(\cD_P(\cW) \cap \cV)\\
  &=& {\rm dim}(\cW) + {\rm dim}(\cV)  = {\rm dim}(\cX) \ .
 \end{eqnarray*}
 Since $\cD_P(\cW) + \cV \subseteq \cX$, and since ${\rm dim}(\cD_P(\cW) + \cV) = {\rm dim}(\cX)$,  so that every $X\in \cX$ has a decomposition $X = \cD_P(W) + V$. The decomposition is unique since
 $\cD_P(\cW)\cap \cV = 0$. This proves that every $X\in \cX$ has a unique decomposition $X = V' +  \cD_P(W)$ with $V'\in \cV$ and $W\in \cW$.  
 By part {\it (2)}, $V' = \cK_P(V)$ for some uniquely determined $V$.  At this point we have that \eqref{DHARPLEM2} is satisfied with uniquely determined $V$ and $W$. It remains to bound 
   $\|V\|_1$ and $\|W\|_1$ in terms of $\|X\|_1$. 
   
Applying $P_\cW$ to both sides of $X = \cK_P V +\cD_P W$
 yields $P_\cW X =  P_\cW \cD_P W$. Since $\cD_P(\cW)\cap \cV = 0$, ${\rm ker}\left(P_\cW \cD_P|_{\cW}\right) =0$,   the operator sending $W'\in \cW $ to $P_\cW \cD_P W'\in \cW$ is invertible. 
 Let $C_W$ denote the super-operator trace norm of its inverse. Then $\|W\|_1 \leq C_W \|P_\cW X\|_1 \leq C_W \|X\|_1$, with the final inequality coming from $\|P_\cW\|_{1 \to 1} =1$.  
 
 Let $P_\cV$ be the orthogonal projection onto $\cV$. Since $P_\cV = \one - P_\cW$, $\|P_\cV\|_{1\to 1} \leq 2$.  Applying $P_\cV$ to both sides of $X = \cK_P V +\cD_P W$ yields
 $P_\cV X = \cP_\cV \cD_P W + V$, and hence
 $$
 \|V\|_1 \leq \| P_\cV X\|_1 + \|\cP_\cV \cD_P W\|_1 \leq 2\|X\|_1 + 2\|\cD_P\|_{1\to 1}\|W\|_1\ .
 $$
 Then defining $C_\cV := 2(1+ C_\cW \|\cD_P\|_{1\to1})$, $\|V\|_1 \leq C_\cV\|X\|_1$. 
 \end{proof}

\begin{lem}\label{FIRSTIT} Under the assumption that $\cD_P^\sharp$ is ergodic, the equation $\cD \bar n_0 = - \cK(\pi\otimes \bar R)$ for $\bar n_0$ is solvable.  The general solution
 such that the equation $\cD \bar n_1 = -\cK \bar n_0$ is also solvable  has the form 
\begin{equation}\label{FIRSTIT1}
\bar n_0 = - \cS \cK(\pi\otimes \bar R) + \pi_A\otimes V_0 + \pi_A\otimes W_0\ 
\end{equation}
for a uniquely determined choice of  $V_0\in \cV$ and arbitrary  $W_0\in \cW$.    For this choice of $V_0$,  define
\begin{equation}\label{FIRSTIT1B}
\tilde{m}_0 := - \cS \cK(\pi\otimes \bar R) + \pi_A\otimes V_0 
\end{equation}
so that the general solution of $\cD \bar n_0 = - \cK(\pi\otimes \bar R)$ for which the  $\cD n_1 = -\cK n_0$ is solvable is
\begin{equation}\label{FIRSTIT1C}
\bar n_0 = \tilde{m}_0  + \pi_A\otimes W_0 
\end{equation}
for arbitrary $W_0\in \cW$. Then
\begin{equation}\label{FIRSTIT2}
\|\tilde{m}_0\|_1 \leq  \|\cS \cK\|_{1\to1} +  \frac{\|\cD_P\bar R\|_{1}}{b}\
\end{equation}
 $b$ is the spectral gap of $|\cK_P|$. 
\end{lem}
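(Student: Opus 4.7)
The plan has three parts: establish solvability of the zeroth-order equation, identify what freedom remains after imposing solvability of the first-order equation, and verify the stated bound. To begin, I would check that $-\cK(\pi_A\otimes\bar R)\in\operatorname{ran}(\cD)$. Since $\operatorname{ran}(\cD)=\operatorname{ran}(\cQ)$ and $\cP=\cV\tr_A$, Lemma~\ref{CommLem} reduces this to $\cK_P\bar R=0$, and Lemma~\ref{DHARPLEM} delivers exactly this, because $\bar R$ is the unique steady state of $\cD_P^\sharp$ and so commutes with $H_P$. A particular solution is then $-\cS\cK(\pi_A\otimes\bar R)$, by $\cS\cD=\cQ$, and the general solution is obtained by adding any element of $\ker(\cD)=\{\pi_A\otimes Y : Y\in\widehat{\cH}_B\}$. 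Tracelessness of $\bar n_0$, forced by $\tr\bar\rho_\gamma=\tr(\pi_A\otimes\bar R)=1$ together with the fact that $\cS_A$ maps into traceless operators, restricts $Y$ to $\cX$.

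Next I would impose $\cP\cK\bar n_0=0$, which is the solvability condition for $\cD\bar n_1=-\cK\bar n_0$. Using the definition $\cD_P=-\tr_A\cK\cS\cK\cV$ and Lemma~\ref{CommLem}, a direct computation gives
\begin{equation*}
\cP\cK\bar n_0 = \pi_A\otimes\bigl(\cD_P\bar R + \cK_P Y\bigr),
\end{equation*}
so the constraint reads $\cK_PY=-\cD_P\bar R$. Splitting $Y=V+W$ along $\cX=\cV\oplus\cW$ eliminates $W$ from the equation since $\cK_P$ vanishes on $\cW$, and I would then invoke Lemma~\ref{DHARPLEME}: part~(1) ensures $\cD_P\bar R\in\operatorname{ran}(\cK_P)=\cV$ because $\bar R\in\ker(\cK_P)\cap\ker(\cD_P^\sharp)$, and part~(2) produces a unique $V_0\in\cV$ satisfying $\cK_PV_0=-\cD_P\bar R$ with $\|V_0\|_1\leq b^{-1}\|\cD_P\bar R\|_1$. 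The component $W_0\in\cW$ is left unconstrained, yielding the stated parametrization.

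The bound on $\tilde m_0$ is then immediate from the triangle inequality together with $\|\pi_A\otimes\bar R\|_1=1$ and $\|\pi_A\otimes V_0\|_1=\|V_0\|_1$: the first term contributes $\|\cS\cK\|_{1\to 1}$ and the second contributes $\|V_0\|_1\leq b^{-1}\|\cD_P\bar R\|_1$. The only step that is not bookkeeping is the passage from $\bar R\in\ker(\cD_P^\sharp)$ to $\cD_P\bar R\in\operatorname{ran}(\cK_P)$; without the $\sharp$-averaging, there is no reason for this range condition to hold, so this is the essential point where the ergodicity of $\cD_P^\sharp$ (rather than of $\cD_P$) enters. The whole argument rests on the solvability/uniqueness package encoded in Lemma~\ref{DHARPLEME}, so once that lemma is in hand the proof is short and I do not expect further obstacles.
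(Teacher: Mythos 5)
Your proposal is correct and follows essentially the same route as the paper: reduce solvability of the zeroth-order equation to $\cK_P\bar R=0$ via Lemma~\ref{DHARPLEM}, take $-\cS\cK(\pi_A\otimes\bar R)$ as the particular solution, impose the next solvability condition to get $\cK_PV_0=-\cD_P\bar R$ with $W_0$ unconstrained, and invoke parts (1) and (2) of Lemma~\ref{DHARPLEME} for existence, uniqueness, and the bound $\|V_0\|_1\leq b^{-1}\|\cD_P\bar R\|_1$. No gaps.
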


\begin{proof} Since the range of $\cD$ is precisely the space of operators $Z\in \widehat{\cH}_{AB}$ such that $\tr_A{Z} =0$,   for \eqref{SSTEX1} to be solvable, it is necessary and sufficient that $\tr_A[\cK(\pi\otimes \bar R)] =0$, and since
$\tr_A[\cK(\pi\otimes \bar R)] = \cK_P \bar R$, the condition for solvability of \eqref{SSTEX1} is that $\cK_P \bar R =0$. As noted above, when $\cD_P^\sharp$ is ergodic, this is a consequence of 
part {\it (1)} of Lemma~\ref{DHARPLEM}.  Then $- \cS \cK(\pi\otimes \bar R)$ is a particular solution of  \eqref{SSTEX1}, and the general solution is
\begin{equation}\label{SSTEX1B1}
\bar n_0 = - \cS \cK(\pi\otimes \bar R) + \pi_A\otimes X_0\ .
\end{equation}
Since $\tr [\bar n_0] =0$ and $\tr[\cS \cK(\pi\otimes \bar R)]= 0$, it must be that $\tr[X_0] =0$, so that $X_0\in \cX$.  By \eqref{DHARPLEM1}, $X_0$ has a 
unique decomposition $X_0 = V_0 + W_0$ where $V_0\in \cV$ and $W_0\in \cW$.

The equation \eqref{SSTEX2} is solvable for $k=1$ if and only if $\tr_A[\cK \bar n_0] =0$.  By \eqref{SSTEX1B1}
\begin{equation*}
\tr_A[\cK \bar n_0] = \tr_A[- \cK \cS \cK(\pi\otimes \bar R)] + \tr_A[\cK \pi_A\otimes X_0]  = \cD_P \bar R + \cK_P X_0  = \cD_P \bar R + \cK_P V_0\ . 
\end{equation*}
Therefore, there is no constraint at this point on $W_0$, but $V_0$ must be chosen so that $\cK_P V_0 = -\cD_P \bar R$,
and this proves the first claim concerning  the formula for $\bar n_0$ in \eqref{FIRSTIT1}.   

Then with $\tilde{m}_0$ defined by \eqref{FIRSTIT1B}, \eqref{FIRSTIT1} becomes \eqref{FIRSTIT1C}.
By part {\it (1)} of Lemma~\ref{DHARPLEME}, since $\bar R \in {\rm ker}(\cK_P)\cap {\rm ker}(\cD_P^\sharp)$, $\cD_P(\bar R)\in {\rm ran}(\cK_P)$, and 
then by part {\it (2)} of Lemma~\ref{DHARPLEME}, there is a unique $V_0\in \cV$ such that $\cK V_0 = -\cD_P \bar R$, and $\|V_0\|_1 \leq b^{-1}\|\cD_P \bar R\|_1$, from which \eqref{FIRSTIT2}
follows. \end{proof}

Now let $\bar n_0 = \tilde{m}_0 + \pi_A\otimes W_0$  as in \eqref{FIRSTIT1C} of Lemma~\ref{FIRSTIT}, so that  $\cD \bar n_1 = -\cK \bar n_0$ is solvable.  The general solution has the form  
$$
\bar n_1 = -\cS \cK \bar n_0 + \pi_A\otimes V_1 + \pi_A\otimes W_1 =  -\cS \cK \tilde{m}_0 - \cS \cK   \pi_A\otimes W_0 + \pi_A\otimes V_1 + \pi_A\otimes W_1\ .
$$
for arbitrary $V_1\in \cV$ and $W_0,W_1\in \cW$.  The solvability condition for $\cD \bar n_2 = -\cK \bar n_1$ becomes 
\begin{equation}\label{SSTEX1D2}
0 = \tr_A[\cK \bar n_1] = - \tr_A[\cK \cS \cK \tilde{m}_0] + \cD_P W_0 + \cK_P V_1\ ;
\end{equation}
the choice of $W_1$ does not enter here.   

We do not know much about  $\tr_A[\cK \cS \cK \tilde{m}_0]$ except that it is traceless, but this is all we need. 
Under the assumption that $\cD_P^\sharp$ is ergodic,  by part {\it (3)} of Lemma~\ref{DHARPLEME}, $\tr_A[\cK \cS \cK \tilde{m}_0]$
 can be written in the form $\cD_P W + \cK_PV$ for some unique choice of $W$ and $V$.  Hence there is a unique choice of $W_0$ and $V_1$ such that \eqref{SSTEX1D2} is satisfied. Moreover,
 by part {\it (3)} of Lemma~\ref{DHARPLEME},  since $\|  \tr_A[\cK \cS \cK \tilde{m}_0]\|_1 \leq \|\cK\cS \cK\|_{1\to 1}\|\tilde{m}_0\|_1$, 
\begin{equation}\label{SSTEX1D2R}
\|V_1\|_1 \leq C_\cV \|\cK\cS \cK\|_{1\to 1}\|\tilde{m}_0\|_1 \quad{\rm and}\quad   \|W_0\|_1 \leq  C_\cW \|\cK\cS \cK\|_{1\to 1}\|\tilde{m}_0\|_1 \ . 
 \end{equation}
 Define
 ${
 \tilde{m}_1 := -\cS \cK \bar n_0 + \pi_A\otimes V_1  = -\cS \cK \tilde{m}_0 + \pi_A \otimes W_0 + \pi_A\otimes V_1}$.
 For any choice of $W_1\in \cW$, $\cD \bar n_2 = -\cK(\tilde{m}_1 + W_1)$ is solvable, and 
\begin{equation}\label{SSTEX1D2RX}
 \|\tilde{m}_1\|_1 \leq ( \|\cS\cK\|_{1\to 1} + (C_\cV + C_\cW)\|\cK\cS\cK\|_{1\to 1})\|\tilde{m}_0\|_1\ .
 \end{equation}
where we have used    $\|\tr_A[\cK \cS \cK \tilde{m}_0]\|_1   \leq  \|\cK\cS\cK\|_{1\to 1}\|\tilde{m}_0\|_1$. Altogether, 
 $$
 \|n_0\|_1 \leq \|\tilde{m}_0\|_1 + \|W_0\|_1 \leq  (1 + C_\cW  \|\cK\cS\cK\|_{1\to 1}) \|\tilde{m}_0\|_1 
 $$
 where $ \|\tilde{m}_0\|_1 $ is bounded by \eqref{FIRSTIT2}. 
 From here the pattern repeats, and we may make a simple induction. 

\begin{thm}\label{GOODSOLS} Under the assumption the $\cD_P^\sharp$ is ergodic, there  is a uniquely determined sequence  $\{ \bar n_k\}_{k\geq 0}$  in $\widehat{\cH}_B$ such that the equations \eqref{SSTEX1} and \eqref{SSTEX2}
are all satisfied, and moreover for all $k\geq 0$,
\begin{equation}\label{GOODSOLS1} 
\|\bar  n_k\|_1 \leq 2( \|\cS\cK\|_{1\to 1} + (C_\cV + C_\cW)^k \left(\|\cS \cK\|_{1\to1} +  \frac{\|\cD_P\bar R\|_{1}}{b}\right)\ .
\end{equation}
where $b$ is the spectral gap of $|\cK_P|$ and the constants $C_\cV$ and $C_\cW$ are defined in Lemma~\ref{DHARPLEME}.

Consequently for $\gamma> ( \|\cS\cK\|_{1\to 1} + (C_\cV + C_\cW)$,  ${\displaystyle  \sum_{k=0}^\infty \gamma^{-k}n_k}$ converges in the trace norm, and hence
\begin{equation}\label{GOODSOLS2} 
\bar \rho_\gamma :=\pi_A\otimes  \bar R + \gamma^{-1}  \sum_{k=0}^\infty \gamma^{-k}\bar n_k 
\end{equation}
satisfies $\cD \bar\rho_\gamma =0$, and is the unique steady state for $\cD$. 
\end{thm}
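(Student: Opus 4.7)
The plan is to construct the sequence $\{\bar n_k\}_{k \geq 0}$ inductively, establish the geometric bound \eqref{GOODSOLS1} in parallel, then sum the series and identify its limit with the unique steady state of $\cL_\gamma$ using ergodicity inherited from $\cD_P^\sharp$.

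Set $\bar n_{-1} := \pi_A \otimes \bar R$ and argue by induction on $k$. Assuming $\bar n_{-1}, \ldots, \bar n_{k-1}$ have been constructed so that the solvability condition $\tr_A[\cK \bar n_{k-1}] = 0$ holds, the general traceless solution of $\cD \bar n_k = -\cK \bar n_{k-1}$ is $\bar n_k = -\cS\cK \bar n_{k-1} + \pi_A \otimes (V_k + W_k)$ with $V_k \in \cV$ and $W_k \in \cW$, using the decomposition \eqref{DHARPLEM1}. Expanding the next solvability condition $\tr_A[\cK \bar n_k] = 0$ and using the identities $\tr_A \cK \cV = \cK_P$, $\tr_A \cK \cS \cK \cV = -\cD_P$, together with $W_{k-1} \in \ker \cK_P$, it collapses to
\begin{equation*}
\cK_P V_k + \cD_P W_{k-1} = Y_{k-1},
\end{equation*}
where $Y_{k-1} := \tr_A \cK \cS \cK[-\cS\cK \bar n_{k-2} + \pi_A \otimes V_{k-1}]$ lies in $\cX$ (since $\tr \circ \tr_A \circ \cK = 0$) and depends only on data fixed at earlier steps. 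Part {\it (3)} of Lemma~\ref{DHARPLEME} then produces the unique pair $(V_k, W_{k-1}) \in \cV \times \cW$ solving this equation, completing the definition of $\bar n_{k-1}$ and the $V_k$ contribution to $\bar n_k$. The base case $k = 0$ recovers Lemma~\ref{FIRSTIT} since $\bar n_{-1}$ carries no $W$-component.

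For the quantitative bound I would introduce $\tilde m_k := -\cS\cK \bar n_{k-1} + \pi_A \otimes V_k$, so that $\bar n_k = \tilde m_k + \pi_A \otimes W_k$ and $Y_{k-1} = \tr_A \cK \cS \cK \, \tilde m_{k-1}$. Part {\it (3)} of Lemma~\ref{DHARPLEME} controls $\|V_k\|_1$ and $\|W_{k-1}\|_1$ by constant multiples of $\|\tr_A \cK \cS \cK \tilde m_{k-1}\|_1$, which yields a linear recurrence $\|\tilde m_k\|_1 \leq \Lambda \|\tilde m_{k-1}\|_1$ for an explicit $\Lambda$ of the form displayed in \eqref{GOODSOLS1}. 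Iterating from the base bound \eqref{FIRSTIT2} and combining $\|\bar n_k\|_1 \leq \|\tilde m_k\|_1 + \|W_k\|_1$ then delivers \eqref{GOODSOLS1}.

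With \eqref{GOODSOLS1} in hand, the series in \eqref{GOODSOLS2} converges absolutely in trace norm for all $\gamma > \Lambda$, and $\tr \bar\rho_\gamma = 1$ since every $\bar n_k$ is traceless. A telescoping calculation using $\cL_\gamma = \cK + \gamma \cD$, the vanishing of $\cD(\pi_A \otimes \bar R)$, and the recursion $\cD \bar n_k = -\cK \bar n_{k-1}$ then yields $\cL_\gamma \bar \rho_\gamma = 0$. To identify $\bar \rho_\gamma$ as \emph{the} steady state, I would invoke part {\it (1)} of Lemma~\ref{MIXRELSH}, whose hypotheses are met under the standing assumption of this section that $\cD_P^\sharp$ is ergodic and gapped, to conclude that $\cL_\gamma$ itself is ergodic for all sufficiently large $\gamma$; its zero-eigenspace is then one-dimensional and is spanned by a density matrix, which $\bar \rho_\gamma$ must therefore equal, so positivity of $\bar \rho_\gamma$ follows a posteriori. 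The main obstacle will be the bookkeeping in the inductive step — confirming that $Y_{k-1}$ lies in $\cX$ and depends only on data already determined, and tracking the constants across the interlocked simultaneous determination of $V_k$ and $W_{k-1}$ at consecutive levels.
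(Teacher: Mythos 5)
Your proposal is correct and follows essentially the same route as the paper's proof: an induction in which the solvability condition at level $k$ simultaneously pins down the $\cW$-component of $\bar n_{k-1}$ and the $\cV$-component of $\bar n_k$ via part \textit{(3)} of Lemma~\ref{DHARPLEME}, with the norms tracked through the auxiliary quantities $\tilde m_k$, and the limit identified as the unique steady state through the mixing-time results (Lemma~\ref{MIXRELSH} and Lemma~\ref{POSLEM}). The only differences are cosmetic (the convention $\bar n_{-1}:=\pi_A\otimes\bar R$ and a slightly leaner statement of the induction hypothesis).
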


\begin{proof}
Let $\bar n_0$ be the unique solution of $\cD n_0 = -\cK(\pi_A\otimes \bar R)$, such that $\cD \bar n_1 = - \cK \bar n_0$  is solvable, and has solutions such that
$\cD \bar n_2 = -\cK \bar n_1$ is solvable. 
Fix an integer $k\geq 1$. Suppose that for all $1 \leq \ell\leq k$, the system of equations
$\cD \bar n_\ell = - \cK \bar n_{\ell-1} \quad 1\leq \ell \leq k$,
is solvable, and for $\ell < k$ the solution is unique and has the form
\begin{equation*}
\bar n_\ell = -\cS  \cK \bar n_{\ell-1} + \pi_A\otimes V_{\ell} + \pi_A\otimes W_{\ell}\ .
\end{equation*}
for  uniquely determined $V_\ell\in \cV$ and  $W_\ell\in \cW$,
while for $\ell = k$, the general solution such that $\cD \bar n = - \cK \bar n_k$ is solvable has the same form for a uniquely determined $V_k\in \cV$ and arbitrary $W_k\in \cW$. 

For $1\leq \ell \leq k$, define
$\tilde{m}_{\ell} := -\cS \cK n_{\ell-1} + \pi_A\otimes V_{\ell}$,
so that $\bar n_\ell = \tilde{m}_\ell + \pi_A\otimes W_\ell$, and let $\tilde{m}_0$ be given as in Lemma~\ref{FIRSTIT}. Suppose that for all $1\leq \ell\leq k$, 
\begin{equation}\label{SSTEX1D3B1}
\|\tilde{m}_\ell \|_1 \leq ( \|\cS\cK\|_{1\to 1} + (C_\cV + C_\cW)\|\cK\cS\cK\|_{1\to 1})\|\tilde{m}_{\ell -1}\|_1
\end{equation} 
and  
\begin{equation}\label{SSTEX1D3B2}
\|W_\ell\|_1 \leq C_\cW \|\cK\cS\cK\|_{1\to 1}\|\tilde{m}_{\ell -1}\|_1\ .
\end{equation} 

All of this has been proved for $k=1$; see \eqref{SSTEX1D2R} and \eqref{SSTEX1D2RX} and the paragraphs leading up to them, and we already know by the inductive hypotheses that 
$\cD \bar n = -\cK \bar n_k = - \cK(\tilde{m}_k + \pi_A\otimes W_k)$  is solvable for all $W_k \in\cW$. It follows that the general solution has the form
$$ 
\bar n_{k+1} = -\cS \cK  \tilde{m}_k -\cS \cK \pi_A\otimes W_k +   \pi_A\otimes V_{k+1} +  \pi_A\otimes W_{k+1}\
$$
for arbitrary $V_{k+1}\in \cV$ and $W_{k},W_{k+1}\in \cW$.  The condition for solvability of $\cD \bar n = -\cK \bar n_{k+1}$ then is 
$$
-\tr_A[\cK \cS \cK  \tilde{m}_k] + \cD_P W_k + \cK_P V_{k+1}\ .
$$
Since $\tr_A[\cK \cS \cK  \tilde{n}_k]\in \cX$,  by part {\it (3)} of Lemma~\ref{DHARPLEME}, there is a unique choice of $W_k$ and $V_{k+1}$ that solves this equation and moreover, since 
$\|\tr_A[\cK \cS \cK  \tilde{m}_k]\|_1 \leq \|\cK \cS \cK\|_{1\to 1}\|\tilde{m}_k\|_1$, $\|W_{k-1}\|_1 \leq C_\cW \|\cK \cS \cK\|_{1\to 1}\|\tilde{n}_k\|_1$, $\|W_{k-1}\|_1$ and $V_{k+1} \leq  
C_\cV \|\cK \cS \cK\|_{1\to 1}\|\tilde{m}_k\|_1$, $\|W_{k-1}\|_1$. Then with 
$\tilde{m}_{k+1} = -\cS \cK  \tilde{m}_k -\cS \cK \pi_A\otimes W_k +   \pi_A\otimes V_{k+1}$,  \eqref{SSTEX1D3B1} and \eqref{SSTEX1D3B2} are satisfied also for $\ell = k+1$. 
This completes the proof of the inductive step and hence the inductive hypotheses are valid for all $k\geq 1$. 

Now using the validity of \eqref{SSTEX1D3B1} and \eqref{SSTEX1D3B2} for all $1\leq \ell \leq k$,
$$
\|\tilde{m}_k\|_1 \leq  ( \|\cS\cK\|_{1\to 1} + (C_\cV + C_\cW)^k\|\tilde{m}_0\|_1
$$
and 
$$
\|W_{k}\|_1 \leq   \|\cS\cK\|_{1\to 1} ( \|\cS\cK\|_{1\to 1} + (C_\cV + C_\cW)^{k-1}\|\tilde{m}_0\|_1\ .
$$
Then since $\|\bar n_k\|_1 \leq \|\tilde{m}_k\|_1 + \|W_{k}\|_1$,
$$
\|\bar n_k\|_1 \leq  ( \|\cS\cK\|_{1\to 1} + (C_\cV + C_\cW)^k 2 \|\tilde{m}_0\|_1\ ,
$$
and using   the bound \eqref{FIRSTIT2} on $\|\tilde{n}_0\|_1$ yields \eqref{GOODSOLS1}, form which  the convergence of the sum in  \eqref{GOODSOLS2} readily follows. Then the sum in \eqref{GOODSOLS2}  is self-adjoint, traceless and satisfies 
$\cL_\gamma \bar\rho_\gamma =0$. Since $t_{{\rm mix}}(\cD_P^\sharp) < \infty$ implies that $t_{{\rm mix}}(\cL_\gamma) < \infty$ for sufficiently large $\gamma$, and then it follows from  Lemma~\ref{POSLEM} that this sum is a density matrix in the kernel of $\cL_\gamma$, and hence it can only be $\bar\rho_\gamma$. This proves \eqref{GOODSOLS2}. 
\end{proof}

\begin{thm}\label{KAEXTHM} Suppose that $\cD_P^\sharp$ is ergodic and gapped so that for large $\gamma$, the unique steady state $\bar\rho_\gamma$ of $\cL_\gamma$ has a convergent expansion of the form \eqref{SSTEX0}.
Then $\tr_B \bar n_0 =0$, so that 
$$\tr_B[\bar\rho_\gamma] = \pi_A + {\mathcal O}(\gamma^{-2})\ ,$$
if and only if $[\pi_A, K_A] = 0$ where
\begin{equation}\label{KAEXDEF}
K_A := H_A + \tr_B[ (\one_A\otimes \bar R)H_{AB}]
\end{equation}
where $\bar R$ is the unique steady state of $\cD_P^\sharp$. 
\end{thm}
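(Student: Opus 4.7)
The plan is to apply $\tr_B$ to both sides of the defining equation $\cD\bar n_0 = -\cK(\pi_A\otimes \bar R)$ from \eqref{SSTEX1} and reduce it to a single equation on $\cH_A$ from which the equivalence can be read off. Since $\cD = \cD_A\otimes \one_B$ as super-operators, one has the intertwining $\tr_B\circ \cD = \cD_A\circ \tr_B$, so the left-hand side becomes $\cD_A\tr_B[\bar n_0]$.

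For the right-hand side, I would decompose $H = H_A\otimes\one_B + H_{AB} + \one_A\otimes H_B$ and compute $\tr_B[i[H,\pi_A\otimes \bar R]]$ term by term. The $\one_A\otimes H_B$ contribution is zero, since $[\one_A\otimes H_B,\pi_A\otimes \bar R] = \pi_A\otimes [H_B,\bar R]$ and $\tr[H_B,\bar R]=0$ by cyclicity of the trace. The $H_A\otimes\one_B$ contribution is $i[H_A,\pi_A]$. Writing $H_{AB}=\sum_\alpha A_\alpha\otimes B_\alpha$, a direct calculation shows $\tr_B[H_{AB}(\pi_A\otimes \bar R)] = \tilde K_A\,\pi_A$ and $\tr_B[(\pi_A\otimes \bar R)H_{AB}] = \pi_A\,\tilde K_A$, where $\tilde K_A := \sum_\alpha A_\alpha\tr[B_\alpha \bar R] = \tr_B[(\one_A\otimes \bar R)H_{AB}]$; hence the $H_{AB}$ contribution is $i[\tilde K_A,\pi_A]$. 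Summing and using $K_A = H_A + \tilde K_A$ yields the key identity
\begin{equation}\label{KAEID}
\cD_A\tr_B[\bar n_0] \;=\; i[K_A,\pi_A] \;=\; -i[\pi_A,K_A]\ .
\end{equation}

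From \eqref{KAEID} the forward implication is immediate. For the converse, assume $[\pi_A,K_A]=0$, so that $\tr_B[\bar n_0]\in \ker\cD_A$. Ergodicity of $\cD_A$ gives $\ker\cD_A = \C\pi_A$, so $\tr_B[\bar n_0] = c\,\pi_A$ for some $c\in \C$. But $\tr[\bar n_0]=0$: by Theorem~\ref{GOODSOLS}, $\bar\rho_\gamma = \pi_A\otimes \bar R + \gamma^{-1}\sum_{k\geq 0}\gamma^{-k}\bar n_k$ is a trace-norm convergent identity for all large $\gamma$, and since $\tr\bar\rho_\gamma = \tr(\pi_A\otimes \bar R) = 1$ forces the tail to have trace zero identically in $\gamma$, each $\bar n_k$ is traceless. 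Hence $c=0$ and $\tr_B[\bar n_0]=0$. The parenthetical consequence $\tr_B[\bar\rho_\gamma] = \pi_A + \mathcal{O}(\gamma^{-2})$ then follows from the expansion \eqref{GOODSOLS2} together with the geometric bound \eqref{GOODSOLS1}. No serious obstacle arises; the only delicate bookkeeping is the partial-trace computation for the $H_{AB}$ piece, where one uses that $\tr_B[H_{AB}(\one_A\otimes \bar R)] = \tr_B[(\one_A\otimes \bar R)H_{AB}]$ at the level of each tensor summand, even though the full operators $H_{AB}(\pi_A\otimes \bar R)$ and $(\pi_A\otimes \bar R)H_{AB}$ do not coincide.
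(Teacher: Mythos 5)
Your proof is correct and follows essentially the same route as the paper: both arguments reduce the question to whether $\tr_B[\cK(\pi_A\otimes \bar R)]$ vanishes and then identify this partial trace as a multiple of $[K_A,\pi_A]$ (the paper via the factorization $\pi_A\otimes\bar R=(\one_A\otimes\bar R)(\pi_A\otimes\one_B)$ and cyclicity of the partial trace, you via the term-by-term decomposition of $H$ — same computation). The only difference is cosmetic: the paper applies $\tr_B$ to the explicit solution $\bar n_0=-\cS\cK(\pi_A\otimes\bar R)+\pi_A\otimes X_0$ and uses injectivity of $\cS_A$ on traceless operators, whereas you apply $\tr_B$ to the equation $\cD\bar n_0=-\cK(\pi_A\otimes\bar R)$ and use that $\ker\cD_A$ meets the traceless operators only in $0$ — two equivalent invocations of ergodicity.
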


\begin{proof} By \eqref{SSTEX1B1}, $\bar n_0 = - \cS \cK(\pi_A\otimes \bar R) + \pi_A\otimes X_0$ where $\tr X_0 = 0$, and hence $\tr_B \bar n_0  = \cS  \cK(\pi_A\otimes \bar R)$. Since $\cS = \cS_A\otimes \one$,  $\tr_B \cS = \cS_A\tr_B$.  By \eqref{JFGAPPED1}, for all $X\in \widehat{\cH}_A$, $\cQ_A X = X$ if and only if $\tr X = 0$, and hence $\cS_A$ is one-to-one on traceless operators on $\cH_A$. Since $\tr_B[ \cK(\pi_A\otimes \bar R)]$ is a traceless operator on $\cH_A$, 
 $\tr_B\bar n_0 =0$ if and only if $\tr_B[ \cK(\pi_A\otimes \bar R)] =0$.  Exactly as in the proof of Lemma~\ref{CommLem},
\begin{eqnarray*}
i\tr_B [ \cK(\pi_A\otimes \bar R)] &=& \tr_B[ H (\pi_A\otimes \bar R)] -  \tr_B[  (\pi_A\otimes \bar R)H] \\
&=& \tr_B[ H (\one_A\otimes \bar R)(\pi_A\otimes \one_B)] -  \tr_B[  (\one_A \otimes \bar R)H (\pi_A\otimes \one_B)] \\
&=& \tr_B[ H (\one_A\otimes \bar R)]\pi_A - \pi_A\tr_B[H (\pi_A\otimes \one_B)]
\end{eqnarray*}
where we have used cyclicity of the trace in the second equality. By cyclicity of the partial trace $\tr_B[H (\one_A\otimes \bar R)] = \tr_B[ (\one_A\otimes \bar R)H]$, and this operator is self-adjoint. 
This shows that $\tr_B[\bar n_0] =0$ if and only if $\tr_B[ (\one_A\otimes \bar R)H]$ commutes with $\pi_A$. Then since   $(\one_A\otimes \bar R)H = H_A\otimes \bar R + H_{AB}\otimes \bar R + \one_A\otimes H_B\bar R$, 
$$
\tr_B[ (\one_A\otimes \bar R)H]  = K_A + \tr[ H_B\bar R] \one_B\ ,
$$
and hence $\tr_B[ (\one_A\otimes \bar R)H]$ commutes with $\pi_A$ if and only if $K_A$ commutes with $\pi_A$. 
\end{proof}

\begin{example} We continue with the model introduced Example~\ref{EXAMP1} in the particular case in which $H_P = \sigma_2$. 
We have seen in Example~\ref{EXAMP1} that  independent of the choice of $H_B$, $\cD_P$ is given by \eqref{EX1DP}, which is the same as $\cD_A$ except that it acts on $\cH_B$. 

From \eqref{DASPECEX1} and \eqref{EX1DP}, it follows that 
$\cD_P(\pi_A) =0$, and 
     \begin{equation*}
   \cD_P(\sigma_1) =  -\sigma_1\ , \quad\cD_P \sigma_2 = -\sigma_2\quad  {\rm and}\quad \cD_P(\sigma_3) = -2(\sigma_3)\ ,
   \end{equation*}
    with $t$ denoting $\tanh(\beta/2)$, $\cD_P(\one) = -\tfrac{t}{\gamma}\sigma_3$. It follows that the matrix representation of the 
    projected Lindblad generator   $\cK_P + \gamma^{-1}\cD_P$  in the $\{\one,\sigma_1,\sigma_2,\sigma_3\}$ basis is
  $$
 \left[\begin{array}{cccc} 0 & 0 & 0 & 0\\ 0 & -\frac1\gamma & 0 &2\\ 0 & 0 & -\frac1\gamma & 0\\ -\frac{t}{\gamma} & -2 & 0 & -\frac2\gamma
 \end{array}\right]\ . 
 $$
 For all $\gamma > \frac14$, the eigenvalues of $\cK_P + \frac1\gamma \cD_P$ are
 \begin{equation}\label{PROJSTEADYSTATE5}
0\ ,\quad -\frac1\gamma\quad{\rm and}\quad -\frac{3 \pm i \sqrt{16\gamma^2-1}}{2\gamma}\ .
    \end{equation}
 The eigenvectors are easily calculated and one finds that the steady state $\bar R_\gamma$ is 
 \begin{equation}\label{PROJSTEADYSTATE}
\bar R_\gamma =  \frac12\one -\frac{\gamma t}{4\gamma^2+2}\sigma_1 - \frac{ t}{8\gamma^2+4}\sigma_3 ~, 
 \end{equation}
 which has the expansion
 \begin{equation}\label{PROJSTEADYSTATE2A}
 \frac12\one - \gamma^{-1}\frac{t}{4}\sigma_1 - \gamma^{-2} \frac{t}{8}\sigma_3  + \mathcal{O}(\gamma^{-3})\ .
 \end{equation}

 In Example~\ref{EXAMP1} we have also computed the explicit form of $\cD_P^\sharp$ which is given by \eqref{DPSHARP}. We can also write \eqref{DPSHARP2} as
  \begin{equation}\label{DPSHARP2B}
 \cD_P^\sharp(\one) = 0\ ,\quad \cD_P^\sharp(\sigma_1) = -\sigma_1\ ,\quad  \cD_P^\sharp(\sigma_3) = -\sigma_3 \quad{\rm and}\quad \cD_P^\sharp(\sigma_2) = -\frac32\sigma_2\ .
 \end{equation}
 Therefore, the unique steady state for the Lindblad generator $\cD_P^\sharp$, $\bar R$, is given by
 \begin{equation}\label{DPSHARP30}
 \bar R = \frac12 \one\ ,
 \end{equation}
 and this is the unique steady state for the Lindblad generator $\cD_P^\sharp$, independent of $\beta$. In particular $\cD_P^\sharp$ is ergodic.  Therefore, in this example, the ergodicity assumption of Theorem~~\ref{GOODSOLS} is satisfied. 

Comparing \eqref{PROJSTEADYSTATE5} and \eqref{DPSHARP2B},  note that the real parts of the non-zero eigenvalues of $\gamma \cL_{P,\gamma}$ are $-1$ and $-\tfrac32$, which are exactly the non-zero eigenvalues of $\cD_P^\sharp$. This is consistent with, but does not imply, the relation between the mixing times in Theorem~\ref{TMIXZL3}. Comparing \eqref{PROJSTEADYSTATE} and \eqref{DPSHARP30}, note that $\bar R = \lim_{\gamma\to\infty}\Bar R_\gamma$. 
 
 This example is simple enough that one can exactly compute the unique steady sate of $\cL_\gamma$. In the basis consisting of tensor products of operators in $\{\one, \sigma_1,\sigma_2,\sigma_3\}$ its coefficients are rational functions of $\gamma$ with a common denominator that is $2\gamma^4 + 23 \gamma^2 + 64$, and in which the numerators are polynomials in $\gamma$ of degree between $1$ and $4$. While all $16$ coefficients can be computed exactly, for our purposes it is best to give the expansion of the steady state in inverse powers of $\gamma$. 
 With $t$ denoting $\tanh(\beta/2)$, 
  \begin{eqnarray}\label{SSEXEXACT1A}
  \bar\rho_\gamma &=& \pi_A \otimes \tfrac12\one\nonumber \\
 &+& \gamma^{-1}\left( -\frac{t}{4}(\one\otimes \sigma_1 +\sigma_1\otimes \sigma_2 - \sigma_2\otimes \sigma_1) +\frac{t^2}{4}\sigma_3\otimes \sigma_1 \right)\nonumber\\
 &+& \gamma^{-2}\left( \frac{t}{8}\sigma_1\otimes\one -  \frac{t}{8}\one\otimes \sigma_1-\frac{3t}{2}\one\otimes \sigma_2 - \frac{t^2}{4}\sigma_2\otimes \one  + \frac{t}{8}\one\otimes \sigma_3 + \frac{t}{4}\sigma_3\otimes \one
 \right)\nonumber\\
 &+& \gamma^{-2}\left( -\frac{t}{2}\sigma_1\otimes \sigma_1 -\frac{t}{2}\sigma_2\otimes \sigma_2 -\frac{3t^2}{8}\sigma_3\otimes \sigma_3 -\frac{t}{4}\sigma_2\otimes\sigma_3 + \frac{3t^2}{2}\sigma_3\otimes \sigma_2)
 \right)+ \mathcal{O}(\gamma^{-3})\ .
  \end{eqnarray}
  In particular,
   \begin{equation}\label{PROJSTEADYSTATE3A}
  \tr_B[\bar\rho_\gamma ] = \pi_A  +\gamma^{-2}\left(\frac{t}{4}\sigma_1 -\frac{t^2}{2}\sigma_2 + \frac{t}{2}\sigma_3\right)+ \mathcal{O}(\gamma^{-3})
 \end{equation}
  and
 \begin{equation}\label{PROJSTEADYSTATE4A}
 \tr_A[\bar\rho_\gamma ]  = \frac12 \one  -\gamma^{-1}\frac{t}{2}\sigma_1 +\gamma^{-2}\left( -\frac{t}{4}\sigma_1 -3t\sigma_2 +\frac{t}{4}\sigma_3 \right) + \mathcal{O}(\gamma^{-3})\ .
   \end{equation}
   Note that \eqref{PROJSTEADYSTATE4A} agrees with  \eqref{PROJSTEADYSTATE2A} at first order in $\gamma^{-1}$, but not beyond. The result \eqref{PROJSTEADYSTATE3A} is consistent with Theorem~\ref{KAEXTHM}, as it must be: Since in this example $\bar R$ is a multiple of the identity, and $\tr_B H_{AB} =0$ by definition,
   the operator $K_A$ defined in \eqref{KAEXDEF} is simply $H_A = \sigma_z$, which commutes with $\pi_A$, so that, by Theorem~\ref{KAEXTHM}, $\tr_B[\bar\rho_\gamma] = \pi_A + {\mathcal O}(\gamma^{-2})$.

   Such a direct computation of $\bar\rho_\gamma$ is not feasible in all but the simplest systems. We now show how to recover \eqref{SSEXEXACT1A} 
   using the Hilbert expansion for steady states that is justified by Theorem~\ref{GOODSOLS}.
   
 Again with $t$ denoting $\tanh(\beta/2)$, one  computes
${\displaystyle
\cK\left(\pi_A\otimes \tfrac12 \one\right) = -\frac{t}{4}(\sigma_1\otimes \sigma_2 - \sigma_2\otimes\sigma_1)}$.
  Therefore \eqref{SSTEX1} becomes
  \begin{equation}\label{HEXSS1}
  \cD \bar n_0 = \frac{t}{4}(\sigma_1\otimes \sigma_2 - \sigma_2\otimes\sigma_1) \ .
  \end{equation}
  In accordance with Theorem~\ref{GOODSOLS}, the partial trace over $\cH_A$ of the right hand side of \eqref{HEXSS1} is zero, and hence \eqref{HEXSS1} is solvable, and the general solution is
  $$
  \bar n_0 = \frac{t}{4}(\sigma_1\otimes \sigma_2 - \sigma_2\otimes\sigma_1) + \pi_A\otimes X_0
  $$
  for some traceless and self-adjoint $X_0$.

  We then compute 
    $\tr_A[ \cK \bar n_0] =-t\sigma_3  -i[\sigma_2,X_0] $.  Decompose $X_0 = Y_0+ Z_0$ where $Y_0\in {\rm ran}(\cK_P)$ and $Z_0\in {\rm ker}(\cK_P)$.  
    In this example, the range of $\cK_P$ is  ${\rm span}(\{\sigma_1,\sigma_3\})$ while the kernel is ${\rm span}(\{\one,\sigma_2\})$. Hence the spaces $\cW$ and $\cV$ in the proof of Lemma~\ref{DHARPLEM}
    are $\cV = {\rm span}(\{\sigma_1,\sigma_3\})$ and  $\cW = {\rm span}(\{\sigma_2\})$ and  the unique unique choice for $Y_0\in \cV$ 
     is $Y_0 = -\tfrac{t}{2}\sigma_1$, and hence
    \begin{eqnarray*}
    \bar  n_0 &=& -\frac{t}{4}(\sigma_1\otimes \sigma_2 - \sigma_2\otimes\sigma_1) -\frac{t}{2} \pi_A\otimes \sigma_1 + \pi_A\otimes Z_0\\
    &=& -\frac{t}{4}(\sigma_1\otimes \sigma_2 - \sigma_2\otimes\sigma_1) -\frac{t}{4}\one\otimes \sigma_1 + \frac{t^2}{4}\sigma_3\otimes\sigma_1+ \pi_A\otimes Z_0\\
   &=:& \tilde{m}_0 + \pi_A\otimes Z_0\ .
    \end{eqnarray*}

    To go on to compute the next order corrections,  and to determine $Z_0$, we must solve $\cD \bar  n_1 = -\cK \bar  n_0$ subject to $\tr_A[\cK \bar n_1] =0$.  We compute
    \begin{eqnarray*}
    \cK \tilde{m}_0=   \frac{t}{2}( -\sigma_1\otimes \sigma_1 -\sigma_2\otimes \sigma_2 +  \sigma_3\otimes\one) - \frac{t}{4}\sigma_2\otimes \sigma_3- \frac{t^2}{2}\left(\sigma_3\otimes \sigma_3 + 
    \frac12\sigma_2\otimes \one\right)
    \end{eqnarray*}
    Then $\tr_A[\cK \bar n_0] = 0$, and $\cD \bar n_1 = -\cK \bar n_0$ is solvable, and the general solutions is
    $$
    \bar n_1 = \cS \cK \tilde{m}_0 + \cS \cK(\pi_A \otimes Z_0) + \pi_A \otimes X_1
    $$
    for some self-adjoint and traceless $X_1$ that we decompose as $X_1 = Y_1+Z_1$ with $U_1\in \cV$ and $Z_1\in \cW$. 
    Then the condition $\tr_A[\cK \bar n_1] =0$ becomes
    $$
    \tr_A[\cK \cS \cK \tilde{m}_0] + \tr_A[\cK \cS \cK (\pi_A\otimes Z_0)] + \tr_A[\cK \pi_A\otimes Y_1] = \tr_A[\cK \cS \cK \tilde{m}_0]  - \cD_P Z_0 + \cK_P Y_1\ .
    $$

    \begin{eqnarray*}
    \bar n_1 &=& -\cS\cK \bar n_0 +\pi_A\otimes X_1 \\
    &=& \frac{t}{2}( -\sigma_1\otimes \sigma_1 -\sigma_2\otimes \sigma_2 +  \frac12\sigma_3\otimes\one) - \frac{t}{4}\sigma_2\otimes \sigma_3- \frac{t^2}{4}\left(\sigma_3\otimes \sigma_3 + \sigma_2\otimes \one\right)
    + \pi_A\otimes X_1\ .
    \end{eqnarray*}
    We compute
    $$
    -\cK\cS\cK \tilde{m}_0 = \frac{5t}{4}\sigma_1\otimes \sigma_2 - \frac{7t}{4} \sigma_2\otimes\sigma_1- \frac{3t^2}{2}\sigma_3\otimes \sigma_1  - \frac{3t}{2}\sigma_1\otimes \sigma_3 -\frac{t}{4}\one\otimes \sigma_1 + \frac{t^2}{2}\sigma_1\otimes \one\ .
    $$
    Therefore, $\tr_A[\cK\cS\cK \tilde{m_0}] \in \cV$, and we take $Z_0= 0$ and $Y_1 = t\sigma_3$.  At this point, $\bar n_0$ is completely determined, and the result is 
    $$
    \bar n_0 = -\frac{t}{4}(\sigma_1\otimes \sigma_2 - \sigma_2\otimes\sigma_1) -\frac{t}{4}\one\otimes \sigma_1 + \frac{t^2}{4}\sigma_3\otimes\sigma_1
    $$ 
    which is consistent with \eqref{SSEXEXACT1A}. Continuing the expansion, in the next stage we  find a non-zero value for $Z_1$, and the result for $\bar n_1$  is consistent with \eqref{SSEXEXACT1A}, as it must be. 
\end{example}

\section{Hydrodynamic limits}

 An interesting perspective on the  relation between the evolution equations involving the generators $\cL_\gamma$, $\cK_P$, $\cL_{P,\gamma}$ and $\cD_P^\sharp$ is provided by an analogy with the theory of hydrodynamic limits in classical  kinetic theory. The Boltzmann 
equation for the phase space density $f(v,x,t)$ of a dilute gas has the form 
\begin{equation}\label{Boltzmann}
\frac{\partial}{\partial t} f(v,x,t) = -v\cdot \nabla_x f(v,x,t) + \gamma \cC f(v,x,t) 
\end{equation}
where the first term on the right describes the effects of the free motion of molecules between collisions, while $\cC$ is the  collision kernel, an operator describing the effects of binary collisions between  molecules. The parameter $\gamma$ is then the inverse of the {\em Knudsen number} which gives the ratio of the microscopic times scale (the mean time between collisions) and the macroscopic time scale (the time for a molecule to travel a macroscopic distance). 

In the analogy with the boundary driven quantum systems discussed here, our starting point
\begin{equation}\label{BoltzmannA}
 \frac{{\rm d}}{{\rm d}t}\rho(t) = -i[H,\rho(t)] + \gamma \cD \rho(t)
 \end{equation}
 corresponds to the Boltzmann equation. The space $\cH_A$ accounts for the degrees of freedom corresponding to the the velocity variables in \eqref{Boltzmann}, while 
 $\cH_B$ accounts for the degrees of freedom corresponding to  the spatial variables. The parameter $\gamma$ corresponds to the inverse Knudsen number, and $\cD$ corresponds to the collision operator.

When the Knudsen number is small, and hence $\gamma$ is large, this formal similarity between \eqref{Boltzmann} and \eqref{BoltzmannA}  is brought out more clearly by noting that the phase space densities densities $f(v,x)$ satisfying $\cC f(v,x) =0$ are precisely the {\em local Maxwellian densities}
${\displaystyle 
\frac{\rho(x)}{(2\pi T(x))^{3/2}}e^{- |v - u(x)|^2/2T(x)}}$
where $\rho(x)$, $u(x)$ and $T(x)$ are the {\em hydrodynamic moments}  of $f(v,x)$.  The analog of $\cM$ as in \eqref{SSM} is then the space of all local Maxwellian densities, and when $\gamma$ is large, one expects  solutions of \eqref{Boltzmann}  to quickly become, and then remain, approximately locally Maxwellian. To describe the further evolution after this initial layer, one need only keep track of the evolution of the hydrodynamic moments, which correspond physically to the density, bulk velocity and temperature. Hydrodynamic equations for the evolution of these quantities, such as the Euler equations or the Navier-Stokes equations, had been derived before the Boltzmann equation, not starting from molecular dynamics, but in the framework of continuum mechanics. 

The problem of deriving equations for the hydrodynamic moments from molecular dynamics  was discussed by Hilbert in his follow-up \cite{DH01} to the famous list of problems that he proposed at the International Congress of Mathematicians in 1900.  His original Sixth Problem was, briefly put, to axiomatize mechanics and probability.  In \cite{DH01} he expounded more, and highlighted the specific problem of deriving the hydrodynamic equations of continuous mechanics from the kinetic equations of Maxwell and Boltzmann,  and deducing expression for, e.g., the viscosity, in terms of  the microscopic interaction of molecules. Later, he took up this problem himself  and  proposed an expansion method \cite{DH12} that yields a hierarchy of systems of  equations for the hydrodynamic moments.  The expansion leading to this hierarchy is known as the Hilbert expansion. 

The first system of equations  in the hierarchy is the Euler equations, and it does not contain $\gamma$. The Euler equations are {\em formally} reversible and conservative, and their analog here is the leading order coherent approximation 
${\displaystyle \frac{{\rm d}}{{\rm d}t} R(t) = \cK_P R(t)}$. 

The remaining systems of equation in the hierarchy do contain $\gamma$.  The first of these corresponds to the equation
${\displaystyle \frac{{\rm d}}{{\rm d}t} R(t) = \cL_{P,\gamma} R(t)}$.  This next term in Hilbert's hierarchy is not the Navier-Stokes equations themselves, but rather a precursor to them. The 
Navier-Stokes equations do not contain the Knudsen number, $1/\gamma$.  To obtain the Navier-Stokes equations, one rescales space and time depending on $\gamma$ so that all terms in  
the precursor equations are multiplied by a common power of $\gamma$ that can then be cancelled out, leaving a negative power of $\gamma$ in the remainder terms. One then takes $\gamma\to \infty$. For a more complete discussion, see \cite{BGL91,BGL93}.  

In our quantum setting, one can rescale time, but there is no really direct analog of spatial scaling. To cancel $\gamma$ in this setting, we must rescale time and make a suitable unitary transformation which, 
as in \cite{Da74}, is the passage to the interaction picture. This leads to  ${\displaystyle \frac{{\rm d}}{{\rm d}t} R(t) = \cD_P^\sharp R(t)}$. This equation, not involving $\gamma$, may be regarded as the counterpart  of the  
Navier-Stokes equations in this analogy. 

It is unclear whether some analogous  procedure, possibly involving a restriction on the initial data, may be used to extract an analog of the Burnett equations from the next equation in our quantum hierarchy, \eqref{INTR51}. 
However, given that  \eqref{INTR51} may not describe a CPTP evolution, and given the problematic nature of the Burnett equations, there may be little point in pursuing this.

\appendix

 \section{Bounding $\|e^{t\cD}\cQ\|_{1\to 1}$ in terms of  $\|e^{t\cD_A}\cQ_A\|_{1\to 1}$} 
 
In Appendix~\ref{GENINVBND} we shall  use the spectral gap of $\cD_A$ to bound the rate of decay of $\|e^{t\cD_A}\cQ_A\|_{1\to1} = \|e^{t\cD_A} - |\pi_A\rangle\langle\one_A|\|_{1\to 1}$. However,  we need  control on the rate at which  $\|e^{t\cD}\cQ\|_{1\to 1} = \|e^{t\cD}-\cP\|_{1\to 1}$ tends to zero. Since 
 $e^{t\cD} = e^{t\cD_A}\otimes \one_B$ and $\cP = |\pi_A\rangle\langle\one_A| \otimes \one_B$, this might seem automatic, but it is not.   If $\cT_1$ and $\cT_2$ are two CPTP super-operators on $\cH_A$, then it can be the case that
 $$
 \|(\cT_1-\cT_2)\otimes \one_B\|_{1\to 1} > \|\cT_1-\cT_2\|_{1\to 1}\ .
 $$
 This has been investigated by Kitaev \cite{K97,KSV02}  who introduced the {\em diamond norm} on Hermitian super-operators; ie.e, those for which $\cT(X^*) = \cT(X)^*$ for all $X$. By a well-known variant of Choi's Theorem, every Hermitian  super-operator 
 $\cT$  can be written as the difference of two CPTP maps; see, e.g. \cite{W05}.  Kitaev \cite{K97} proved that while it can be the case that 
 $\|(\cT_1-\cT_2)\otimes \one_B\|_{1\to 1}$ increases with the dimension of $\cH_B$, there is no further increase after the dimension of $\cH_B$ has reached the dimension of $\cH_A$. 
 The diamond norm of $\cT_1-\cT_2$ is then defined by 
 $$
 \|\cT_1-\cT_2\|_\diamond := \|(\cT_1-\cT_2)\otimes \one_B\|_{1\to 1} \quad{\rm where}\quad {\rm dim}(\cH_B) = {\rm dim}(\cH_A)\ .
 $$
 However, this leaves open the question of how much larger $\|\cT_1-\cT_2\|_\diamond$ might be than $\|\cT_1-\cT_2\|_{1\to 1}$. In our finite dimensional setting, all norms are equivalent, 
 so hence if $\lim_{t\to\infty}\|e^{t\cD}\cQ\|_{1\to 1} =0$, then $\lim_{t\to\infty}\|e^{t\cD}\cQ\|_{\diamond} =0$, and evidently $\|e^{t\cD}\cQ\|_{1\to 1} \leq \|e^{t\cD}\cQ\|_{\diamond}$. 
 However, there do not seem to be good general  bounds that can be used to compare the rates.

The following lemma will however allow us to transfer a certain bound on $\|e^{t\cD_A}\cQ_A\|_{1\to1}$ directly to $\|e^{t\cD}\cQ\|_{1\to1}$, and this bound of  the latter  is independent of the dimension of $\cH_B$. 
 
 \begin{lem}\label{SOPOPNOLEM} Let $\cH,\cK$ be finite dimensional Hilbert spaces. Let $X,Y\in \widehat{\cH}$, and define $|Y\rangle\langle X|$ as in \eqref{VCTZ}. Then the super-operator $|Y\rangle\langle X|\otimes \one$ on $\cH\otimes \cK$ satisfies
 \begin{equation*}
 \| |Y\rangle\langle X|\otimes \one\|_{1\to 1} =  \| |Y\rangle\langle X|\|_{1\to 1} = \|Y\|_1\|X\|_\infty\ .
 \end{equation*}
 In other words, for rank-one super-operators $\| |Y\rangle\langle X|$, $\| |Y\rangle\langle X|\|_\diamond =  \| |Y\rangle\langle X|\|_{1\to 1}$.
 \end{lem}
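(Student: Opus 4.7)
The plan is to prove both equalities in turn. The second equality, $\||Y\rangle\langle X|\|_{1\to 1} = \|Y\|_1\|X\|_\infty$, is the simpler one, and I would do it first. By the definition \eqref{VCTZ}, for any $Z\in\widehat{\cH}$ with $\|Z\|_1=1$,
\[
\bigl\||Y\rangle\langle X|Z\bigr\|_1 = \bigl|\langle X,Z\rangle_{\widehat\cH}\bigr|\cdot\|Y\|_1 = |\tr(X^*Z)|\cdot\|Y\|_1 \leq \|X\|_\infty\|Y\|_1
\]
by H\"older's inequality for traces. To achieve equality, I would pick a singular-value decomposition $X = \sum_j s_j |u_j\rangle\langle v_j|$ with $s_1=\|X\|_\infty$ and take $Z_0 := |v_1\rangle\langle u_1|$, which has $\|Z_0\|_1=1$ and $\tr(X^*Z_0)=s_1=\|X\|_\infty$.

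For the first equality, the key is the explicit formula
\[
\bigl(|Y\rangle\langle X|\otimes\one\bigr)(Z) = Y\otimes \tr_{\cH}\!\bigl[(X^*\otimes\one_{\cK})Z\bigr]
\]
for $Z\in \widehat{\cH\otimes\cK}$. I would verify this on product tensors $Z = A\otimes B$ (both sides equal $\tr(X^*A)\,Y\otimes B$) and extend by linearity. Then the trace norm factorizes:
\[
\bigl\|(|Y\rangle\langle X|\otimes\one)(Z)\bigr\|_1 = \|Y\|_1\cdot \bigl\|\tr_{\cH}[(X^*\otimes\one)Z]\bigr\|_1 .
\]
It remains to show $\bigl\|\tr_{\cH}[(X^*\otimes\one)Z]\bigr\|_1 \leq \|X\|_\infty\|Z\|_1$; by trace duality,
\[
\bigl\|\tr_{\cH}[(X^*\otimes\one)Z]\bigr\|_1 = \sup_{\|B\|_\infty\leq 1} \bigl|\tr\!\bigl[(X^*\otimes B)Z\bigr]\bigr| \leq \|X^*\otimes B\|_\infty \|Z\|_1 \leq \|X\|_\infty \|Z\|_1 ,
\]
using $\|X^*\otimes B\|_\infty = \|X\|_\infty\|B\|_\infty$. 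This yields $\||Y\rangle\langle X|\otimes\one\|_{1\to 1}\leq \|Y\|_1\|X\|_\infty$.

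For the matching lower bound, I would simply tensor the optimizer from Step 1 with any rank-one density matrix $\sigma\in\widehat\cK$: the operator $Z = Z_0\otimes \sigma$ has $\|Z\|_1 = 1$, and
\[
(|Y\rangle\langle X|\otimes\one)(Z_0\otimes\sigma) = \tr(X^*Z_0)\,Y\otimes\sigma ,
\]
whose trace norm equals $\|X\|_\infty \|Y\|_1$. Combining gives the chain of equalities. There is no real obstacle here beyond verifying the identification of $|Y\rangle\langle X|\otimes\one$ with the map $Z\mapsto Y\otimes\tr_{\cH}[(X^*\otimes\one)Z]$; once that is in hand, everything is a one-line trace-duality computation and a choice of optimizer.
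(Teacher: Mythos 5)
Your proof is correct and follows essentially the same route as the paper's: the identity $(|Y\rangle\langle X|\otimes\one)Z = Y\otimes\tr_{\cH}[(X^\dagger\otimes\one)Z]$ plus H\"older's inequality for the upper bound (the paper phrases the intermediate step as ``the partial trace is a trace-norm contraction'' rather than via trace duality, but these are equivalent). You are somewhat more complete, since you explicitly exhibit the optimizers $Z_0$ and $Z_0\otimes\sigma$ where the paper simply declares the reverse inequality trivial.
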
 
 
 \begin{proof} Let $Z$ be any operator on $\cH\otimes \cK$ with $\|Z\|_1 =1$. Then $(|Y\rangle\langle X|\otimes \one)Z= Y\otimes \tr_{\cH}[(X^\dagger \otimes \one) Z]$,
 and hence $\|(|Y\rangle\langle X|\otimes \one)Z\|_1 = \|Y\|_1\|\tr_\cH[(X^\dagger\otimes \one) Z]\|_1 \leq  \|Y\|_1\|\|(X^\dagger \otimes \one) Z\|_1$ since the partial trace is a CPTP map and hence a trace norm contraction. By H\"older's inequality for operators, $\|(X^\dagger \otimes \one) Z\|_1 \leq \|X^\dagger \otimes \one\|_\infty \|Z\|_1$, and $\|X^\dagger \otimes \one\|_\infty = \|X\|_\infty$. Altogether,
 $$\|(|Y\rangle\langle X|\otimes \one)Z\|_1 \leq \|Y\|_1\|X\|_\infty =  \| |Y\rangle\langle X|\|_{1\to 1}\ .$$
 This proves that  $\| |Y\rangle\langle X|\otimes \one\|_{1\to 1} \leq  \| |Y\rangle\langle X|\|_{1\to 1}$ and the opposite inequality is trivial.
 \end{proof}

There is a close connection between the issue discussed in this appendix with the norm multiplicitivity problem, and we conclude by briefly discussing the connection.
 A generalization of the super-operator trace norm plays an important role in quantum information theory because of its connections with quantum entropy.
  For $1 \leq p <\infty$ the Schatten $p$-norm on $\widehat{\cH}$ is defined by $\|X\|_p = \left(\tr[(X^*X)^{p/2}]\right)^{1/p}$.  
 For $p=2$ this is simply the Hilbert space norm associated to the Hilbert-Schmidt inner product. For $p=1$, this is the  trace norm. For all $X$, $\|X\|_p$ is 
 monotone non-increasing, and $\lim_{p\to\infty}\|X\|_p = \|X\|_\infty$ where $\|X\|_\infty$ is 
the operator norm of $X$, which is equal to its largest singular value of $X$. 
 
 Let $\cH$  be a finite dimensional Hilbert space, and let $\cT$ be a linear operator on $\widehat{\cH}$.  For $1 \leq p \leq \infty$, define
 $$
 \|\cT\|_{1\to p} = \sup\{ \|\cT(X)\|_p\ : \|X\|_1 = 1\ \}\ .
$$

 Now let $\cH_A$ and $\cH_B$ be two finite dimensional Hilbert spaces,
 and let $\cT_A$ and $\cT_B$ be super-operators on $\cH_A$ and $\cH_B$ respectively. 
 For five years it was an open  conjecture \cite{AHW00} that if both
 $\cT_A$ and $\cT_B$ are CPTP, then for $1 \leq p\leq  \infty$,
 $\|\cT_A\otimes \cT_B\|_{1\to p} = 
 \|\cT_A\|_{1\to p}\|\cT_B\|_{1\to p}$.  It was shown  
 \cite{S04} that this conjecture,
 at least for $p$ sufficiently close to $1$, would have a wide range
 of applications. While the conjecture was eventually
 proved to be false
 for all $p>1$ in \cite{HW08}, there is a positive result
 in the orignal paper \cite{AHW00}, where it is proved
 that if $\cT_B= \one_B$, multiplicitivity does hold for $\cT_A$
 CPTP: 
 \begin{equation}\label{AHWTHM}
\|\cT_A\otimes \one_B\|_{1\to p} = \|\cT_A\|_{1\to p}\ .
 \end{equation}

While $e^{t\cD_A}$ is CPTP, $e^{t\cD_A}\cQ_A$ is not; it is instead  the diference between two CPTP maps, $e^{t\cD_A}$ and 
$\cP_A$. Thus we {\em cannot} apply the bound \eqref{AHWTHM}
to conclude that $\|e^{t\cD}\cQ\|_{1\to1} = \|e^{t\cD_A}\cQ_A\|_{1\to1}$, and based on  Kitaev's results, such an identity is unlikely to be true in general.  However, 
Lemma~\ref{SOPOPNOLEM} will give us what we need in Appendix~\ref{GENINVBND}.

 \section{Dual bases and Jordan bases}\label{DBJBA}

In this section $\cH$ is any Hilbert space of finite dimension $n$ with inner product $\langle \cdot,\cdot\rangle$.  We begin with some elementary facts about dual Jordan bases for which we do not know a convenient reference. 

\begin{lem} Let  $\{\bx_1,\dots,\bx_n\}$ and $\{\by_1,\dots,\by_n\}$ be two sets of vectors in $\cH$ such that for all $j,k$, 
\begin{equation}\label{DualBas1}
\langle \bx_j,\by_k\rangle = \delta_{j,k}\ .
\end{equation}
Then both $\{\bx_1,\dots,\bx_n\}$ and $\{\by_1,\dots,\by_n\}$ are linearly independent, and hence are bases, not necessarily orthogonal, of $\cH$.
\end{lem}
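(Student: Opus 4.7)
The plan is to verify linear independence of each family by directly exploiting the biorthogonality relation \eqref{DualBas1}, with no appeal to any deeper structure. Since we are in a finite dimensional space of dimension $n$ and each family has $n$ vectors, linear independence of either family immediately upgrades it to a basis of $\cH$; so the whole task reduces to proving linear independence, and by symmetry it suffices to handle $\{\bx_1,\dots,\bx_n\}$.

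First I would suppose a linear dependence $\sum_{j=1}^n c_j \bx_j = 0$ with scalars $c_1,\dots,c_n$. Taking the inner product of both sides against $\by_k$ for an arbitrary $k$ and using linearity in the first slot yields $\sum_{j=1}^n c_j \langle \bx_j,\by_k\rangle = 0$. By \eqref{DualBas1}, the left-hand side collapses to $c_k$, so $c_k=0$. Since $k$ was arbitrary, all coefficients vanish and $\{\bx_1,\dots,\bx_n\}$ is linearly independent. Reversing the roles of the two families (taking inner products in the opposite order, i.e.\ pairing with $\bx_k$) handles $\{\by_1,\dots,\by_n\}$ in exactly the same way.

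There is essentially no obstacle here: the argument is a two-line calculation, and the only small subtlety worth flagging is to note that linearity in the appropriate slot of $\langle \cdot,\cdot\rangle$ is what lets the dependence relation pass through the inner product. Once linear independence is established, the dimension count $\dim \cH = n$ ensures that each family spans $\cH$ and is therefore a basis. No orthogonality is asserted or needed, consistent with the statement.
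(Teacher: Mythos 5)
Your proof is correct and follows essentially the same argument as the paper: pair a putative dependence relation against each $\by_k$ (resp.\ $\bx_k$), use \eqref{DualBas1} to collapse the sum to a single coefficient, and invoke the dimension count. The only cosmetic difference is that the paper uses the convention of antilinearity in the first slot (so the coefficients appear conjugated), which does not affect the conclusion.
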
 

\begin{proof} Suppose that 
${\displaystyle \sum_{j=1}^n \beta_j\bx_j = 0}$. Then $0 = \left \langle \sum_{j=1}^n \beta_j\bx_j, \by_k\right\rangle = \sum_{j=1}^n  \overline{\beta_j} \delta_{j,k} = \overline{\beta_k}$. Since  this is true for each $k$,
$\{\bx_1,\dots,\bx_n\}$ is linearly independent. Since the dimension of $\cH$ is $n$, $\{\bx_1,\dots,\bx_n\}$ spans $\cH$, and hence is a basis for $\cH$. By symmetry, the same is true of $\{\by_1,\dots,\by_n\}$.
\end{proof}

\begin{lem} Let $\{\by_1,\dots,\by_n\}$ be  any basis of $\cH$, not necessarily orthonormal. Then there exists a uniquely determined set of vectors $\{\bx_1,\dots,\bx_n\}$ such that \eqref{DualBas1} is satisfied. 
\end{lem}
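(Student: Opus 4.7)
The plan is to produce the dual basis $\{\bx_1,\dots,\bx_n\}$ by invoking the Riesz representation theorem, and then to read off uniqueness from the fact that $\{\by_1,\dots,\by_n\}$ spans $\cH$.

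First I would define, for each $j\in\{1,\dots,n\}$, a linear functional $\ell_j\colon\cH\to\C$ by prescribing $\ell_j(\by_k):=\delta_{j,k}$ on the basis $\{\by_1,\dots,\by_n\}$ and extending linearly. This is well defined because $\{\by_1,\dots,\by_n\}$ is a basis, so every vector in $\cH$ has a unique expansion in it. Since $\cH$ is a finite dimensional Hilbert space, the Riesz representation theorem yields a unique vector $\bx_j\in\cH$ with $\ell_j(\bv)=\langle \bx_j,\bv\rangle$ for all $\bv\in\cH$. In particular $\langle \bx_j,\by_k\rangle=\ell_j(\by_k)=\delta_{j,k}$, which establishes existence.

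For uniqueness, suppose $\{\bx_1,\dots,\bx_n\}$ and $\{\bx_1',\dots,\bx_n'\}$ both satisfy \eqref{DualBas1}. Then for each $j$ and each $k$,
\begin{equation*}
\langle \bx_j-\bx_j',\by_k\rangle=\delta_{j,k}-\delta_{j,k}=0.
\end{equation*}
Because $\{\by_1,\dots,\by_n\}$ spans $\cH$, the vector $\bx_j-\bx_j'$ is orthogonal to every element of $\cH$, hence $\bx_j=\bx_j'$.

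There is no substantial obstacle here; the argument is a one-line application of Riesz representation followed by a one-line orthogonality argument. If a more explicit formula is desired, one could alternatively expand $\bx_j=\sum_k a_{j,k}\by_k$, observe that the biorthogonality condition becomes $\overline{A}\,G=I$ where $G_{k,\ell}:=\langle \by_k,\by_\ell\rangle$ is the (Hermitian, invertible) Gram matrix of $\{\by_1,\dots,\by_n\}$, and solve to get $\bx_j=\sum_k(G^{-1})_{j,k}\by_k$; but this is not needed for the stated lemma and I would omit it.
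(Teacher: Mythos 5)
Your proof is correct and follows essentially the same route as the paper: existence via the coordinate functionals and the Riesz representation theorem, and uniqueness from the fact that a vector is determined by its inner products against a spanning set (the paper phrases this as uniqueness in the Riesz theorem, you phrase it as a direct orthogonality argument, but these are the same observation). Nothing is missing.
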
 

\begin{proof} Every ${\boldsymbol \psi}\in \cH$
has an expansion
${\displaystyle 
{\boldsymbol \psi} = \sum_{j=1}^n \alpha_j \by_j}$  with uniquely determined coefficients. For each $j=1,\dots,n$, define a linear functional $\Lambda_j$ on $\cH$
by 
$\displaystyle{\Lambda_j\left(\sum_{j=1}^n \alpha_j \by_j \right) = \alpha_j}$.
By the Riesz Representation Theorem, there is a uniquely determined $\bx_j\in \cH$ such that 
\begin{equation}\label{DualBas3}
\langle \bx_j,{\boldsymbol \psi}\rangle = \Lambda_j({\boldsymbol \psi})\ .
\end{equation}
 Taking ${\boldsymbol \psi} = \by_k$ yields  \eqref{DualBas1}. Conversely, given any set $\{\bx_1,\dots,\bx_n\}$ such that 
 \eqref{DualBas1} is satisfied for all $j,k$, then \eqref{DualBas3} is satisfied where $\Lambda_j$ is given by \eqref{DualBas3}. Hence the uniqueness of $\{\bx_1,\dots,\bx_n\}$ follows from the uniqueness in the Reisz Representation Theorem. 
\end{proof}

\begin{defn} Let $\{\by_1,\dots,\by_n\}$ be  any basis of $\cH$. Then the unique basis  $\{\bx_1,\dots,\bx_n\}$  of $\cH$ such that  \eqref{DualBas1} is satisfied for all $j,k$ is the 
{\em dual basis} to  $\{\by_1,\dots,\by_n\}$
\end{defn}

It is clear that $\{\bx_1,\dots,\bx_n\}$ is the dual basis to $\{\by_1,\dots,\by_n\}$ if and only if $\{\by_1,\dots,\by_n\}$ is the dual basis to $\{\bx_1,\dots,\bx_n\}$.

Let $A$ be any operator on $\cH$. Then $A$ may or may not be diagonalizable, but there always exists a direct sum decomposition of $\cH = \bigoplus_{\ell =1}^m \cV_\ell$ into subspaces (not in general mutually orthogonal) that are invariant under $A$, with each $\cV_\ell$ contained in a generalized eigenspace of $A$ for some eigenvalue $\lambda_\ell$ of $A$, and finally such that within $\cV_\ell$ the eigenspace corresponding to $\lambda_\ell$ is one dimensional. Each of these spaces $\cV_\ell$ corresponds to a block in a Jordan normal form matrix representation of $A$. There are two of these normal forms, upper and lower, which we need to distinguish. Let $d_\ell$ denote the dimension of $\cV_\ell$. A basis $\{\by_1,\dots,\by_{d_\ell}\}$ for $\cV_\ell$ is an {\em upper Jordan block basis}  in case
$$
A\by_1 = \lambda_\ell \by_1  \quad{\rm and\ for}\quad j=2,\dots,d_\ell\ , A\by_j = \lambda_\ell\by_j + \by_{j-1}\ .
$$
For  $d_\ell =3$, the matrix representing the action of $A$ on $\cV_\ell$ in an upper Jordan block basis is
$$
\left[\begin{array}{ccc} \lambda & 1 & 0\\ 0 & \lambda & 1\\ 0 & 0 & \lambda\end{array}\right]\ .
$$
A basis $\{\bx_1,\dots,\bx_{d_\ell}\}$ for $\cV_\ell$ is a {\em lower Jordan block basis} in case
$$
A\bx_{d_\ell} = \lambda_\ell \bx_{d_\ell}  \quad{\rm and\ for}\quad j=1,\dots,d_\ell-1\ , A\bx_j = \lambda_\ell\bx_j + \bx_{j+1}\ .
$$
For  $d_\ell =3$, the matrix representing the action of $A$ on $\cV_\ell$ in a lower Jordan block basis is
$$
\left[\begin{array}{ccc} \lambda & 0 & 0\\ 1 & \lambda & 0\\ 0 & 1 & \lambda\end{array}\right]\ .
$$
The distinction between upper an lower Jordan block bases is minor; reversing the order of the basis elements changes from on to the other. However, the distinction maters when considering dual bases.  Concatenating bases for each $\cV_\ell$ yields a basis for $\cH$. A basis for $\cH$ is an {\em upper Jordan basis} in  case it is the concatenation of upper Jordan block bases for each $\cV_\ell$, and is a 
{\em lower Jordan basis} in  case it is the concatenation of lower Jordan block bases for each $\cV_\ell$.

\begin{thm}\label{DUBAJF2} Let $\{\by_1,\dots,\by_n\}$ be an upper Jordan basis for an operator $A$ on $\cH$. Then the dual basis $\{\bx_1,\dots,\bx_n\}$ is a lower Jordan basis for $A^\dagger$. Furthermore, if for some $j$, $A\by_j = \lambda \by_j$, and $\lambda$ is an eigenvalue of $A$ of algebraic multiplicity one, then $A^\dagger \bx_j =\overline{\lambda}\bx_j$. 
\end{thm}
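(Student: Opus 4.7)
My plan is to reduce the claim to the elementary fact that the conjugate transpose of a Jordan block in upper form with eigenvalue $\lambda$ is a Jordan block in lower form with eigenvalue $\overline\lambda$. The bridge from this matrix observation to the basis statement is a general principle about dual bases that I would establish first. Expanding $A\by_k = \sum_j M_{j,k}\by_j$ and $A^\dagger \bx_j = \sum_m N_{m,j}\bx_m$ in the two bases, the duality $\langle \bx_j,\by_k\rangle = \delta_{j,k}$ together with the adjoint relation $\langle A^\dagger \bx_j,\by_k\rangle = \langle \bx_j,A\by_k\rangle$ gives
\begin{equation*}
M_{j,k} = \langle \bx_j, A\by_k\rangle = \langle A^\dagger \bx_j, \by_k\rangle = \overline{N_{k,j}}\ ,
\end{equation*}
so $N = M^{*}$, the conjugate transpose of $M$.

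Next I would read off what this matrix identity means block by block. Since $\{\by_1,\dots,\by_n\}$ is an upper Jordan basis, $M$ is block-diagonal, with the block corresponding to $\cV_\ell$ of size $d_\ell$ carrying $\lambda_\ell$ on the diagonal and $1$'s on the superdiagonal. Its adjoint has the same block-diagonal structure but with $\overline{\lambda_\ell}$ on the diagonal and $1$'s on the subdiagonal, i.e.\ precisely the form of a lower Jordan block. Since $N$ is the matrix of $A^\dagger$ in $\{\bx_1,\dots,\bx_n\}$, this yields, for the block occupying positions $p_\ell+1,\dots,p_\ell+d_\ell$,
\begin{equation*}
A^\dagger \bx_{p_\ell + d_\ell} = \overline{\lambda_\ell}\,\bx_{p_\ell + d_\ell}\ ,\qquad A^\dagger \bx_{p_\ell + m} = \overline{\lambda_\ell}\,\bx_{p_\ell + m} + \bx_{p_\ell + m+1}\quad\text{for}\ 1\leq m < d_\ell\ .
\end{equation*}
Concatenating over $\ell$ exhibits $\{\bx_1,\dots,\bx_n\}$ as a lower Jordan basis for $A^\dagger$.

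For the furthermore clause I would argue as follows. The equation $A\by_j = \lambda\by_j$, with no surplus $\by_{j-1}$ term, forces $\by_j$ to be the first element of its Jordan block, so $\lambda = \lambda_\ell$ for the corresponding $\ell$. Algebraic multiplicity one for $\lambda$ then means $\lambda$ appears in only that one block and that block has size $d_\ell = 1$, so $\by_j$ is simultaneously the first and last element of its block. The display above collapses to $A^\dagger \bx_j = \overline{\lambda}\,\bx_j$, as desired. No step of this plan looks delicate; the entire argument is a careful translation of the adjoint identity through standard Jordan bookkeeping, so I anticipate no genuine obstacle.
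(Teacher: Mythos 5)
Your proposal is correct and follows essentially the same route as the paper: both compute the matrix of $A^\dagger$ in the dual basis as the conjugate transpose of the matrix of $A$ in the Jordan basis (the paper does this entry-by-entry for a single block and concatenates; you state the identity $N=M^{*}$ once and read off the blocks), and both handle the multiplicity-one clause by observing that the relevant block has size one.
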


\begin{proof} Consider first the case $m=1$ so that there is only one Jordan block. Then $A$ has only one eigenvalue $\lambda$ and for each $1 \leq j,k \leq n$, 
$\langle \bx_k,A\by_j\rangle = \lambda \delta_{j,k} + \delta_{j-1,k}$. Therefore, 
$$
\langle \by_j,A^\dagger \bx_k\rangle = \overline{\langle \bx_k,A\by_j\rangle} =  \overline{\lambda} \delta_{j,k} + \delta_{j-1,k}  =  \overline{\lambda} \delta_{j,k} + \delta_{j,k+1}\ 
$$
so that 
$A^\dagger \bx_n =\overline{\lambda}  \bx_n$ and for $k=1,\dots,n-1$, $A^\dagger \bx_k = \overline{\lambda}  \bx_k + \bx_{k+1}$.  The general case follows easily by concatenation.

For the final part, $\lambda$ is an eigenvalue of $A$ of algebraic multiplicity one if and only if the generalized eigenspace of $A$ for  $\lambda$ coincides with the eigenspace for $\lambda$, and the latter is one dimensional. 
Therefore, $A\by_j = \lambda \by_j$ where $\lambda$ has algebraic multiplicity one if and only if $\{\by_j\}$ is an upper and lower  Jordan block basis for the single Jordan invariant subspace of $A$  with eigenvalue $\lambda$. It then 
follows from the first part that $\{\bx_j\}$ a Jordan upper and lower basis for the single Jordan invariant subspace of $A^\dagger$  with eigenvalue $\overline{\lambda}$. In particular, $A^\dagger \bx_j = \overline{\lambda}\bx_j$.
\end{proof}

\section{Generalized inverses of Lindblad generators}\label{GENINVBND}

We apply the results of the previous subsection  to the Hilbert space $\widehat{\cH}_A$ consisting of operators on $\cH_A$.  Let $n_A+1$ denote the dimension of $\widehat{\cH}_A$. 
Let $\cD_A$ be a Lindblad generator acting on operators on $\cH_A$. Because $\cD_A$ generates a semigroup that is contractive in the trace norm, all eigenvalues of $\cD_A$ have a non-positive real part, and for any that are purely imaginary, the algebraic and geometric multiplicities coincide.  
  As always, we suppose that $\cD_A$ is ergodic and gapped with  unique steady state $\pi_A$. 

Let $\{Y_0,\dots, Y_{n_A}\}$ be a Jordan basis of $\cH$ for $\cD_A$ in which $Y_0 = \pi_A$.  Let 
$\{X_0,\dots,X_{n_A}\}$ be the dual basis. By Theorem~\ref{DUBAJF2}, $\cD_A^\dagger X_0 = 0$.  Since $\cD_A^\dagger $ generates a semigroup of completely positive unital operators, $\cD_A^\dagger\one_A =0$,
and hence $\one_A$ spans the eigenspace of $\cD_A$ corresponding for the eigenvalue zero.  Since $\langle \one_A, \pi_A\rangle = \tr[\pi_A] = 1$, $X_0 = \one_A$. 

Since $\cD_A$ is  gapped, so  that for some $a > 0$,  all non-zero eigenvalues $\lambda$ of $\cD_A$ satisfy
$\Re(\lambda) \leq -a$,
\begin{equation*}
e^{t\cD_A} = \sum_{j=0}^{n_A}e^{t\lambda_j}|Y_j\rangle \langle X_j|\ ,
\end{equation*} and hence 
\begin{equation}\label{JFGAPPED1}
\cP_A := \lim_{t\to\infty} e^{t\cD_A} = |Y_0\rangle\langle X_0| =  |\pi_A\rangle\langle \one_A|\ .
\end{equation}
Then $\cD_AY_0 =0$ and $\cD^\dagger X_0 =0$, $\cD_A\cP_A =0 = \cP_A\cD_A$. In particular, $[\cD_A,\cP_A] =0$ and hence 
 the complementary projection $\cQ_A = \one_{\widehat{\cH}} - \cP$ also commutes with $\cD_A$.   Evidently, $\cQ_A = \sum_{j=1}^{n_A} |Y_j\rangle\langle X_j|$.  

Since $\cP_A$ and $\cQ_A$ are complementary projections, so are $\cP_A^\dagger$ and $\cQ_A^\dagger$. 
Evidently $\cP_A^\dagger = |\one_A\rangle\langle \pi_A|$ and $\cQ_A^\dagger = \sum_{j=1}^{n_A}|X_j\rangle\langle Y_j|$. 
Since both $\cP_A$ and $\cQ_A$ commute with $\cD_A$, it follows that  both $\cP_A^\dagger$ and $\cQ_A^\dagger$ commute with $\cD_A^\dagger$. 

An operator $Z$ on $\cH_A$  belongs to ${\rm ran}(\cQ_A)$,  the range of $\cQ_A$, if and only if $\cP_A Z =0$, which is the case if and only if $\tr[Z]=\langle\one_A,Z\rangle = 0$. 
Therefore, ${\rm ran}(\cQ_A)$ is the subspace of $\widehat{\cH}_A$ consisting  operators whose trace is zero. Likewise, $Z$ belongs to ${\rm ran}(\cQ_A^\dagger)$ if and only if 
$\tr[\pi_A Z] = \langle \pi_A,Z\rangle = 0$, so that ${\rm ran}(\cQ_A^\dagger)$ 
is the subspace of $\widehat{\cH}_A$ consisting operators whose expected value in the state $\pi_A$ is zero.

Recall that the generalized inverse of $\cD_A$ is the super-operator  $\cS_A$ given by
\begin{equation}\label{JFGAPPED2}
\cS_A = -\int_0^\infty e^{t\cD_A} \cQ{\rm d}t  = -\int_0^\infty \left(\sum_{j=1}^{n_A} |e^{t\cD_A}Y_j\rangle  \langle X_j| \right){\rm d}t\ .
\end{equation}
Then since $\cD_A$ commutes with $\cQ_A$, 
\begin{equation}\label{JFGAPPED3}
\cD_A \cS_A = \cS_A \cD_A = \cQ_A  \ .
\end{equation}

Taking adjoints in \eqref{JFGAPPED3} yields $\cD_A^\dagger \cS_A^\dagger =  \cS_A^\dagger\cD_A^\dagger = \cQ_A^\dagger$ so that $\cS_A^\dagger$ is a generalized inverse to $\cD_A^\dagger$.  
One gets a formula for $\cS_A^\dagger$ from  \eqref{JFGAPPED2} by taking adjoints, but there is another approach yields a useful alternate formula.
Since $\cQ_A$ and $e^{t\cD_A}$ commute, $(e^{t\cD_A}\cQ_A)^\dagger = (\cQ_A e^{t\cD_A})^\dagger = e^{t\cD_A^\dagger}\cQ_A^\dagger$. Therefore, since $\cD_A^\dagger$ has the same spectral gap as $\cD_A$, 
\begin{equation}\label{JFGAPPED21}
\cS_A^\dagger  := \int_0^\infty e^{t\cD_A^\dagger } \cQ^\dagger {\rm d}t  = \int_0^\infty \left(\sum_{j=1}^{n_A} |e^{t\cD_A^\dagger}X_j\rangle  \langle Y_j| \right){\rm d}t  \ .
\end{equation}

\begin{lem}\label{ALTS} Under our assumptions on $\cD_A$ and the dual bases $\{Y_0,\dots,Y_{n_A}\}$ and $\{X_0,\dots,X_{n_A}\}$, the generalized inverse $\cS$ of $\cD_A$ is satisfies
\begin{equation}\label{JFGAPPED4}
\cS_A  = \int_0^\infty \left(\sum_{j=1}^{n_A} |Y_j\rangle  \langle e^{t\cD_A^\dagger}X_j| \right){\rm d}t  \ .
\end{equation}
\end{lem}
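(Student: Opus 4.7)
The plan is to exploit two purely algebraic identities for the rank-one super-operators $|Y\rangle\langle X|$ defined in \eqref{VCTZ}: for any super-operator $\cT$ on $\widehat\cH_A$ one has $\cT\,|Y\rangle\langle X| = |\cT Y\rangle\langle X|$ and $|Y\rangle\langle X|\,\cT = |Y\rangle\langle \cT^\dagger X|$. Both follow immediately from \eqref{VCTZ}: the first is trivial, and for the second, for any $Z\in\widehat\cH_A$,
\begin{equation*}
|Y\rangle\langle X|\,\cT(Z) = \langle X,\cT Z\rangle_{\widehat\cH_A} Y = \langle \cT^\dagger X,Z\rangle_{\widehat\cH_A} Y = |Y\rangle\langle \cT^\dagger X|(Z)\ .
\end{equation*}

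Next I would record the completeness relation for dual bases: for any basis $\{Y_0,\dots,Y_{n_A}\}$ of $\widehat\cH_A$ with dual basis $\{X_0,\dots,X_{n_A}\}$, expanding any $Z=\sum_j\langle X_j,Z\rangle Y_j$ gives $\sum_{j=0}^{n_A} |Y_j\rangle\langle X_j| = \one_{\widehat\cH_A}$. Since $Y_0=\pi_A$ and (by Theorem~\ref{DUBAJF2} applied to the one-dimensional Jordan block at eigenvalue $0$) $X_0=\one_A$, the $j=0$ term is exactly $\cP_A=|\pi_A\rangle\langle \one_A|$ of \eqref{JFGAPPED1}. Therefore
\begin{equation*}
\sum_{j=1}^{n_A} |Y_j\rangle\langle X_j| = \one_{\widehat\cH_A} - \cP_A = \cQ_A\ .
\end{equation*}

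Now apply the two identities with $\cT=e^{t\cD_A}$ to this sum: on the one hand,
\begin{equation*}
e^{t\cD_A}\cQ_A = e^{t\cD_A}\sum_{j=1}^{n_A}|Y_j\rangle\langle X_j| = \sum_{j=1}^{n_A}|e^{t\cD_A}Y_j\rangle\langle X_j|\ ,
\end{equation*}
which is the formula already appearing in \eqref{JFGAPPED2}; on the other hand,
\begin{equation*}
\cQ_A e^{t\cD_A} = \Bigl(\sum_{j=1}^{n_A}|Y_j\rangle\langle X_j|\Bigr)\,e^{t\cD_A} = \sum_{j=1}^{n_A}|Y_j\rangle\langle e^{t\cD_A^\dagger}X_j|\ .
\end{equation*}
Because $\cP_A$ commutes with $\cD_A$ (and hence with $e^{t\cD_A}$), so does $\cQ_A$, and therefore the two displayed expressions are equal for every $t\geq 0$. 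Integrating over $t\in[0,\infty)$ and invoking the formula \eqref{JFGAPPED2} for $\cS_A$ yields \eqref{JFGAPPED4}.

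There is no genuine obstacle: the proof is a one-line manipulation once the dual-basis completeness relation and the ``associativity'' identity $|Y\rangle\langle X|\,\cT=|Y\rangle\langle \cT^\dagger X|$ are in place. The only point that requires a moment of care is the convergence of the integral, which is already handled by the spectral gap assumption underlying \eqref{JFGAPPED2} and is inherited verbatim, since $\cD_A^\dagger$ has the same spectrum as $\cD_A$, so each $\|e^{t\cD_A^\dagger}X_j\|$ decays exponentially for $j\geq 1$.
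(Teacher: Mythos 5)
Your proof is correct and is essentially the paper's own argument: the paper obtains \eqref{JFGAPPPED4}\if0\fi\eqref{JFGAPPED4} by taking adjoints in \eqref{JFGAPPED21}, which rests on exactly the commutation $\cQ_A e^{t\cD_A}=e^{t\cD_A}\cQ_A$ and the adjoint identity you make explicit via $|Y\rangle\langle X|\,\cT=|Y\rangle\langle \cT^\dagger X|$, together with the expansion $\cQ_A=\sum_{j=1}^{n_A}|Y_j\rangle\langle X_j|$ that the paper also uses. The one caveat is a sign: \eqref{SADEF} and \eqref{JFGAPPED2} define $\cS_A$ with a leading minus, while \eqref{JFGAPPED21} and \eqref{JFGAPPED4} omit it, so your final integration step actually yields $-\cS_A$ on the left; this inconsistency is present in the paper itself and is not a defect of your argument.
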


\begin{proof} The formula \eqref{JFGAPPED4} follows directly from \eqref{JFGAPPED21} by taking adjoints. 
\end{proof}

\begin{thm}\label{GENINVTNBLM} Let $\cD_A$ be an ergodic Lindblad generator on $\cH_A$ with invariant state $\pi_A$ and a spectral gap $a>0$.  
Let $\cD = \cD_A\otimes \one_B$ be the corresponding Lindblad generator on $\cH_B$. Define the projection $\cP$ by  $\cP(X) := \pi_A\otimes \tr_A X$, and define the complementary projection $\cQ$ by $\cQ = \one - \cP$. Then there is a finite constant $C$ that depends only on $\cD_A$, and in particular is independent of the dimension of $\cH_B$ such that
\begin{equation}\label{GENINVTNBLM1}
\|e^{t\cD}\cQ\|_{1\to 1} \leq Ce^{-ta/2}\ .
\end{equation}
Consequently $\cS  := \int_0^\infty e^{t\cD_A}\cQ_A {\rm d}t$  is well-defined and
$\displaystyle{
\|\cS\|_{1\to 1} \leq \frac{2C}{a} }$.
\end{thm}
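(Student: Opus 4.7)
\medskip

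\noindent\textbf{Proof proposal.} The plan is to reduce the bound on $\|e^{t\cD}\cQ\|_{1\to 1}$ to a bound on a finite sum of super-operators of the form $c(t)\,|Y\rangle\langle X|\otimes\one_B$, each of which is directly controlled by Lemma~\ref{SOPOPNOLEM}. This is where the dimension-independence in $\cH_B$ comes from: that lemma says $\| |Y\rangle\langle X|\otimes\one_B\|_{1\to1}=\|Y\|_1\|X\|_\infty$, a quantity depending only on $\cD_A$.

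First I would use the Jordan decomposition of $\cD_A$. Let $\{Y_0,\dots,Y_{n_A}\}$ be an upper Jordan basis for $\cD_A$ with $Y_0=\pi_A$, and let $\{X_0,\dots,X_{n_A}\}$ be the dual basis (which by Theorem~\ref{DUBAJF2} is a lower Jordan basis for $\cD_A^\dagger$). Grouping the basis vectors $\{Y_1,\dots,Y_{n_A}\}$ according to Jordan blocks of eigenvalues $\lambda_1,\dots,\lambda_m$ (all with $\Re\lambda_\ell\leq -a$), for each block of size $d_\ell$ we obtain
\begin{equation*}
e^{t\cD_A}|_{\mathcal{V}_\ell}\;=\;e^{t\lambda_\ell}\sum_{k=0}^{d_\ell-1}\frac{t^k}{k!}N_\ell^k,
\end{equation*}
where $N_\ell$ is the nilpotent shift sending $Y_j\mapsto Y_{j-1}$ inside the block. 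Concatenating over blocks and using $\cQ_A=\one-|\pi_A\rangle\langle\one_A|$ gives a representation
\begin{equation*}
e^{t\cD_A}\cQ_A\;=\;\sum_{(j,k)\in I} c_{j,k}(t)\;|Y_j\rangle\langle X_k|,
\qquad c_{j,k}(t)=e^{t\lambda(j,k)}\,\frac{t^{m(j,k)}}{m(j,k)!},
\end{equation*}
with $I$ a finite index set, each $\lambda(j,k)\in\{\lambda_1,\dots,\lambda_m\}$ satisfying $\Re\lambda(j,k)\leq -a$, and each $m(j,k)$ a nonnegative integer less than $\max_\ell d_\ell$.

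Next I would tensor with $\one_B$. Since $(\cD_A\otimes\one_B)^k = \cD_A^k\otimes\one_B$, one has $e^{t\cD}=e^{t\cD_A}\otimes\one_B$, and because $\cQ=\cQ_A\otimes\one_B$, the same identity gives
\begin{equation*}
e^{t\cD}\cQ\;=\;\sum_{(j,k)\in I} c_{j,k}(t)\;\bigl(|Y_j\rangle\langle X_k|\otimes\one_B\bigr).
\end{equation*}
Applying the triangle inequality for $\|\cdot\|_{1\to 1}$ and then Lemma~\ref{SOPOPNOLEM} termwise yields
\begin{equation*}
\|e^{t\cD}\cQ\|_{1\to 1}\;\leq\;\sum_{(j,k)\in I}|c_{j,k}(t)|\,\|Y_j\|_1\|X_k\|_\infty.
\end{equation*}
Crucially, every factor on the right depends only on $\cD_A$; the dimension of $\cH_B$ has disappeared.

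Finally, I would bound each $|c_{j,k}(t)|$ by writing
\begin{equation*}
|c_{j,k}(t)|\;=\;e^{t\Re\lambda(j,k)}\,\frac{t^{m(j,k)}}{m(j,k)!}\;\leq\;e^{-at/2}\cdot\Bigl(e^{-at/2}\,\frac{t^{m(j,k)}}{m(j,k)!}\Bigr),
\end{equation*}
and observing that the function $t\mapsto e^{-at/2}t^m/m!$ is bounded on $[0,\infty)$ by some finite constant depending only on $a$ and $m$. Taking the maximum over the finitely many indices in $I$ yields a finite constant $C$ depending only on $\cD_A$ such that $\|e^{t\cD}\cQ\|_{1\to 1}\leq Ce^{-at/2}$, proving \eqref{GENINVTNBLM1}. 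The second assertion then follows immediately: the integrand in $\cS=-\int_0^\infty e^{t\cD}\cQ\,{\rm d}t$ is majorized in super-operator trace norm by $Ce^{-at/2}$, so $\cS$ is well-defined as a Bochner integral and $\|\cS\|_{1\to 1}\leq\int_0^\infty Ce^{-at/2}{\rm d}t=2C/a$.

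The only real obstacle is handling the possibly non-diagonalizable case, which produces the polynomial prefactors $t^{m(j,k)}$; this is absorbed cleanly by splitting off the factor $e^{-at/2}$ as above. Everything else is routine.
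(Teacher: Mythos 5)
Your proposal is correct and follows essentially the same route as the paper: Jordan decomposition of $\cD_A$, reduction to rank-one super-operators $|Y\rangle\langle X|\otimes\one_B$ controlled by Lemma~\ref{SOPOPNOLEM}, absorption of the polynomial prefactors from non-trivial Jordan blocks by giving up half the gap, and integration for the bound on $\cS$. The only cosmetic difference is that the paper keeps the terms grouped as $|e^{t\cD_A}Y_j\rangle\langle X_j|$ rather than expanding fully into scalar coefficients $c_{j,k}(t)$, which changes nothing of substance.
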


\begin{proof}Let $\{Y_0,\dots, Y_{n_A}\}$ be a Jordan basis of $\cH$ for $\cD_A$ in which $Y_0 = \pi_A$.  Let $\lambda_j$ be the eigenvalue of
$\cD_A$ such that $Y_j$ is a generalized eigenvector of $\cD_A$ with this eigenvalue.  Let 
$\{X_0,\dots,X_{n_A}\}$ be the dual basis, so that in particular, $X_0 = \one$.  
For each $j$, there is a natural number $d_j \leq n_A$ such that $(\cD_A -\lambda_j)^{d_j}Y_j =0$ but  $(\cD_A -\lambda_j)^{d_j-1}Y_j \neq 0$. Then
${\displaystyle e^{t\cD_A}Y_j =  e^{t\lambda}e^{t(\cD_A - \lambda)}Y_j = e^{t\lambda}\sum_{k=0}^{d_j-1} \frac{t^k}{k!}Y_{j+k}}$.  It follows that for some finite constant $C$, 
\begin{equation}\label{PROPBND}
\|e^{t\cD_A}Y_j \|_1\|X_j\|_\infty\leq \frac{C}{n_A} e^{-at/2}\ .
\end{equation} 
By the triangle inequality
\begin{equation}\label{PROPBND2}
\|e^{t\cD_A}\cQ_A\|_{1\to 1} \leq C^{-at/2}\ .
\end{equation}

Since
${\displaystyle e^{t\cD}\cQ = \sum_{j=1}^{n_a}  |e^{t\cD_A}Y_j \rangle\langle X_j| \otimes \one}$, it follows from \eqref{PROPBND}, 
the triangle inequality, and Lemma~\ref{SOPOPNOLEM}  that
$$
\|e^{t\cD}\cQ\|_{1\to 1} \leq  \sum_{j=1}^{n_a}  \|e^{t\cD_A}Y_j \|_1\|X_j\|_\infty\ .
$$
Now \eqref{GENINVTNBLM1} follows from this and \eqref{PROPBND}. The second part of the theorem is an immediate consequence of the first. 
\end{proof}

\begin{rem}\label{GAPMIX} It is easy to see that for all $0 < \epsilon < 1$, one may replace 
$Ce^{-ta/2}$ with $C_\epsilon e^{-t(1-\epsilon) a}$ in \eqref{PROPBND},
but then  the constant $C_\epsilon$ will diverge 
as $\epsilon \to 0$ if $\cD_A$ is not diagonalizable.  However, if
$\cD_A$ is diagonalizable, the factor of $\tfrac12$ is superfluous. 

In any case, by \eqref{JFGAPPED1}, $\cQ = \one - |\pi_A\rangle\langle \one|$, and then by \eqref{PROPBND2}, for all density matrices 
$\rho_0$ on $\cH_A$, 
\begin{equation}\label{PROPBND3}
\|e^{t\cD_A}\rho_0 - \pi_A\|_1 \leq Ce^{-ta/2}\ ,
\end{equation}
so that \eqref{PROPBND2} gives a uniform bound on the rate of approach to stationarity. Note however that since $\|e^{t\cD_A}\rho_0 - \pi_A\|_1\leq$ for all choices of $\rho_0$ and all $t$, if $C$ is large, the bound \eqref{PROPBND3} will be trivial for $t \leq \frac{2}{a}\log(2C)$. 
\end{rem}

\section{Volterra integral operators and the comparison of evolution equations}\label{VOLTERRA}

In what follows, we wish to compare various evolution mechanisms for density matrices and for super-operators. A powerful method for doing this involves writing the equations governing the evolution mechanisms as 
Volterra integral equations.  The elementary Theorem~\ref{VOLTIWTHM} stated below will be applied a number of times. We give the short proof for convenience.

In this section $\cX$ denote an arbitrary Banach space with norm $\|\cdot\|$. In the application made here his will always be the space of operators on a finite dimensional Hilbert space $\cH$ 
equipped with the trace norm $\|\cdot \|_1$, or else the space of super-operators on $\cH$ equipped with the super-operator trace norm $\|\cdot \|_{1\to1}$.     

Let $\cT>0$ be given and define
$\sC$  to be the space of continuous functions $A:[0,T]\to \cX$ equipped with the norm 
 ${\displaystyle |\!|\!| A|\!|\!| := \sup_{\tau\in [0,T]}\|A(\tau)\|}$.
 For an operator $\sH$ on $\sC$, define
 \begin{equation*}
 \|\sH\|_{\sC \to \sC} = \sup\{ |\!|\!| \sH A|\!|\!| \ :\ |\!|\!| A|\!|\!| \leq 1\}\ .
 \end{equation*}
 The operator $\sH$ is a {\em Volterra integral operator} on $\sC$ in case it has the form
\begin{equation}\label{VOLTINTOP}
\sH Y(\tau) = \int_0^\tau \cG(\tau,\sigma)Y(\sigma){\rm d}\sigma
 \end{equation}
 where $\cG(\tau,\sigma)$  is a continuous function of $0 \leq \sigma \leq \tau \leq T$ with values in the space of operators on $\widehat{\cX}$.
 Define $\cG(\tau,\sigma)\|_{\cX \to \cX} := \sup\{ \|\cG(\tau,\sigma)\|\ :\ \|X\| \leq 1\}$ and then
 \begin{equation}\label{VOLTINTOP2}
 G := \sup_{0 \leq \sigma \leq \tau \leq T}\|\cG(\tau,\sigma)\|_{\cX \to \cX}\ ,
  \end{equation}
  and then  $\|\sH\|_{\sC \to \sC}  \leq TG$.   
  
  We will work with perturbations of Volterra integral operators. 
  Let let  $\widetilde{\sH}$ be another bounded Volterra integral operator defined in terms of $\widetilde{\cG}(\tau,\sigma)$ as in 
 \eqref{VOLTINTOP}. Then $\sH - \widetilde{\sH}$ is a Volterra integral operator  and
 ${\displaystyle
 (\sH - \widetilde{\sH})Y(\tau) = \int_0^\tau (\cG(\tau,\sigma)- \widetilde{\cG}(\tau,\sigma))Y(\sigma){\rm d}\sigma}$.
 Therefore,
 \begin{equation*}
 \|\sH - \widetilde{\sH}\|_{\sC \to \sC} \leq \sup_{0 \leq \tau \leq T}\left\{ \ \int_0^\tau  \|\cG(\tau,\sigma)- \widetilde{\cG}(\tau,\sigma)\|_{\cX \to\cX}{\rm d}\sigma\ \right\}\ .
 \end{equation*}

 For given   $Z\in \sC$ and Volterra integral operator $\sH$, the equation
  \begin{equation}\label{DAVLIM312}
 Y(\tau) =  \sH Y(\tau) + Z(\tau)\ . 
 \end{equation}
 is a Volterra integral equation of the second kind, and it may be solved by iteration. The following theorem collects the elementary results we need. 
 
 \begin{thm}\label{VOLTIWTHM} For all bounded Volterra integral operators $\sH$ on $\sC$, and all $Z\in \sC$, the equation \eqref{DAVLIM312} has a unique solution $Y\in \sC$ which is given by
 $\displaystyle{
Y(\tau) :=  \sum_{n=0}^\infty \sH^n Z(\tau)}$
 where the sum converges in norm in $\sC$. Moreover, let  $\widetilde{\sH}$ be another bounded Volterra integral operator defined in terms of $\widetilde{\cG}(\tau,\sigma)$ as in 
 \eqref{VOLTINTOP}. Let $C:= \max\{G,\widetilde{G}\}$ where $G$ and $\widetilde{G}$ are given by  \eqref{VOLTINTOP2}. Let  $\widetilde{Z}$ be an element of $\sC$, 
 and define $\widetilde{Y}$ to be the solution of 
 \eqref{DAVLIM312} with $\widetilde{\sH}$ in place of $\sH$, and $\widetilde{Z}$ in place of $Z$. Then
  \begin{equation}\label{VOLTIWTHM2} 
  \|Y -\widetilde{Y}\|_\sC \leq \frac{1}{TC}(e^{2TC}-1)\|\sH - \widetilde{\sH}\|_{\sC \to \sC}\min\{\|Z\|_{\sC},\|\widetilde{Z}\|_{\sC}\}  + e^{2TC}\|Z - \widetilde{Z}\|_\sC\ .
  \end{equation}
 \end{thm}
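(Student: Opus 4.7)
The proof splits cleanly into two parts: existence, uniqueness, and the Neumann-series representation for a single Volterra equation (Part (i)); and the comparison estimate \eqref{VOLTIWTHM2}, derived by viewing $Y - \widetilde Y$ as itself satisfying a Volterra equation of the second kind (Part (ii)).

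For Part (i) the engine is the inductive pointwise bound
\[
\|\sH^n Z(\tau)\| \;\leq\; \frac{(G\tau)^n}{n!}\,\|Z\|_\sC,
\]
which I would establish by induction on $n$ using only the definition \eqref{VOLTINTOP} together with the elementary Volterra estimate $\|\sH A(\tau)\| \leq G \int_0^\tau \|A(\sigma)\|\,d\sigma$. Summing over $n$ yields absolute convergence of $\sum_n \sH^n Z$ in $\sC$ with $\sC$-norm at most $e^{GT}\|Z\|_\sC$, and a telescoping rearrangement of the partial sums verifies that $Y := \sum_n \sH^n Z$ satisfies $Y = \sH Y + Z$. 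For uniqueness, any difference of two solutions is a fixed point of $\sH$ and must vanish by iterating the same pointwise bound.

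For Part (ii), subtracting $\widetilde Y = \widetilde\sH \widetilde Y + \widetilde Z$ from $Y = \sH Y + Z$ and isolating $\widetilde\sH(Y - \widetilde Y)$ gives the key algebraic identity
\[
Y - \widetilde Y \;=\; \widetilde\sH(Y - \widetilde Y) \;+\; (\sH - \widetilde\sH)Y \;+\; (Z - \widetilde Z),
\]
so $Y - \widetilde Y$ solves a Volterra equation of the second kind with operator $\widetilde\sH$ and source $(\sH - \widetilde\sH)Y + (Z - \widetilde Z)$. Applying Part (i) produces the Neumann representation $Y - \widetilde Y = \sum_{n\geq 0} \widetilde\sH^n[(\sH - \widetilde\sH)Y + (Z - \widetilde Z)]$. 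Passing to $\sC$-norms, using $\|\widetilde\sH^n\|_{\sC\to\sC} \leq (\widetilde G T)^n/n!$, and substituting $\|Y\|_\sC \leq e^{GT}\|Z\|_\sC$ from Part (i) gives, with $C = \max\{G,\widetilde G\}$, an estimate of the shape $e^{2CT}\|\sH - \widetilde\sH\|_{\sC\to\sC}\|Z\|_\sC + e^{CT}\|Z - \widetilde Z\|_\sC$. The $\min\{\|Z\|_\sC, \|\widetilde Z\|_\sC\}$ is then obtained by repeating the derivation after rewriting the difference as $Y - \widetilde Y = \sH(Y - \widetilde Y) - (\sH - \widetilde\sH)\widetilde Y + (Z - \widetilde Z)$, which yields a parallel bound with $\|\widetilde Z\|_\sC$ in place of $\|Z\|_\sC$; taking the smaller bound gives the statement.

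The one delicate point is matching the stated constants exactly: the crude Neumann-series bookkeeping above yields $e^{2CT}$ on the perturbation piece rather than $\frac{1}{TC}(e^{2TC}-1)$. To sharpen it, I would treat $\widetilde\sH^n(\sH - \widetilde\sH)$ as a single Volterra integral operator with an $(n+1)$-fold iterated kernel and exchange the summation in $n$ with the outermost integration; the extra integration contributes an additional factor $\tau/(n+1)$ relative to the crude bound, which upon summation produces $\sum_{n\geq 0}(2CT)^n/(n+1)! = (e^{2CT}-1)/(2CT)$ and hence the asserted constant $(e^{2CT}-1)/(TC)$. I expect this refined accounting of the iterated Volterra kernels to be the main (and only mild) obstacle; everything else reduces to the Neumann series construction of Part (i) and the triangle inequality.
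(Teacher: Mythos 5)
Your Part (i) is exactly the paper's argument (iterate the equation, bound $\|\sH^n\|_{\sC\to\sC}\leq (TG)^n/n!$ via the iterated simplex, sum the Neumann series), and no issues there. In Part (ii) you take a genuinely different route: you observe that $Y-\widetilde Y$ itself solves a Volterra equation with operator $\widetilde\sH$ and source $(\sH-\widetilde\sH)Y+(Z-\widetilde Z)$, and feed this back into Part (i). The paper instead expands both solutions as Neumann series and compares them termwise via the telescoping identity
$\sH^n-\widetilde\sH^n=\sum_{j}\sH^{\,n-1-j}(\sH-\widetilde\sH)\widetilde\sH^{\,j}$,
bounding each summand by $\frac{(TC)^{n-1}}{j!\,(n-1-j)!}\|\sH-\widetilde\sH\|_{\sC\to\sC}$ and using the binomial convolution $\sum_j\binom{n-1}{j}=2^{n-1}$; this is where both the factor $2$ in the exponent and the prefactor $\frac{1}{TC}(e^{2TC}-1)$ come from, working entirely at the level of $\sC\to\sC$ operator norms. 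Your route is cleaner and yields $e^{2CT}\|\sH-\widetilde\sH\|_{\sC\to\sC}\min\{\|Z\|_\sC,\|\widetilde Z\|_\sC\}+e^{CT}\|Z-\widetilde Z\|_\sC$, which is qualitatively of the same form and is entirely sufficient for every application of this theorem in the paper; but since $e^{2CT}$ and $\frac{1}{TC}(e^{2TC}-1)$ are not comparable (the former is larger once $TC\gtrsim 1$), it does not literally establish \eqref{VOLTIWTHM2}.

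The genuine gap is in your proposed repair of the constant. To gain the extra factor $\tau/(n+1)$ from treating $\widetilde\sH^{\,n}(\sH-\widetilde\sH)$ as an $(n+1)$-fold iterated kernel, you need the \emph{pointwise} bound $\|(\sH-\widetilde\sH)Y(\sigma)\|\leq \sigma\,D\,\|Y\|_\sC$ with $D:=\sup_{\sigma\leq\tau}\|\cG(\tau,\sigma)-\widetilde\cG(\tau,\sigma)\|_{\cX\to\cX}$. That bound is controlled by the kernel sup-norm $D$, not by the operator norm $\|\sH-\widetilde\sH\|_{\sC\to\sC}$ that appears in \eqref{VOLTIWTHM2}; by \eqref{VOLTPEREST} one only has $\|\sH-\widetilde\sH\|_{\sC\to\sC}\leq TD$, and the operator norm can be much smaller than $TD$ (e.g.\ for oscillatory kernel differences), so the refined sum produces a bound in terms of $TD$ that cannot be converted back. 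Moreover, in your setup only $\widetilde\sH$ is iterated, so the natural rate in the sum is $CT$, not $2CT$; the series $\sum_n (2CT)^n/(n+1)!$ you invoke is reverse-engineered from the target rather than derived. The clean way to get the stated constant is the paper's telescoping argument, which never descends to the kernel level.
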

 
 \begin{proof}
 Iteration of \eqref{DAVLIM312} yields, for each $N\in \N$, 
   \begin{equation}\label{VOLTINTOP5}
Y(\tau) = \sH ^{N+1} Y(\tau)  +  \sum_{n=0}^N \sH ^n Z(\tau)
  \end{equation} 

Note that
${\displaystyle \sH^2Y(\tau) = \int_0^\tau \int_0^{\sigma_2} \cG(\tau,\sigma_2)\cG(\sigma_2,\sigma_1)Y(\sigma_1){\rm d}\sigma_2{\rm d}\sigma_1}$,
and hence
${\displaystyle  \|\sH^2\|_{\sC \to \sC}  \leq \frac{T^2}{2}G^2}$. \hfill\break
Likewise, one finds that for all $n\in \N$, $\|\sH^n\|_{\sC \to \sC}  \leq \frac{T^n}{n!}G^n$. Likewise,
$\|\widetilde{\sH}^n\|_{\sC \to \sC}  \leq \frac{T^n}{n!}\widetilde{G}^n$. Therefore,
 \begin{equation*}
 \max\{\ \|\sH^n\|_{\sC \to \sC}\ ,\  \|\widetilde{\sH}^n\|_{\sC \to \sC} \ \}  \leq \frac{T^n}{n!}C^n\ ,
  \end{equation*}
and ${\displaystyle 
Y(\tau) :=  \sum_{n=0}^\infty \sH^n Z(\tau)}$
  is well-defined since the sum converges $\sC$. One readily checks that it satisfies \eqref{DAVLIM312}, and by \eqref{VOLTINTOP5} it is the unique solution.
  For the second part,
  $$
  Y(\tau)  -\widetilde{Y}(\tau) =  \sum_{n=1}^\infty (\sH^n -\widetilde{\sH}^n) Z(\tau) + \sum_{n=0}^\infty \widetilde{\sH}^n (Z(\tau) - \widetilde{Z}(\tau))\ .
  $$
  By the telescoping sum formula,
${\displaystyle
(\sH^{(\gamma)})^n - (\widetilde{\sH}^{(\gamma)})^n = \sum_{j=1}^{n-1} (\sH^{(\gamma)})^{n-1-j} (\sH^{(\gamma)} - \widetilde{\sH}^{(\gamma)}) (\widetilde{\sH}^{(\gamma)})^j}$
and therefore
\begin{eqnarray*}
\| (\sH^{(\gamma)})^n - (\widetilde{\sH}^{(\gamma)})^n\|_{\sC \to \sC} &\leq&
\sum_{j=1}^{n-1} \|\sH^{(\gamma)}\|_{\sC \to \sC} ^{n-1-j} \|\sH^{(\gamma)} - \widetilde{\sH}^{(\gamma)}\|_{\sC \to \sC}  \|\widetilde{\sH}^{(\gamma)}\|^j_{\sC \to \sC} \\
&\leq& \sum_{j=1}^{n-1}\frac{1}{j!(n-j)!}  \left(T C\right)^{n-1} \|\sH^{(\gamma)} - \widetilde{\sH}^{(\gamma)}\|_{\sC \to \sC} \\
&\leq& \frac{2^n}{n!}  \left(T C\right)^{n-1} \|\sH^{(\gamma)} - \widetilde{\sH}^{(\gamma)}\|_{\sC \to \sC}\ .
\end{eqnarray*}Therefore,
$$
\|Y -\widetilde{Y}\|_\sC \leq \frac{1}{TC}(e^{2TC}-1)\|\sH^{(\gamma)} - \widetilde{\sH}^{(\gamma)}\|_{\sC \to \sC}\|Z\|_{\sC}\|\widetilde{Z}  + e^{2TC}\|Z - \widetilde{Z}\|_\sC\ .
$$
By the symmetry in $Y$ and $\widetilde{Y}$, we have the same bound with $Z$ and $\widetilde{Z}$ swapped, and this proves \eqref{VOLTIWTHM2} .
 \end{proof}
 
 \section{Mixing times}\label{MTAPP}
 
 Let $\cL$ be a Lindblad generator acting on $\widehat{\cH}$, $\cH$ a finite dimensional Hilbert space.  By  \eqref{PROPBND3},  and the definitions \eqref{TVDDEF} and \eqref{MIXTIMEDEF}, if $\cL$ is ergodic and gapped with unique steady state $\pi$, $t_{{\rm mix}}(\cL,\epsilon) < \infty$ for all $0 < \epsilon < \tfrac12$.  The next lemma provides a converse.

\begin{lem}\label{POSLEM} Let $\cL$ be a Lindblad generator acting on $\widehat{\cH}$. Assume only that  $t_{{\rm mix}}(\cL,\epsilon) <\infty$ for some $0 <\epsilon< \tfrac12$.
Then $\cL$ has a unique steady state $\pi$, and  ${\rm ker}(\cL)$ is spanned by $\pi$. Moreover, 
$\cL$ has no purely imaginary eigenvalues so that in our finite dimensional setting  $\cL$ is ergodic and gapped.  In particular, if $\cL\rho =0$ and $\tr[\rho] =1$, then $\rho = \pi$. 
\end{lem}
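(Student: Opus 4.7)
The plan is to exploit one simple construction everywhere: if $X$ is a nonzero traceless Hermitian operator with $e^{t\cL}X$ not decaying, then its Jordan decomposition $X = X_+ - X_-$ gives two density matrices with \emph{orthogonal} supports that do not get closer to one another under $e^{t\cL}$, contradicting the finite mixing time.

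First I would establish existence of at least one steady state by a Cesaro average. In finite dimensions, for any density matrix $\rho_0$ the orbit $\{e^{t\cL}\rho_0\}$ is bounded (since $e^{t\cL}$ is CPTP hence a trace-norm contraction), so standard linear-algebra arguments show that $\pi := \lim_{T\to\infty}T^{-1}\int_0^T e^{t\cL}\rho_0\,\mathrm{d}t$ exists, is a density matrix, and satisfies $\cL\pi = 0$.

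Next I would prove $\ker(\cL)$ is spanned by $\pi$. Since Lindblad generators map Hermitian to Hermitian, it suffices to consider Hermitian $Y\in\ker(\cL)$. If $\tr Y\neq 0$, subtracting $(\tr Y)\pi$ reduces to the traceless Hermitian case. So suppose $Y$ is a nonzero traceless Hermitian element of $\ker(\cL)$. Write $Y = Y_+ - Y_-$ with $Y_\pm\geq 0$ having orthogonal supports, and let $c:=\tr Y_+ = \tr Y_-$. Then $\rho_\pm := Y_\pm/c$ are density matrices with $d_{\rm TV}(\rho_+,\rho_-)=1$. Since $\cL Y = 0$ gives $e^{t\cL}(\rho_+-\rho_-) = Y/c = \rho_+-\rho_-$ for all $t$, we have $d_{\rm TV}(e^{t\cL}\rho_+,e^{t\cL}\rho_-) = 1$ for all $t$, contradicting $t_{\rm mix}(\cL,\epsilon)<\infty$ for any $\epsilon<\tfrac12$. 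Hence no such $Y$ exists, so $\ker(\cL)=\spa\{\pi\}$, which also yields uniqueness: if $\cL\rho=0$ and $\tr\rho=1$, then $\rho=c\pi$ with $c=\tr\rho=1$, so $\rho=\pi$.

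For the absence of purely imaginary eigenvalues, suppose $\cL X = i\omega X$ with $\omega\in\R\setminus\{0\}$ and $X\neq 0$. Decompose $X = X_1 + iX_2$ into Hermitian parts; matching Hermitian and anti-Hermitian components (using that $\cL$ preserves Hermiticity) yields $\cL X_1 = -\omega X_2$, $\cL X_2 = \omega X_1$, and trace-preservation of $\cL$ forces $\tr X_1 = \tr X_2 = 0$. WLOG $X_1\neq 0$. Solving the resulting linear system gives $e^{t\cL}X_1 = \cos(\omega t)X_1 - \sin(\omega t)X_2$. Running the same Jordan-decomposition trick on $X_1$ produces orthogonally supported density matrices $\rho_\pm$ with $e^{t\cL}(\rho_+-\rho_-) = (\cos(\omega t)X_1 - \sin(\omega t)X_2)/c$; at $t_n = 2\pi n/\omega$ this equals $\rho_+-\rho_-$, so $d_{\rm TV}$ returns to $1$ along an unbounded sequence of times, again contradicting finite mixing time. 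Combined with the fact that eigenvalues of a Lindblad generator always have non-positive real part (since $e^{t\cL}$ is trace-norm contractive), this shows every nonzero eigenvalue has strictly negative real part, which in our finite-dimensional setting is precisely the spectral gap.

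The only mildly delicate point is keeping Hermiticity straight in Step~3 --- one must verify that Lindblad generators preserve Hermiticity (standard, from $\cL^\dagger$ being unital and the Lindblad form), so the Jordan decomposition of real/imaginary parts is available. Everything else is routine; the argument never needs to let $\epsilon\to 0$ in the mixing-time definition, because the Jordan-decomposition trick produces $d_{\rm TV}=1$ all at once, which already exceeds any $\epsilon<\tfrac12$.
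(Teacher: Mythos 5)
Your proposal is correct and follows essentially the same route as the paper: both reduce to taking a traceless Hermitian element that is fixed (or periodic, via the Hermitian part of a purely imaginary eigenvector) under $e^{t\cL}$, splitting it into orthogonally supported density matrices via the Jordan decomposition, and observing that their total variation distance stays at $1$ along an unbounded set of times, contradicting $t_{\rm mix}(\cL,\epsilon)<\infty$. The only cosmetic differences are your explicit Ces\`aro construction of $\pi$ (the paper simply invokes the standard existence of a steady state) and your trigonometric form of $e^{t\cL}X_1$, which is equivalent to the paper's use of $A=\tfrac12(X+X^\dagger)$ and the period $2\pi/\omega$.
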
 

\begin{proof}   Every Lindblad generator acting on $\widehat{\cH}$ has at least one steady state, and since $t_{{\rm mix}}(\cL,\epsilon) <\infty$, there is a unique steady state $\pi$. 
Since  for all $X$,  $\cL(X^\dagger) = \cL(X)^\dagger$, if $X\in {\rm ker}(\cL)$ then $X^\dagger \in {\rm ker}(\cL)$. Hence if $\pi$ does not span ${\rm ker}(\cL)$ there would be a self-adjoint $X\in{\rm ker}(\cL)$
that is linearly independent of $\pi$.  Subtracting an appropriate multiple of $\pi$ from $X$ and normalizing, we may assume that $\tr[X] =0$ but $\|X\|_1 = 2$. Let $X= X_+ -X_-$ be the spectral decomposition of 
$X$ into its positive and negative parts. Then $X_+$ and $X_-$ are both density matrices. Since $\cL X =0$, for all $t>0$, 
$$
X = e^{t \cL} X =   e^{t \cL} X_+ - e^{t \cL} X_-\ .
$$
Therefore, $2 = \|X\|_1 = \| e^{t \cL} X_+ - e^{t \cL} X_-\|_1 = 2d_{{\rm TV}}(e^{t \cL} X_+ ,e^{t \cL}X_-)$. By \eqref{MIXTIMEDEF}, for $t> t_{{\rm mix}}(\cL,\epsilon)$, $d_{{\rm TV}}(e^{t \cL} X_+ ,e^{t \cL}X_-) < \epsilon$.  This contradiction proves that 
$\pi$  span ${\rm ker}(\cL)$.

Likewise, suppose that $X\in \widehat{\cH}$ satisfies $\cL X = i\omega X$ for some non-zero real $\omega$ and that $X \neq 0$. Since   $\cL(X^\dagger) = \cL(X)^\dagger$, $\cL X^\dagger  = -i\omega X^\dagger$.
Since $\tr[\cL X] =0$, $\tr[X]=0$.
Define $A = \frac12(X+ X^\dagger)$, and note that $A\neq 0$ since $\omega \neq 0$, but $\tr[A] =0$.  Let $t := 2\pi \omega$. Then $e^{t\cL A} = A$. Rescaling $A$, we may assume $\|A\|_1 =2$. Then  
$2 = \|A\|_1 = \|e^{kt\cL} A\|_1$ and exactly as in
the first part of the proof, $\lim_{k\to\infty}\|e^{kt\cL} A\|_1= 0$. This contradiction precludes the existence of any purely imaginary eigenvalues. 
\end{proof}

We next show that when $t_{{\rm mix}}(\cL,\epsilon) < \infty$, so that there is a unique stationary state $\pi$, $d_{{\rm TV}}(e^{kt_{{\rm mix}}(\cL)})$ decreases to zero exponentially fast in $k$. 
For $t>0$, let $P_t := e^{t\cL}$. In preparation for this, define the super-operator $\Pi$ by $\Pi X :=  \tr[X]\pi $. Then $\Pi$ is a CPTP map. It easy to write down an explicit Kraus representation showing this, but also 
 $\Pi = \lim_{t\to\infty}P_t$, and each $P_t$ is CPTP. A super-operator $\cT$ is said to be {\em Hermitian} in case $\cT(X)^\dagger = \cT(X^\dagger)$ for all 
 $X\in \widehat{\cH}$. Since all CPTP operations are Hermitian (see e.g. \cite[Lemma 5.2]{C25}), and hence $P_t - \Pi$ is Hermitian.

 In this appendix, let $\cX$ denote the space of {\em self-adjoint} operators on $\cH$ equipped  with the trace  norm, and let $\widehat{\cX}$  denote the space of {\em Hermitian} 
 super-operators on $\cH$ equipped  with the norm
 \begin{equation}\label{MIXTIMEAPP3}
 |\!|\!| \cT  |\!|\!| := \sup\{ \|\cT(X)\|_1\ :\ \|X\|_1 = 1 \quad{\rm and}\quad X = X^\dagger\}\ .
 \end{equation}
 This is closely related to the super-operator trace norm $\|\cdot\|_{1\to 1}$, but may be smaller because of the restriction that $X$ must be self-adjoint. 
 
 \begin{lem}\label{MTNORM} For all $t > 0$,
  \begin{equation}\label{MIXTIMEAPP4}
  |\!|\!| P_t - \Pi   |\!|\!| = \sup\{ \| P_t\rho -\pi\|_1 \ :\  \rho\ {\rm a \ density\ matrix\ on}\ \cH\}\ .
  \end{equation}
 \end{lem}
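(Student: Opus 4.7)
My plan is to prove the two inequalities separately, using nothing more than the self-adjoint Jordan decomposition of trace class operators together with the fact that $\Pi\rho = \pi$ for every density matrix $\rho$.

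For the inequality $|\!|\!| P_t - \Pi |\!|\!| \geq \sup\{ \| P_t\rho - \pi\|_1 \ :\  \rho \text{ a density matrix}\}$, I would note that every density matrix is self-adjoint with $\|\rho\|_1 = 1$, and $\Pi \rho = \tr[\rho]\pi = \pi$, so $(P_t - \Pi)\rho = P_t \rho - \pi$. Since density matrices are admissible competitors in \eqref{MIXTIMEAPP3}, the claim is immediate.

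For the reverse inequality, I would take any self-adjoint $X$ with $\|X\|_1 = 1$, and use the Jordan decomposition $X = X_+ - X_-$ with $X_\pm \geq 0$ having mutually orthogonal supports. Setting $\lambda_\pm := \tr[X_\pm]$, one has $\lambda_+ + \lambda_- = \|X\|_1 = 1$. In the generic case $\lambda_\pm > 0$, define the density matrices $\rho_\pm := \lambda_\pm^{-1} X_\pm$. Using linearity of $P_t - \Pi$ and the identity $(P_t - \Pi)\rho_\pm = P_t\rho_\pm - \pi$,
\begin{equation*}
\| (P_t - \Pi) X\|_1 = \| \lambda_+ (P_t\rho_+ - \pi) - \lambda_-(P_t\rho_- - \pi)\|_1 \leq (\lambda_+ + \lambda_-) \sup_{\rho} \| P_t\rho - \pi\|_1
\end{equation*}
by the triangle inequality, and the right side is exactly $\sup_\rho \| P_t\rho - \pi\|_1$. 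In the degenerate cases $\lambda_- = 0$ or $\lambda_+ = 0$, $X$ (or $-X$) is itself a density matrix, and the bound holds trivially since $\|(P_t-\Pi)X\|_1 = \|(P_t-\Pi)(-X)\|_1$.

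Taking the supremum over admissible $X$ on the left then yields $|\!|\!| P_t - \Pi |\!|\!| \leq \sup_\rho \|P_t\rho - \pi\|_1$, completing the proof. No step here looks delicate; the only thing to be a little careful about is handling the boundary cases $\lambda_\pm = 0$ of the Jordan decomposition, but these collapse to trivial instances of the bound.
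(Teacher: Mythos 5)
Your proof is correct and follows essentially the same route as the paper: the easy inequality comes from density matrices being admissible competitors in the super-operator norm, and the reverse inequality from the Jordan decomposition $X = X_+ - X_-$ together with the triangle inequality and $\tr[X_+]+\tr[X_-]=\|X\|_1=1$. Your explicit treatment of the degenerate cases $\lambda_\pm=0$ is a small tidy addition that the paper's proof glosses over.
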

 
 \begin{proof} For any density matrix $\rho$, since $\| P_t\rho -\pi\|_1 = \|(P_i - \Pi)\rho\|_1$ and since  $\rho= \rho^\dagger$ and $\|\rho\|_1 =1$, $\|(P_t - \Pi)\rho\|_1 \leq |\!|\!| P_t - \Pi  |\!|\!|$.  This proves that the left side of 
 \eqref{MIXTIMEAPP4} is at least as large as the right side. 
 
 Next, in our finite dimensional setting the supremum in \eqref{MIXTIMEAPP3} is a maximum. Let $X$ be such that $X= X^\dagger$, $\|X\|_1 = 1$ and
 $ |\!|\!| P_t - \Pi   |\!|\!| =\|(P_t- \Pi)X\|_1$.  Let $X = X_+ - X_-$ be the decomposition of $X$ into its positive and negative parts, and define $0 \leq \lambda\leq 1$ by $\lambda := \tr[X_+]$
 Then $\mu := \lambda^{-1}X_+$ and $\nu := (1-\lambda)^{-1}X_-$ are density matrices and $X = \lambda \mu - (1-\lambda)\nu$. Therefore,
 $$
 \|(P_t- \Pi)X\|_1 \leq \lambda \|(P_t- \Pi)\mu\|_1 + (1-\lambda) \|(P_t- \Pi)\nu\|_1 \leq \sup\{ \| P_t\rho -\pi\|_1 \ :\  \rho\ {\rm a \ density\ matrix\ on}\ \cH\}\ .
 $$
 This proves that the right side of 
 \eqref{MIXTIMEAPP4} is at least as large as the left side. 
 \end{proof}

 \begin{thm}\label{TMIXK} Let $\rho_0,\rho_1$ be any two density matrices on $\cH$. For all $0 < \epsilon < \tfrac12$, and all $t \geq t_{{\rm mix}}(\cL,\epsilon)$, and all positive integers $k$, 
 \begin{equation}\label{TMIXK1}
 d_{{\rm TV}}(P_{kt}\rho_0,P_{kt}\rho_1) \leq 2(2\epsilon)^{k} \ .
 \end{equation}
 \end{thm}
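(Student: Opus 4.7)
The plan is to convert the mixing time bound into a norm bound on the Hermitian super-operator $P_t - \Pi$, use the algebraic identity $(P_t - \Pi)^k = P_{kt} - \Pi$ so that $k$-fold mixing corresponds to the $k$-th power of a small operator, and then apply submultiplicativity of the norm $|\!|\!|\cdot|\!|\!|$ introduced before Lemma~\ref{MTNORM}.

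More concretely, first I would observe that since $t \geq t_{{\rm mix}}(\cL,\epsilon)$ and $\pi$ is a density matrix, taking $\rho_1 = \pi$ in the definition of $t_{{\rm mix}}$ gives $\|P_t \rho - \pi\|_1 \leq 2\epsilon$ for every density matrix $\rho$, so Lemma~\ref{MTNORM} yields $|\!|\!| P_t - \Pi |\!|\!| \leq 2\epsilon$. Next I would verify the relation
\begin{equation*}
P_t \Pi = \Pi P_t = \Pi\ ,
\end{equation*}
which holds because $\Pi X = \tr[X]\pi$, $P_t \pi = \pi$, and $\tr$ is preserved under $P_t$. Expanding $(P_t - \Pi)^k$ and using $\Pi^2 = \Pi$ together with the above then telescopes to
\begin{equation*}
(P_t - \Pi)^k = P_{kt} - \Pi\ .
\end{equation*}

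The key submultiplicativity step is: if $\cT_1,\cT_2$ are Hermitian super-operators, then for self-adjoint $X$ with $\|X\|_1=1$, $\cT_2 X$ is self-adjoint, so $\|\cT_1 \cT_2 X\|_1 \leq |\!|\!| \cT_1|\!|\!|\, \|\cT_2 X\|_1 \leq |\!|\!|\cT_1|\!|\!|\,|\!|\!|\cT_2|\!|\!|$. Since $P_{kt} - \Pi$ is a difference of CPTP (hence Hermitian) maps and is itself Hermitian, iterating gives
\begin{equation*}
|\!|\!| P_{kt} - \Pi|\!|\!| = |\!|\!| (P_t - \Pi)^k|\!|\!| \leq |\!|\!| P_t - \Pi|\!|\!|^k \leq (2\epsilon)^k\ .
\end{equation*}

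Finally, for density matrices $\rho_0,\rho_1$, the difference $\rho_0 - \rho_1$ is self-adjoint with trace zero and $\|\rho_0-\rho_1\|_1 \leq 2$. Because $\Pi(\rho_0 - \rho_1) = \tr[\rho_0 - \rho_1]\pi = 0$, we have
\begin{equation*}
\|P_{kt}\rho_0 - P_{kt}\rho_1\|_1 = \|(P_{kt} - \Pi)(\rho_0 - \rho_1)\|_1 \leq |\!|\!| P_{kt}-\Pi|\!|\!|\,\|\rho_0-\rho_1\|_1 \leq 2(2\epsilon)^k\ ,
\end{equation*}
and dividing by $2$ gives the stated bound \eqref{TMIXK1}. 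The only subtlety is the submultiplicativity step, which requires that the operator being acted on remains self-adjoint at each stage so that the norm $|\!|\!|\cdot|\!|\!|$ can be applied — this is exactly where Hermiticity of $P_t - \Pi$ is used, and it is the reason the Hermitian norm \eqref{MIXTIMEAPP3} was introduced rather than the general super-operator trace norm.
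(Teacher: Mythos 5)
Your proof is correct and follows essentially the same route as the paper's: both convert the mixing-time hypothesis into $|\!|\!|P_t-\Pi|\!|\!|\leq 2\epsilon$ via Lemma~\ref{MTNORM}, use $P_t\Pi=\Pi P_t=\Pi$ to get $(P_t-\Pi)^k=P_{kt}-\Pi$, and then iterate the Hermitian-norm bound. The only (harmless) differences are that you make the submultiplicativity step for Hermitian super-operators explicit, and you apply $(P_{kt}-\Pi)$ to the traceless difference $\rho_0-\rho_1$ rather than to $\rho_0$ and $\rho_1$ separately as the paper does.
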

 
  \begin{proof} Let $\rho_0$ and $\rho_1$  be  arbitrary density matrices on $\cH$. For $j=0,1$,  $P_t\rho_j - \pi = (P_t - \Pi)\rho_j$.   Then
 \begin{equation}\label{MIXTIMEAPP5}
 \|P_t \rho_0 - P_t\rho_1\|_1 \leq  \|(P_t -\Pi)\rho_0 \|_1 + \|(P_t -\Pi)\rho_1 \|_1\ .
  \end{equation}
 
 For each $t>0$, since $P_t$ is trace preserving, $\Pi P_t = \Pi$. Since $P_t\pi = \pi$, $P_t\Pi =\Pi$. That is, 
 $P_t \Pi = \Pi P_t = \Pi$. Therefore, 
 by the Binomial Theorem, for any positive integer $k$,
 \begin{equation}\label{MIXTIMEAPP6}
 (P_t - \Pi)^k = \sum_{\ell=0}^k(-1)^\ell \binom{k}{\ell}P_t^{k-\ell}\Pi^\ell = P_{kt} - \Pi +  \Pi \sum_{\ell=0}^k(-1)^\ell \binom{k}{\ell} = P_{kt} - \Pi \ .
 \end{equation}
 Then by \eqref{MIXTIMEAPP4}, \eqref{MIXTIMEAPP5}, \eqref{MIXTIMEAPP6} and the triangle inequality,
 \begin{equation}\label{MIXTIMEAPP7}
 \|P_{kt} \rho_0 - P_{kt}\rho_1\|_1 \leq  \|(P_t -\Pi)^k\rho_0 \|_1 + \|(P_t -\Pi)^k\rho_1 \|_1 \leq 2|\!|\!| P_t - \Pi   |\!|\!|^k \ .
  \end{equation}
 
 Now suppose that $t > t_{{\rm mix}}(\cL,\epsilon)$.  By the definition \eqref{MIXTIMEDEF} of $t_{{\rm mix}}(\cL,\epsilon)$, for any density matrix $\rho$,  
 $$  \epsilon \geq d_{{\rm TV}}(P_t \rho, \pi) = \tfrac12 \|P_t\rho  - P_t\pi\|_1 = \tfrac12 \|P_t\rho  - \pi\|_1 \ .$$
Therefore by Lemma~\ref{MTNORM}, $|\!|\!| P_t - \Pi   |\!|\!| \leq 2\epsilon$. Now \eqref{TMIXK1} follows from \eqref{MIXTIMEAPP7}.
\end{proof}

 The $\epsilon$-mixing time of $\cL$ is often defined to be
 \begin{equation}\label{GAPMIX2}
\min\{ t>0 \ :\ \|P_t\rho_0 -\pi\|_1 < \epsilon \quad {\rm for\ all\ density\ matrices}\ \rho_0\ \}\ .
 \end{equation}
 By \eqref{PROPBND3} in Remark~\ref{GAPMIX}, 
 the quantity in \eqref{GAPMIX2} is finite whenver $\cL$ is ergodic and gapped. 
 Then, since for any two density matrices $\rho_0$, $\rho_1$,
 $$
\|P_t\rho_0 - P_t\rho_1\|_1 \leq  \|P_t\rho_0 - \pi_A\|_1 +
\|P_t\rho_1 - \pi_A\|_1
 $$
 $t_{{\rm mix}}(\cL,\epsilon) < \infty$ whenever $\cL$ is ergodic and gapped. 
 Moreover, by Theorem~\ref{MTNORM} the definition of mixing time used
 here differs from the one defined in terms of \eqref{GAPMIX2} by at
 most a factor of $2$. Our applications require a formulation
 that does not refer to a putative unique invariant measure. Many variants of the mixing time are discussed in the classical literature on Markov chains, and the classical analog of the one considered here occurs in the foundational paper \cite{A83} of Aldous

\section*{Ackowledgements}
E.C. presented a preliminary version of this work at the IPAM workshop {em Dynamics of Density Operators}, April 28- May 2, 2025, and thanks the participants, especially Di Fang and Marius Junge for stimulating discussions, and thanks IPAM for hospitality and a stimulating, productive environment.  D.A.H. was supported in part by NSF QLCI grant OMA-2120757.

\section*{Data availability statement} In this work, we did not generate or analyze any datasets.

\section*{Conflict of interest statement} The authors declare that there are no competing interests with regard to this work.

\bibliographystyle{unsrtnat}

\end{document}